\numberwithin{equation}{section}
\newtheorem{Theorem}{Theorem}[section]
\newtheorem{Lemma}[Theorem]{Lemma}
\newtheorem{Proposition}[Theorem]{Proposition}
 { \theoremstyle{definition}
\newtheorem{Remark}[Theorem]{Remark} }
\newcommand{\rrtimes}{{\rtimes}}
\newcommand\zz{\mathbb{Z}}
\newcommand\cc{\mathbb{C}}
\newcommand\ii{\mathrm{i}}
\newcommand\bbbone{{\mathchoice \mathrm{1\mskip-4mu l} \mathrm{1\mskip-4mu l}
\mathrm{1\mskip-4.5mu l} \mathrm{1\mskip-5mu l}}}
\newcommand\one{\bbbone}
\newcommand\sph{\mathrm{sph}}
\newcommand\loc{\mathrm{loc}}
\newcommand\hol{\mathrm{hol}}
\newcommand\Hol{\mathrm{Hol}}
\newcommand\heis{\mathrm{heis}}
\newcommand\fl{\mathrm{f\/l}}
\newcommand\cf{\mathrm{cf}}
\newcommand\Cf{\mathrm{Cf}}
\newcommand\sch{\mathrm{sch}}
\newcommand\Sch{\mathrm{Sch}}
\newcommand\SSch{\mathrm{SSch}}
\newcommand\sgn{\mathrm{sgn}}
\newcommand\id{\mathrm{id}}
\newcommand\e{\mathrm{e}}
\newcommand\rmd{\mathrm{d}}
\newcommand\fg{{\mathfrak g}}
\newcommand\cH{\mathcal{H}}
\newcommand\cC{\mathcal{C}}
\newcommand\cV{\mathcal{V}}
\newcommand\cF{\mathcal{F}}
\newcommand\cS{\mathcal{S}}
\newcommand\cA{\mathcal{A}}
\newcommand\cK{\mathcal{K}}
\newcommand\cL{\mathcal{L}}
\newcommand\cY{\mathcal{Y}}
\newcommand\cU{\mathcal{U}}
\newcommand\p{ \partial}
\newcommand\W{\mathrm{W}}
\newcommand\gl{\mathrm{gl}}
\newcommand\GL{\mathrm{GL}}
\newcommand\so{\mathrm{so}}
\newcommand\SO{\mathrm{SO}}
\newcommand\vph[1]{\vphantom{(}#1\vphantom{)}}
\newcommand\dds{ \partial_s}
\newcommand\qt[1]{``#1''}
\newcommand\ee{\mathrm{e}}
\newcommand\ddr{\partial_r}
\newcommand\ddp{\partial_p}
\newcommand\pd{\partial}
\newcommand\ddw{\partial_w}
\newcommand\fbal[4]{\cF_{\alpha#1,\beta#2,\mu#3}^{{\mathrm{bal}}}(#4,\pd_{#4})}
\newcommand\fbalw{\fbal{}{}{}{w}}
\newcommand\bal{\mathrm{bal}}
\newcommand\fbalpre[4]{\cF_{#1,#2,#3}^{{\mathrm{bal}}}(#4,\pd_{#4})}
\newcommand\sbalpre[3]{\cS_{#1,#2}^{{\mathrm{bal}}}(#3,\pd_{#3})}
\newcommand\sbal[3]{\cS_{\alpha#1,\lambda#2}^{{\mathrm{bal}}}(#3,\pd_{#3})}
\newcommand\sbalw{\sbal{}{}{w}}
\newcommand\conbal[3]{\cF_{\theta#1,\alpha#2}^{{\mathrm{bal}}}(#3,\pd_{#3})}
\newcommand\conbalw{\conbal{}{}{w}}
\newcommand\hbalpre[2]{\cS_{#1}^{{\mathrm{bal}}}(#2,\pd_{#2})}
\newcommand\hbal[2]{\hbalpre{\lambda#1}{#2}}
\newcommand\hbalw{\hbal{}{w}}
\newcommand\ddu[1]{\partial_{u_{#1}}}
\newcommand\bbalpre[2]{\cF_{#1}^{\mathrm{bal}}(#2,\pd_{#2})}
\newcommand\bbal[2]{\bbalpre{\alpha#1}{#2}}
\newcommand\bbalw{\bbal{}{w}}
\begin{document}

\allowdisplaybreaks

\newcommand{\arXivNumber}{1505.02271}

\renewcommand{\PaperNumber}{108}

\FirstPageHeading

\ShortArticleName{From Conformal Group to Symmetries of Hypergeometric Type Equations}

\ArticleName{From Conformal Group to Symmetries\\ of Hypergeometric Type Equations}

\Author{Jan DEREZI\'NSKI~$^\dag$ and Przemys{\l}aw MAJEWSKI~$^{\dag\ddag}$}

\AuthorNameForHeading{J.~Derezi\'nski and P.~Majewski}

\Address{$^\dag$~Department of Mathematical Methods in Physics, Faculty of Physics, University of Warsaw,\\
\hphantom{$^\dag$}~Pasteura~5, 02-093~Warszawa, Poland}
\EmailD{\href{mailto:Jan.Derezinski@fuw.edu.pl}{Jan.Derezinski@fuw.edu.pl}}
\URLaddressD{\url{http://fuw.edu.pl/~derezins/}}

\Address{$^\ddag$~Bureau of Air Defence and Anti-missile Defence Systems, PIT-RADWAR~S.A.,\\
\hphantom{$^\ddag$}~Poligonowa~30, 04-025~Warszawa, Poland}
\EmailD{\href{mailto:Przemyslaw.Majewski@fuw.edu.pl}{Przemyslaw.Majewski@fuw.edu.pl}}
\URLaddressD{\url{http://fuw.edu.pl/~pmaj/}}

\ArticleDates{Received February 24, 2016, in f\/inal form October 20, 2016; Published online November 05, 2016}

\Abstract{We show that properties of hypergeometric type equations become transparent if they are derived from appropriate 2nd order partial dif\/ferential equations with constant coef\/f\/icients. In particular, we deduce the symmetries of the hypergeometric and Gegenbauer equation from conformal symmetries of the 4- and 3-dimensional Laplace equation. We also derive the symmetries of the conf\/luent and Hermite equation from the so-called Schr\"odinger symmetries of the heat equation in 2 and 1 dimension. Finally, we also describe how properties of the ${}_0F_1$ equation follow from the Helmholtz equation in 2 dimensions.}

\Keywords{Laplace equation; hypergeometric equation; conf\/luent equation; Kummer's table; Lie algebra; conformal group}

\Classification{35J05; 33Cxx; 35B06}

{\small \tableofcontents}

\section{Introduction}\label{s1}

The paper is devoted to the properties of the {\em hypergeometric equation}
\begin{gather}
\big(z(1-z)\p_z^2+ (c-(a+b+1)z)\p_z-ab\big)F(z)=0,\label{hyp1}
\end{gather}
the {\em Gegenbauer equation}
\begin{gather} \big(\big(1-z^2\big)\p_z^2-({a}+{b}+1)z\p_z-{a}{b}\big)F(z)=0,\label{hyp2}
\end{gather}
the {\em confluent equation}
\begin{gather}
\big(z\p_z^2+(c-z)\p_z-a\big)F(z)=0,\label{hyp3}
\end{gather}
the {\em Hermite equation}
\begin{gather} \big(\p_z^2-2z\p_z-2{a}\big)F(z)=0,
\label{hyp4}
\end{gather}
and the {\em ${}_0F_1$ equation} (closely related to the {\em Bessel equation}, see, e.g.,~\cite{De})
\begin{gather}
\big(z\p_z^2+c\p_z-1\big)F(z)=0.\label{hyp5}
\end{gather}
Here, $z$ is a complex variable, $\partial_z$ is the dif\/ferentiation with respect to $z$, and $a$, $b$, $c$ are arbitrary complex parameters.

Special functions that solve these equations are typical representatives of {\em hypergeometric type functions} \cite{NU}. They often appear in applications \cite{Fl,MF}. In old times they were considered one of the central topics of mathematics, see, e.g.,~\cite{WW}. In our opinion, they indeed belong to the most natural objects in mathematics.

Properties of hypergeometric type functions look quite complicated. According to our observations, when these properties are discussed, most people react with boredom and/or irritation. We would like to convince the reader that in reality identities related to hypergeometric type equations are beautiful and can be derived in an elegant and transparent way.

We will show that in order to understand hypergeometric type equations it is helpful to start from certain 2nd order PDE's in several variables with constant coef\/f\/icients. If we start from rather obvious properties of these PDE's, reduce the number of variables and change the coordinates, we can observe how these properties become more complicated. At the end one obtains relatively complicated sets of identities for hypergeometric type equations.

\subsection{Hypergeometric type operators}

Equations (\ref{hyp1})--(\ref{hyp5}) are determined by an operator of the form
\begin{gather*}\cC=\cC(z,\partial_z):=\sigma(z)\p_z^2+\tau(z)\p_z+\eta.\end{gather*}
In our paper we will concentrate on the study of the operator $\cC$ itself, rather than on individual solutions $F$
of the equation \begin{gather*} \cC F=0.
\end{gather*}
Note, however, that properties of $F$'s can be to a large extent inferred from the properties of $\cC$ itself.

According to the terminology used in \cite{NU}, and then in \cite{De, DeWr}, equations (\ref{hyp1})--(\ref{hyp5}) belong to the class of {\em hypergeometric type equations}. This class is def\/ined by demanding that $\sigma$ is a~polynomial of at most 2nd order, $\tau$ is a polynomial of at most 1st order and $\eta$ is a number. More precisely, (\ref{hyp1})--(\ref{hyp5}) constitute standard forms of all nontrivial classes of hypergeometric type equations, as explained, e.g., in~\cite{De}.

Equations (\ref{hyp1})--(\ref{hyp5}) depend on a number of (complex) parameters. For instance, in the case of the hypergeometric equation these parameters are $a$, $b$, $c$. We will prefer to use dif\/ferent sets of parameters introduced in a systematic way in~\cite{De}, which are more convenient to express the symmetries of these equations. In~\cite{De} these new parameters were called the {\em Lie-algebraic parameters}. Indeed, as we will describe, they are eigenvalues of the ``Cartan algebra'' of appropriate Lie algebras. For instance, for the hypergeometric equation the Lie-algebraic parameters, denoted $\alpha$, $\beta$, $\mu$, are the dif\/ferences of the indices at the three singular points.

We will prefer not to use the operators $\cC$ directly. As explained in \cite{De}, we can write
\begin{gather*} \cC(z,\p_z)=\rho^{-1}(z)\partial_z\rho(z)\partial_z+\eta,
\end{gather*}
which def\/ines (up to a multiplicative factor) a certain function $\rho(z)$ called the {\em weight}. Fol\-lowing~\cite{De}, the operator
\begin{gather*} \cC^\bal(z,\p_z):=\rho(z)^{\frac12}\cC(z,\p_z)\rho(z)^{-\frac12},\end{gather*}
will be called the {\em balanced form of $\cC$}. The study of the balanced form is obviously equivalent to that of $\cC$, since both are related by a simple similarity transformation. The original operator $\cC$ will be sometimes called {\em the standard form of $\cC$}.

We will consider 3 classes of identities:
\begin{enumerate}\itemsep=0pt
\item[(1)] discrete symmetries,
\item[(2)] transmutation relations,
\item[(3)] factorizations.
\end{enumerate}
{\em Discrete symmetries} involve a transformation of the independent variable, together with a change of the parameters. The family of the discrete symmetries of the hypergeometric equation is especially famous. In the literature it is sometimes known under the name of the {\em Kummer's table}~\cite{AAR,Ku1,Ku2,LSV}.

The balanced form is especially convenient for a presentation of discrete symmetries, because some of them simply reduce to the change of sign of parameters.

{\em Transmutation relations} say that $\cC$ multiplied from the right by an appropriate 1st order operator equals $\cC$ for shifted parameters
multiplied from the left by a similar 1st order operator. In quantum physics the corresponding 1st order operators are often called {\em creation/annihilation operators}.

{\em Factorizations} say that $\cC$ can be written as a product of two 1st order operators, up to an additional term that does not contain the independent variable. It is easy to see that factorizations imply transmutation relations, as described, e.g., in~\cite{De}. Factorizations play an important role in quantum mechanics. They are often interpreted as the manifestation of {\em supersymmetry}~\cite{CKS}. In quantum mechanics discussion of these factorizations has a long history going back at least to~\cite{IH}.

\looseness=-1 Discrete symmetries, transmutation relations and factorizations are far from being trivial. Nevertheless, in our opinion they belong to the most {\em elementary} properties of hypergeometric type equations and functions. There exist many other properties, notably {\em addition formulas} and {\em integral representations}, which we view as more advanced. We do not consider them in our paper.

\subsection{Group-theoretical derivation}

We will see that all hypergeometric type equations can be obtained by separating the variables of 2nd order PDE's with constant coef\/f\/icients. We will always use the complex variable, to avoid discussing various signatures of these PDE's.

Every such a PDE has a Lie algebra and a Lie group of generalized symmetries. In this Lie algebra we can f\/ix a certain maximal commutative algebra, which we will call the ``{\em Cartan algebra}''. Operators whose adjoint action is diagonal in the ``Cartan algebra'' will be called ``{\em root operators}''. In the Lie group of generalized symmetries we will distinguish a discrete group, which we will call the {\em group of} ``{\em Weyl symmetries}''. This group will implement automorphisms of the Lie algebra leaving invariant the ``Cartan algebra''.

Note that in some cases the Lie algebra of generalized symmetries is semisimple, and then the names {\em Cartan algebra}, {\em root operators} and {\em Weyl symmetries} correspond to the standard names. In other cases the Lie algebra is non-semisimple, and then the names are less standard~-- this is the reason for the quotation marks that we use.

Parameters of hypergeometric type equation can be interpreted as the eigenvalues of elements of the ``Cartan algebra''. In particular, the number of parameters of a given class of equations equals the dimension of the corresponding ``Cartan algebra''. Each transmutation relation is related to a ``root operator''. Finally, each discrete symmetry of a hypergeometric type operator corresponds to a ``Weyl symmetry'' of the Lie algebra.

We can distinguish 3 kinds of PDE's of the complex variable with constant coef\/f\/icients:
\begin{enumerate}\itemsep=0pt \item[(1)] The {\em Laplace equation} on $\cc^n$
\begin{gather*}\Delta_{\cc^n}f=0,\end{gather*} whose Lie algebra of generalized symmetries is $\so(\cc^{n+2})$.
\item[(2)] The {\em heat equation} on $\cc^{n-2}\oplus\cc$,
\begin{gather*}\big(\Delta_{\cc^{n-2}}+\partial_t\big)f=0,\end{gather*}
 whose Lie algebra of generalized
 symmetries is $\sch(\cc^{n-2})$, the so-called {\em $($complex$)$ Schr\"o\-dinger Lie algebra}.
 \item[(3)] The {\em Helmholtz equation} on $\cc^{n-1}$,
\begin{gather*}\big(\Delta_{\cc^{n-1}}-1\big)f=0,\end{gather*}
 whose Lie algebra of symmetries is $\cc^{n-1}\rtimes \so(\cc^{n-1})$.
\end{enumerate}
Separating the variables in these equations usually leads to dif\/ferential equations with many variables. Only in a few cases it leads to ordinary dif\/ferential equations, which turn out to be of hypergeometric type. All these cases are described in the following table:

\begin{table}[ht]\centering
\caption{}\label{table}\vspace{1mm}
\begin{tabular}{ccccc}
\hline
PDE& $\begin{array}{@{}c@{}}\hbox{Lie}\\\hbox{algebra}\end{array}$
&$\begin{array}{@{}c@{}}\hbox{dimension of}\\\hbox{Cartan algebra}\end{array}$
&$\begin{array}{@{}c@{}}\hbox{discrete}\\\hbox{symmetries}\end{array}$&
equation \\
\hline
$\Delta_{\cc^4}$& $\so(\cc^6)$&3& cube & hypergeometric \tsep{2pt} \\
$\Delta_{\cc^3}$&$\so(\cc^5)$&2& square & Gegenbauer \\
$\Delta_{\cc^2}+\p_t$& $\sch(\cc^2)$&2&$\zz_2\times \zz_2$&
conf\/luent\\
$\Delta_{\cc}+\p_t$& $\sch(\cc^1)$&1&$\zz_4$&
Hermite \\
$\Delta_{\cc^2}-1$& $\cc^2\rtimes \so(\cc^2)$ &1& $\zz_2$& ${}_0F_1$\\
\hline
\end{tabular}
\end{table}

The Laplace equation on $\cc^n$, the heat equation on $\cc^{n-2}$ and the Helmholtz equation on $\cc^{n-1}$ together with their generalized symmetries can be elegantly derived by an appropriate reduction from the Laplace equation in $n+2$ dimensions
\begin{gather*}
\Delta_{\cc^{n+2}}K=0.
\end{gather*}
Thus, as follows from Table~\ref{table}, to derive symmetries of the hypergeometric and conf\/luent equations one should start from
\begin{gather}
\Delta_{\cc^6} K=0.\label{lapa}
\end{gather}
To derive the Gegenbauer, Hermite and ${}_0F_1$ equation together with all its symmetries it is enough to start with
\begin{gather}
\Delta_{\cc^5} K=0.\label{lapa1}
\end{gather}
It is easy to reduce the Laplace equation from 6 to 5 dimensions. Thus the Laplace equation in~6 dimensions is the ``mother'' of all hypergeometric type equations.

Let us describe these derivations in more detail.
\begin{itemize}\itemsep=0pt
\item We start from (\ref{lapa}), where the symmetries $\so(\cc^6)$ are obvious. By what we call the {\em conformal reduction}, we can reduce $\Delta_{\cc^6}$ to $\Delta_{\cc^4}$, and then further to the {\em hypergeometric operator}. Alternatively, one can reduce $\Delta_{\cc^6}$ to an appropriate Laplace--Beltrami operator, and then we obtain~(\ref{hyp1}) more directly.
\item We can repeat an analogous procedure one dimension lower. We start from (\ref{lapa1}), and at the end we obtain the {\em Gegenbauer
operator}.
\item One can reduce~(\ref{lapa}) to $\Delta_{\cc^2}+\p_t$, the heat operator in 2 dimensions, and then further to the {\em confluent operator}. Note that $\sch(\cc^2)$ is contained in $\so(\cc^6)$.
\item One can repeat the above steps one dimension lower, reducing~(\ref{lapa1}) to $\Delta_\cc+\p_t$, the heat operator in 1 dimension, and then further to the {\em Hermite operator}. Note that $\sch(\cc)$ is contained in $\so(\cc^5)$.
\item To obtain the {\em ${}_0F_1$ operator} one needs to separate variables in the Helmholtz opera\-tor~\mbox{$\Delta_{\cc^2}-1$}. Its symmetries
$\cc^2\rtimes\so(\cc^2)$ are contained in $\so(\cc^5)$ and one can start the derivation from~(\ref{lapa1}).
\end{itemize}

One can ask whether Table~\ref{table} can be enlarged, e.g., by considering $\Delta_{\cc^n}f=0$ with its conformal symmetries $\so(\cc^{n+2})$ for $n\geq5$. One can argue that the answer is negative and Table~\ref{table} is complete. Indeed, the Cartan algebra of $\so(n+2)$ has dimension $[n/2]$, and $n-[n/2]>1$ for $n\geq5$. Therefore, separation of variables in the Laplace equation in dimension $n\geq5$ leads to a~dif\/ferential equation in more than one variable.

\subsection{Organization of the paper}

The paper can be considered as a sequel to \cite{De}. Nevertheless, it is to a large degree self-contained and independent of~\cite{De}.

In Section~\ref{Hypergeometric type operators and their symmetries} we list the identities that we would like to derive/explain in our article. As~described in the introduction, these identities involve 5 classes of dif\/ferential operators (\ref{hyp1})--(\ref{hyp5}). All these operators are f\/irst transformed to the balanced form.

The versions of these identities for the standard form of equations (\ref{hyp1})--(\ref{hyp5}) can be found in \cite{De}. In order to reduce the length of the paper, in this paper we concentrate on the balanced form, which is more symmetric.

Sections~\ref{s2}, \ref{s3} and \ref{s4} provide basic def\/initions and concepts, mostly related to (complex) dif\/ferential geometry, Lie groups and Lie algebras. This material is very well known, especially in the real context. Unfortunately, the use of complex manifolds, natural in our context, has some disadvantages due to the rigidity of holomorphic functions and their multivaluedness. This is the reason for some annoying minor complications in these sections, such as {\em local} representations of groups.

In Section~\ref{Conformal invariance} we describe the action of the conformal group/Lie algebra in $n$ dimensions. We do this f\/irst for a general $n$. As a simple, but instructive exercise we consider the cases $n=1,2$.

In Section~\ref{s6} we consider the case $n=4$, which yields the hypergeometric operator. In Section~\ref{s7} we consider $n=3$, which leads to the Gegenbauer operator. These two sections are very parallel to one another. Both are direct applications of the formalism of Section~\ref{Conformal invariance}.

In Section~\ref{sec-sch} we consider the Schr\"odinger group $\Sch(\cc^{n-2})$ and its Lie algebra $\sch(\cc^{n-2})$. They describe generalized symmetries of the heat equation in $n-2$ dimensions. We f\/irst do this for a general dimension.

In Section~\ref{sec-con} we consider the case $n=4$, which yields the conf\/luent operator. In Section~\ref{sec-her} we consider $n=3$, which leads to the Hermite operator. Again, these two sections are quite parallel. They are applications of the formalism of Section~\ref{sec-sch}.

In the f\/inal Section~\ref{sec-bess} we consider the Helmholtz equation in 2 dimensions together with the af\/f\/ine Euclidean symmetries $\cc^2\rtimes \mathrm{O}(\cc^2)$ and $\cc^2\rtimes \so(\cc^2)$. This leads to the ${}_0F_1$ equation (or, equivalently, to the Bessel equation). We included this section for completeness, however its material is well-known and well documented in the literature.

Note that Sections \ref{s2}, \ref{s3}, \ref{s4}, \ref{Conformal invariance} and \ref{sec-sch} are quite general and abstract. On the other hand, Sections \ref{s6}, \ref{s7}, \ref{sec-con}, \ref{sec-her} and \ref{sec-bess} are more concrete and present applications of the general theory to the classes of hypergeometric type equations (\ref{hyp1}), (\ref{hyp2}), (\ref{hyp3}), (\ref{hyp4}) and (\ref{hyp5}), respectively.
To a large extent they can be read independently of the ``general'' part of the paper.

\subsection{Comparison with literature}

Properties of functions of hypergeometric type are described in numerous books, such as \cite{AAR,EMOT1,EMOT2,EMOT3,Ho,Kuz,MOS,NIST,R,WW}. In particular, the properties presented in Section~\ref{Hypergeometric type operators and their symmetries} (transmutation relations, discrete symmetries and factorizations) are known in one form or another. A similar presentation can be found in~\cite{De}. (Unlike in this paper, the presentation of~\cite{De} involves the standard form of hypergeometric type equations and not their balanced form).

Lie algebras associated with the Bessel and Hermite functions can be found in papers by Weisner~\cite{We1,We2}.

The idea of studying hypergeometric type equations with help of Lie algebras was developed further by Miller. His early book~\cite{M1} considers mostly small Lie algebras/Lie groups, typically $\mathrm{sl}(2,\cc)$/$\mathrm{SL}(2,\cc)$ and its contractions, and applies them to obtain various identities about hypergeometric type functions. These Lie algebras have 1-dimensional ``Cartan algebras'' and a~single pair of roots. This kind of analysis is able to explain only a~single pair of transmutation relations, whereas to explain bigger families of transmutation relations one needs larger Lie algebras.

A Lie algebra strictly larger than $\mathrm{sl}(2,\cc)$ is $\so(4,\cc)$. There exists a large literature on the relation of the hypergeometric equation with $\so(4,\cc)$ and its real forms, see, e.g.,~\cite{KM}. This Lie algebra is however still too small to account for all symmetries of the hypergeometric equation~-- its Cartan algebra is only 2-dimensional, whereas the equation has three parameters.

An explanation of symmetries of the Gegenbauer equation in terms of $\so(5)$ and of the hypergeometric equation in terms of $\so(6)\simeq \mathrm{sl}(4)$ was f\/irst given by Miller, see~\cite{M4}, and especially~\cite{M5}.

Miller and Kalnins wrote a series of papers where they studied the symmetry approach to separation of variables for various 2nd order partial dif\/ferential
equations, such as the Laplace and wave equation, see, e.g.,~\cite{KM1}. A large part of this research is summed up in the book by Miller~\cite{M3}. As an important consequence of this study, one obtains detailed information about symmetries of hypergeometric type equations.

The main tool that we use to describe properties of hypergeometric type functions is the theory of {\em generalized symmetries} of 2nd order linear PDE's. This theory is described in another book by Miller \cite{M2}, and further developed in \cite{M3}.

The fact that conformal transformations of the Euclidean space are generalized symmetries of the Laplace equation was apparently known already to Kelvin.
Its explanation in terms of the null quadric f\/irst appeared in \cite{Boc}. Null quadric as a tool to study conformal symmetries of the Laplace equation is the basic tool of \cite{KM1,KMR}.

The conformal invariance of the Laplace equation generalizes to arbitrary pseudo-Riemannian manifolds. In fact, the Laplace--Beltrami operator plus an appropriate multiple of the scalar curvature, sometimes called the {\em Yamabe Laplacian}, is invariant in a generalized sense with respect to conformal maps. This can be found for instance in~\cite{Or, Tay}.

The group of generalized symmetries of the heat equation was known already to Lie~\cite{L}. It was rediscovered (in the essentially equivalent context of the free Schr\"odinger equation) by Schr\"odinger~\cite{Sch}. It was then studied, e.g., in~\cite{Ha,Ni}.
Elementary notions from dif\/ferential geometry used in our paper are well known. One of standard references in this subject is~\cite{KN1,KN2}.

A topic that is extensively treated in the literature on the relation of special functions to group theory, such as \cite{M1, Ol,V, VK,Wa}, is derivation of various addition formulas. Addition formulas say that a certain special function can be written as a sum, often inf\/inite, of some related functions. These identities can be typically interpreted in terms of a certain representation of an appropriate Lie group. These identities are very interesting, however we do not discuss them. The only elements of Lie groups that we consider are very special~-- they are the ``Weyl symmetries''. They yield discrete symmetries of hypergeometric type equations, such as Kummer's table. We leave out addition formulas, because their theory is considerably more complicated than what we consider in our paper.

The relationship of Kummer's table with the group of symmetries of a~cube (which is the Weyl group of $\so(\cc^6)$) was discussed in~\cite{LSV}. A recent paper, where symmetries of the hypergeometric equation play an important role is~\cite{Ko}. (We learned the term ``transmutation relations'' from this paper).

The use of transmutation relations as a tool to derive recurrence relations for hypergeometric type functions is well known and can be found, e.g., in the book by Nikiforov--Uvarov~\cite{NU}, in the books by Miller~\cite{M1} or in older works such as \cite{Tr,We1,We2}.

There exists various generalizations of hypergeometric type functions. Let us mention the class of $\mathcal{A}$-hypergeometric functions, which provides a~natural generalization of the usual hypergeometric function to many-variable situations \cite{Be,Bod}. Saito \cite{Sa} considers generalized symmetries in the framework of $\mathcal{A}$-hypergeometric functions. Note, however, that the results of Saito are incomplete in the case of the classic hypergeometric equation. He admits this: ``When $p=2$, the symmetry Lie algebra is much larger than $ {\mathfrak g}_{2}$'', and he quotes the paper by Miller~\cite{M5}. Similarly, the (surprisingly large) Lie algebras of symmetries of the Gegenbauer and Hermite equations cannot be easily seen from a (seemingly very general) analysis of Saito.

There are a number of topics related to the hypergeometric type equation that we do not touch. Let us mention the question whether hypergeometric functions can be expressed in terms of algebraic functions. This topic, in the context of $\mathcal{A}$-hypergeometric functions was considered, e.g., in the interesting papers~\cite{Be,Bod}.

In our paper we stick to a rather limited class of equations. We do not have the ambition to go for generalizations. This limited class has a surprisingly rich structure, which seems to be lost when we consider their generalizations.

Many, perhaps most identities and ideas described in our paper can be found in one form or another in the literature, especially in the works by Miller, also by Miller and Kalnins, as we discussed above. Nevertheless, we believe that our work raises important points that are not explicitly described in the literature. We argue that symmetry properties of all hypergeometric type equations become almost obvious if we add a~certain number of variables obtaining the Laplace equation. We describe this idea in a unif\/ied framework, identifying the relationship of theory of hypergeometric type equations with such elements of group theory as roots, Cartan algebras and Weyl groups. These ideas are summed in Table~1, which to our knowledge has not appeared in the literature, except for the paper \cite{De} written by one of us.

We use various (minor but helpful) ideas to make our presentation as short and transparent as possible: e.g., the {\em balanced form} of hypergeometric type equations, {\em Lie algebraic parameters} and {\em split coordinates} in $\cc^n$. In our derivations the symmetries are completely obvious at the starting point, then at each step they become more and more complicated.

The derivation of generalized symmetries of the Laplacian, given after Theorem~\ref{wer}, is pro\-bably partly original. It leads to an interesting geometric object, which we call~$\Delta^\diamond$. It satisf\/ies identities~(\ref{deq1a}) and (\ref{deq3a}), which seem quite important in the context of conformal invariance of the Laplace equation. These identities are elementary and quite simple, however we have never seen them in the literature. They can be used to derive factorizations of hypergeometric type equations, relating them to Casimir operators of certain distinguished subalgebras, another point that is probably original.

\section{Hypergeometric type operators and their symmetries}\label{Hypergeometric type operators and their symmetries}

In this section we describe the families of identities that we would like to interpret in a group-theoretical fashion in this article. As mentioned above, all of them involve the balanced form of the operators (\ref{hyp1})--(\ref{hyp5}).

\subsection{Hypergeometric operator}\label{subs-1}

In the hypergeometric equation (\ref{hyp1}) we prefer to replace the parameters $a$, $b$, $c$ with
\begin{gather*}
\alpha:=c-1,\qquad \beta: =a+b-c,\qquad \mu:=b-a. 
\end{gather*}
We obtain the {\em $($standard$)$ hypergeometric operator}
\begin{gather}
\cF_{\alpha,\beta ,\mu} (w,\p_w)
 =w(1-w)\p_w^2+\big((1+\alpha)(1-w)-(1+\beta )w\big)\p_w\nonumber\\
 \hphantom{\cF_{\alpha,\beta ,\mu} (w,\p_w) =}{} +\tfrac14\mu^2-\tfrac14(\alpha+\beta +1)^2.\label{hy1-tra}
\end{gather}
Instead of (\ref{hy1-tra}) we prefer to consider the {\em balanced hypergeometric operator}
\begin{gather}\nonumber
\cF_{\alpha,\beta,\mu}^\bal(w,\partial_w):= w^{\frac{\alpha}{2}}(1-w)^{\frac{\beta}{2}}\cF_{\alpha,\beta,\mu}(w,\partial_w){(1-w)^{-\frac{\beta}{2}}w^{-\frac{\alpha}{2}}}\\
\hphantom{\cF_{\alpha,\beta,\mu}^\bal(w,\partial_w)}{} = \partial_ww(1-w)\partial_w-\frac{\alpha^2}{4w}-\frac{\beta^2}{4(1-w)}+\frac{\mu^2-1}{4}.
\label{hyp1c}
\end{gather}

\emph{Discrete symmetries.} $\fbal{}{}{}{w}$ does not change if we f\/lip the signs of $\alpha$, $\beta$, $\mu$. Besides, the following operators coincide with $\fbal{}{}{}{w}$:
\begin{alignat*}{3}
&w = z\colon \qquad && \fbal{}{}{}{z},&\\
&w = 1-z\colon \qquad &&\fbalpre{\beta}{\alpha}{\mu}{z},&\\
&w = \frac{1}{z}\colon \qquad &&z^{\frac{1}{2}}(-z)\fbalpre{\mu}{\beta}{\alpha}{z}z^{-\frac{1}{2}},&\\
&w = 1-\frac{1}{z}\colon \qquad &&z^{\frac{1}{2}}(-z)\fbalpre{\mu}{\alpha}{\beta}{z} z^{-\frac{1}{2}},&\\
&w = \frac{1}{1-z}\colon \qquad &&(1-z)^{\frac{1}{2}}(z-1)\fbalpre{\beta}{\mu}{\alpha}{z} (1-z)^{-\frac{1}{2}},&\\
&w = \frac{z}{z-1}\colon \qquad && (1-z)^{\frac{1}{2}}(z-1)\fbalpre{\alpha}{\mu}{\beta}{z} (1-z)^{-\frac{1}{2}}.&
\end{alignat*}

Transmutation relations:
\begin{gather*}
\sqrt{w(1-w)} \left(\ddw-\frac{\alpha}{2w}+\frac{\beta}{2(1-w)}\right) \fbalw\\
\qquad{} =\fbal{+1}{+1}{}{w}\sqrt{w(1-w)}\left(\ddw-\frac{\alpha}{2w}+\frac{\beta}{2(1-w)}\right),
\\[1mm]
\sqrt{w(1-w)}\left(\ddw+\frac{\alpha}{2w}-\frac{\beta}{2(1-w)}\right)\fbalw\\
\qquad{} =\fbal{-1}{-1}{}{w}\sqrt{w(1-w)}\left(\ddw+\frac{\alpha}{2w}-\frac{\beta}{2(1-w)}\right),
\\[1mm]
\sqrt{w(1-w)}\left(\ddw-\frac{\alpha}{2w}-\frac{\beta}{2(1-w)}\right)\fbalw\\
\qquad{} =\fbal{+1}{-1}{}{w}\sqrt{w(1-w)}\left(\ddw-\frac{\alpha}{2w}-\frac{\beta}{2(1-w)}\right),
\\[1mm]
\sqrt{w(1-w)}\left(\ddw+\frac{\alpha}{2w}+\frac{\beta}{2(1-w)}\right)\fbalw\\
\qquad{} =\fbal{-1}{+1}{}{w}\sqrt{w(1-w)}\left(\ddw+\frac{\alpha}{2w}+\frac{\beta}{2(1-w)}\right),
\\[1mm]
\sqrt{w}\left(2(1-w)\ddw-\frac{\alpha}{\vph{w}}-\mu-3\right)\fbalw\\
\qquad{} =\fbal{+1}{}{+1}{w} \sqrt{w}\left(2(1-w)\ddw-\frac{\alpha}{\vph{w}}-\mu-1\right),
\\[1mm]
\sqrt{w}\left(2(1-w)\ddw+\frac{\alpha}{\vph{w}}+\mu-3\right)\fbalw\\
\qquad{} =\fbal{-1}{}{-1}{w}\sqrt{w}\left(2(1-w)\ddw+\frac{\alpha}{\vph{w}}+\mu-1\right),
\\
\sqrt{w}\left(2(1-w)\ddw-\frac{\alpha}{\vph{w}}+\mu-3\right)\fbalw\\
\qquad{} =\fbal{+1}{}{-1}{w}\sqrt{w}\left(2(1-w)\ddw-\frac{\alpha}{\vph{w}}+\mu-1\right),
\\[1mm]
\sqrt{w}\left(2(1-w)\ddw+\frac{\alpha}{\vph{w}}-\mu-3\right)\fbalw\\
\qquad{} =\fbal{-1}{}{+1}{w}\sqrt{w}\left(2(1-w)\ddw+\frac{\alpha}{\vph{w}}-\mu-1\right),
\\
\sqrt{1-w}\left(-2w\ddw-\frac{\beta}{1-w}-\mu-3\right)\fbalw\\
\qquad{} =\fbal{}{+1}{+1}{w} \sqrt{1-w}\left(-2w\ddw-\frac{\beta}{1-w}-\mu-1\right),
\\[1mm]
\sqrt{1-w}\left(-2w\ddw+\frac{\beta}{1-w}+\mu-3\right)\fbalw\\
\qquad{} =\fbal{}{-1}{-1}{w}\sqrt{1-w}\left(-2w\ddw+\frac{\beta}{1-w}+\mu-1\right),
\\[1mm]
\sqrt{1-w}\left(-2w\ddw-\frac{\beta}{1-w}+\mu-3\right)\fbalw\\
\qquad{} =\fbal{}{+1}{-1}{w} \sqrt{1-w}\left(-2w\ddw-\frac{\beta}{1-w}+\mu-1\right),
\\[1mm]
\sqrt{1-w}\left(-2w\ddw+\frac{\beta}{1-w}-\mu-3\right)\fbalw\\
\qquad{} =\fbal{}{-1}{+1}{w}\sqrt{1-w}\left(-2w\ddw+\frac{\beta}{1-w}-\mu-1\right).
\end{gather*}

Factorizations:
\begin{gather*}
\fbalw \\
 =\sqrt{w(1-w)}\left(\ddw-\frac{\alpha-1}{2w}+\frac{\beta-1}{2(1-w)}\right)\sqrt{w(1-w)}\left(\ddw-\frac{\alpha}{2w}+\frac{\beta}{2(1-w)}\right) \\
\hphantom{=}{} -\frac{1}{4}\left(\left(\beta+\alpha\right)\left(\beta+\alpha-2\right)+\mu^2-1\right) \\
 =\sqrt{w(1-w)}\left(\ddw+\frac{\alpha+1}{2w}+\frac{\beta-1}{2(1-w)}\right)\sqrt{w(1-w)}\left(\ddw-\frac{\alpha}{2w}-\frac{\beta}{2(1-w)}\right)\\
\hphantom{=}{} -\frac{1}{4}\left(\left(\beta-\alpha\right)\left(\beta-\alpha-2\right)+\mu^2-1\right)\\
 =\sqrt{w(1-w)}\left(\ddw-\frac{\alpha-1}{2w}-\frac{\beta+1}{2(1-w)}\right)\sqrt{w(1-w)}\left(\ddw-\frac{\alpha}{2w}-\frac{\beta}{2(1-w)}\right) \\
\hphantom{=}{}-\frac{1}{4}\left(\left(\beta-\alpha\right)\left(\beta-\alpha+2\right)+\mu^2-1\right)\\
 =\sqrt{w(1-w)}\left(\ddw+\frac{\alpha+1}{2w}-\frac{\beta+1}{2(1-w)}\right)\sqrt{w(1-w)}\left(\ddw-\frac{\alpha}{2w}+\frac{\beta}{2(1-w)}\right)\\
\hphantom{=}{}-\frac{1}{4}\left(\left(\beta+\alpha\right)\left(\beta+\alpha+2\right)+\mu^2-1\right),
\\[1mm]
 -(1-w)\fbalw \\
=-\sqrt{w}\left(-\frac{1}{2}+(1-w)\ddw-\frac{\alpha-1}{2w}-\frac{\mu-1}{2}\right)\sqrt{w}\left(-\frac{1}{2}+(1-w)\ddw+\frac{\alpha}{2w}+\frac{\mu}{2}\right)\\
\hphantom{=}{}-\frac{1}{4}\left(\left(\mu+\alpha\right)\left(\mu+\alpha-2\right)+\beta^2-1\right) \\
 =-\sqrt{w}\left(-\frac{1}{2}+(1-w)\ddw+\frac{\alpha+1}{2w}-\frac{\mu-1}{2}\right)\sqrt{w}\left(-\frac{1}{2}+(1-w)\ddw
-\frac{\alpha}{2w}+\frac{\mu}{2}\right)\\
\hphantom{=}{} -\frac{1}{4}\left(\left(\mu-\alpha\right)\left(\mu-\alpha-2\right)+\beta^2-1\right)\\
 =-\sqrt{w}\left(-\frac{1}{2}+(1-w)\ddw-\frac{\alpha-1}{2w}+\frac{\mu+1}{2}\right)\sqrt{w}\left(-\frac{1}{2}+(1-w)\ddw
 +\frac{\alpha}{2w}-\frac{\mu}{2}\right)\\
\hphantom{=}{}-\frac{1}{4}\left(\left(\mu-\alpha\right)\left(\mu-\alpha+2\right)+\beta^2-1\right)\\
 =-\sqrt{w}\left(-\frac{1}{2}+(1-w)\ddw+\frac{\alpha+1}{2w}+\frac{\mu+1}{2}\right)\sqrt{w}\left(-\frac{1}{2}+(1-w)\ddw
-\frac{\alpha}{2w}-\frac{\mu}{2}\right)\\
\hphantom{=}{} -\frac{1}{4}\left(\left(\mu+\alpha\right)\left(\mu+\alpha+2\right)+\beta^2-1\right),
\\[1mm]
 -w\fbalw\\
 =-\sqrt{1-w}\left(-\frac{1}{2}-w\ddw-\frac{\beta-1}{2(1-w)}-\frac{\mu-1}{2}\right)\sqrt{1-w}\left(-\frac{1}{2}-w\ddw
 +\frac{\beta}{2(1-w)}+\frac{\mu}{2}\right)\\
\hphantom{=}{} -\frac{1}{4}\left(\left(\mu+\beta\right)\left(\mu+\beta-2\right)+\alpha^2-1\right) \\
 =-\sqrt{1-w}\left(-\frac{1}{2}-w\ddw+\frac{\beta+1}{2(1-w)}-\frac{\mu-1}{2}\right)\sqrt{1-w}\left(-\frac{1}{2}-w\ddw
-\frac{\beta}{2(1-w)}+\frac{\mu}{2}\right)\\
\hphantom{=}{} -\frac{1}{4}\left(\left(\mu-\beta\right)\left(\mu-\beta-2\right)+\alpha^2-1\right)\\
 =-\sqrt{1-w}\left(-\frac{1}{2}-w\ddw-\frac{\beta-1}{2(1-w)}+\frac{\mu+1}{2}\right)\sqrt{1-w}\left(-\frac{1}{2}-w\ddw
 +\frac{\beta}{2(1-w)}-\frac{\mu}{2}\right)\\
\hphantom{=}{}-\frac{1}{4}\left(\left(\mu-\beta\right)\left(\mu-\beta+2\right)+\alpha^2-1\right) \\
 =-\sqrt{1-w}\left(-\frac{1}{2}-w\ddw+\frac{\beta+1}{2(1-w)}+\frac{\mu+1}{2}\right)\sqrt{1-w}\left(-\frac{1}{2}-w\ddw
 -\frac{\beta}{2(1-w)}-\frac{\mu}{2}\right)\\
\hphantom{=}{}-\frac{1}{4}\left(\left(\mu+\beta\right)\left(\mu+\beta+2\right)+\alpha^2-1\right).
\end{gather*}
It is striking how symmetric the above formulas look like. The main goal of our paper is to explain why this is so.

\subsection{Gegenbauer operator}\label{subs-2}

In the hypergeometric equation (\ref{hyp1}) let us move the singular points to $-1$, $1$ and demand that it is ref\/lection invariant. Then we can eliminate one of the parameters, say~$c$. We obtain the Gegenbauer equation~(\ref{hyp2}). We introduce new parameters
\begin{gather*}
\alpha :=\frac{{a}+{b}-1}{2},\qquad \lambda: =\frac{{b}-{a}}{2}.
\end{gather*}
We obtain the {\em $($standard$)$ Gegenbauer operator}
\begin{gather}
\cS_{\alpha ,\lambda }(w,\p_w):= \big(1-w^2\big)\p_w^2-2(1+\alpha )w\p_w+\lambda ^2-\left(\alpha +\frac{1}{2}\right)^2.\label{stan2}
\end{gather}
The {\em balanced Gegenbauer operator} is
\begin{gather}
\notag \cS_{\alpha,\lambda}^\bal(w,\partial_w):= \big(w^2-1\big)^{\frac{\alpha}{2}}\cS_{\alpha,\lambda}(w,\p_w)\big(w^2-1\big)^{-\frac{\alpha}{2}}\\
\hphantom{\cS_{\alpha,\lambda}^\bal(w,\partial_w)}{} = \partial_w\big(1-w^2\big)\partial_w-\frac{\alpha^2}{1-w^2}+\lambda^2-\frac{1}{4}.\label{bal2}
\end{gather}

\emph{Discrete symmetries.} $\sbal{}{}{w}$ does not change if we f\/lip the signs of $\alpha$, $\lambda$. Besides, the following operators coincide with $\sbal{}{}{w}$:
\begin{alignat*}{3}
& w = z\colon \qquad && \sbal{}{}{z},& \\
& w = \frac{z}{\sqrt{z^2-1}}\colon \qquad && \big(z^2-1\big)^{\frac{1}{4}}\big(z^2-1\big) \sbalpre{\lambda}{\alpha}{z}\big(z^2-1\big)^{-\frac{1}{4}}.&
\end{alignat*}

\emph{Transmutation relations:}
\begin{gather*}
\sqrt{1-w^2} \left(-\frac{5}{2}-w\ddw-\frac{\alpha}{1-w^2}-\lambda\right) \sbalw\\
\qquad{} =\sbal{+1}{+1}{w} \sqrt{1-w^2}\left(-\frac{1}{2}-w\ddw-\frac{\alpha}{1-w^2}-\lambda\right),
\\[1mm]
\sqrt{1-w^2} \left(-\frac{5}{2}-w\ddw+\frac{\alpha}{1-w^2}+\lambda\right) \sbalw\\
 \qquad{} =\sbal{-1}{-1}{w} \sqrt{1-w^2}\left(-\frac{1}{2}-w\ddw+\frac{\alpha}{1-w^2}+\lambda\right),
\\[1mm]
\sqrt{1-w^2} \left(-\frac{5}{2}-w\ddw-\frac{\alpha}{1-w^2}+\lambda\right) \sbalw\\
 \qquad{} =\sbal{+1}{-1}{w} \sqrt{1-w^2}\left(-\frac{1}{2}-w\ddw-\frac{\alpha}{1-w^2}+\lambda\right),
\\[1mm]
\sqrt{1-w^2} \left(-\frac{5}{2}-w\ddw+\frac{\alpha}{1-w^2}-\lambda\right) \sbalw\\
 \qquad{} =\sbal{-1}{+1}{w} \sqrt{1-w^2}\left(-\frac{1}{2}-w\ddw+\frac{\alpha}{1-w^2}-\lambda\right),
\\[1mm]
w\left(-\frac{5}{2}+\frac{1-w^2}{w}\ddw-\lambda\right) \sbalw
 =\sbal{}{+1}{w} w\left(-\frac{1}{2}+\frac{1-w^2}{w}\ddw-\lambda\right),
\\[1mm]
w\left(-\frac{5}{2}+\frac{1-w^2}{w}\ddw+\lambda\right) \sbalw
 =\sbal{}{-1}{w} w\left(-\frac{1}{2}+\frac{1-w^2}{w}\ddw+\lambda\right),
\\[1mm]
\sqrt{1-w^2}\left(\ddw+\frac{w}{1-w^2} \alpha \right) \sbalw
 =\sbal{+1}{}{w} \sqrt{1-w^2}\left(\ddw+\frac{w}{1-w^2} \alpha \right),
\\[1mm]
\sqrt{1-w^2}\left(\ddw-\frac{w}{1-w^2} \alpha \right) \sbalw
 =\sbal{+1}{}{w} \sqrt{1-w^2}\left(\ddw-\frac{w}{1-w^2} \alpha \right).
\end{gather*}

\emph{Factorizations:}
\begin{gather*}
- \big(1-w^2\big) \sbalw \\
 \qquad{} = -w \left(\frac{1}{2}+\frac{1-w^2}{w}\ddw-\lambda\right) w \left(-\frac{1}{2}+\frac{1-w^2}{w}\ddw+\lambda\right)-\lambda^2+\lambda+\alpha^2-\frac{1}{4}\\
 \qquad{} = -w \left(\frac{1}{2}+\frac{1-w^2}{w}\ddw+\lambda\right) w \left(-\frac{1}{2}+\frac{1-w^2}{w}\ddw-\lambda\right)-\lambda^2-\lambda+\alpha^2-\frac{1}{4},\\[1mm]
 \sbalw \\
 \qquad{} = \sqrt{1-w^2} \left(\ddw+\frac{w}{1-w^2} \left(\alpha-1\right) \right) \sqrt{1-w^2} \left(\ddw-\frac{w}{1-w^2} \alpha \right)\\
 \qquad\quad{} -\alpha^2+\alpha+\lambda^2-\frac{1}{4}\\
 \qquad{} = \sqrt{1-w^2} \left(\ddw-\frac{w}{1-w^2} \left(\alpha+1\right) \right) \sqrt{1-w^2} \left(\ddw+\frac{w}{1-w^2} \alpha \right) -\alpha^2-\alpha+\lambda^2-\frac{1}{4},
\\[1mm]
-w^2 \sbalw \\
\qquad{} =-\sqrt{1-w^2}\left(\frac{1}{2}-w\ddw-\frac{\alpha-1}{1-w^2}-\lambda\right)\sqrt{1-w^2}\left(-\frac{1}{2}-w\ddw+\frac{\alpha}{1-w^2}+\lambda\right)\\
 \qquad \quad{}-\left(\lambda+\alpha\right)\left(\lambda+\alpha-2\right)-\frac{3}{4}\\
\qquad{} =-\sqrt{1-w^2}\left(\frac{1}{2}-w\ddw+\frac{\alpha+1}{1-w^2}-\lambda\right)\sqrt{1-w^2}\left(-\frac{1}{2}-w\ddw-\frac{\alpha}{1-w^2}+\lambda\right)\\
 \qquad\quad{} -\left(\lambda-\alpha\right)\left(\lambda-\alpha-2\right)-\frac{3}{4}\\
 \qquad{}=-\sqrt{1-w^2}\left(\frac{1}{2}-w\ddw+\frac{\alpha+1}{1-w^2}+\lambda\right)\sqrt{1-w^2}\left(-\frac{1}{2}-w\ddw-\frac{\alpha}{1-w^2}-\lambda\right)\\
 \qquad\quad{} -\left(\lambda+\alpha\right)\left(\lambda+\alpha+2\right)-\frac{3}{4}\\
 \qquad{}=-\sqrt{1-w^2}\left(\frac{1}{2}-w\ddw-\frac{\alpha-1}{1-w^2}+\lambda\right)\sqrt{1-w^2}\left(-\frac{1}{2}-w\ddw+\frac{\alpha}{1-w^2}-\lambda\right)\\
 \qquad\quad{} -\left(\lambda-\alpha\right)\left(\lambda-\alpha+2\right)-\frac{3}{4}.
\end{gather*}

\subsection{Conf\/luent operator}\label{subs-3}

In the conf\/luent equation (\ref{hyp3}) we introduce new parameters
\begin{gather*}\alpha:=c-1,\qquad \theta: =2a-c.\end{gather*}
The {\em $($standard$)$ confluent operator} is
\begin{gather}
\cF_{\theta ,\alpha}(w,\p_w):=w\p_w^2+(1+\alpha-w)\p_w-\frac{1}{2}(1+\theta +\alpha).\label{stan3}
\end{gather}
The {\em balanced confluent operator} is
\begin{gather}
 \cF_{\theta,\alpha}^\bal(w,\partial_w):= w^{\frac{\alpha}{2}}\e^{-\frac{w}{2}}\cF_{\theta,\alpha}(w,\p_w){\e^{\frac{w}{2}}w^{-\frac{\alpha}{2}}}
=\partial_ww\partial_w-\frac{w}{4}-\frac{\theta}{2}-\frac{\alpha^2}{4w}.\label{bal3}
\end{gather}

\emph{Discrete symmetries.} $\cF_{\theta,\alpha}^{\mathrm{bal}}(w,\ddw)$ does not change if we f\/lip the sign of $\alpha$. Besides, the following operators coincide with $\cF_{\theta,\alpha}^{\mathrm{bal}}(w,\ddw)$:
\begin{gather*}
w = z\colon \quad \cF_{\theta,\alpha}^{\mathrm{bal}}(z,\partial_z ),\qquad w = -z\colon \quad \cF_{-\theta,\alpha}^{\mathrm{bal}}(z,\partial_z ).
\end{gather*}

\emph{Transmutation relations:}
\begin{gather*}
\frac{1}{\sqrt{w}}\left(w\ddw+\frac{\alpha}{2}+\frac{w}{2}\right) \conbalw
 =\conbal{+1}{-1}{w} \frac{1}{\sqrt{w}}\left(w\ddw+\frac{\alpha}{2}+\frac{w}{2}\right),\\[2mm]
\frac{1}{\sqrt{w}}\left(w\ddw-\frac{\alpha}{2}+\frac{w}{2}\right) \conbalw
 =\conbal{+1}{+1}{w} \frac{1}{\sqrt{w}}\left(w\ddw-\frac{\alpha}{2}+\frac{w}{2}\right),\\[2mm]
\frac{1}{\sqrt{w}}\left(w\ddw+\frac{\alpha}{2}-\frac{w}{2}\right) \conbalw
 =\conbal{-1}{-1}{w} \frac{1}{\sqrt{w}}\left(w\ddw+\frac{\alpha}{2}-\frac{w}{2}\right),\\[2mm]
\frac{1}{\sqrt{w}}\left(w\ddw-\frac{\alpha}{2}-\frac{w}{2}\right) \conbalw
 =\conbal{-1}{+1}{w} \frac{1}{\sqrt{w}}\left(w\ddw-\frac{\alpha}{2}-\frac{w}{2}\right),\\[2mm]
\left(-w\ddw-\frac{\theta}{2}-\frac{w}{2}-\frac{3}{2}\right) \conbalw
 =\conbal{+2}{}{w} \left(-w\ddw-\frac{\theta}{2}-\frac{w}{2}-\frac{1}{2}\right),\\[2mm]
\left(w\ddw-\frac{\theta}{2}-\frac{w}{2}+\frac{3}{2}\right) \conbalw
 =\conbal{-2}{}{w} \left(w\ddw-\frac{\theta}{2}-\frac{w}{2}+\frac{1}{2}\right).
\end{gather*}

\emph{Factorizations:}
\begin{gather*}
-w \conbalw
 =\left(w\ddw-\frac{\theta+1}{2}-\frac{w}{2}\right) \left(-w\ddw-\frac{\theta+1}{2}-\frac{w}{2}\right)-\frac{1}{4}\left(\theta+1\right)^2+\frac{1}{4}\alpha^2\\
\hphantom{-w \conbalw}{} =\left(-w\ddw-\frac{\theta-1}{2}-\frac{w}{2}\right) \left(w\ddw-\frac{\theta-1}{2}-\frac{w}{2}\right)-\frac{1}{4}\left(\theta-1\right)^2+\frac{1}{4}\alpha^2,\\[2mm]
 \conbalw
 =\frac{1}{\sqrt{w}}\left(w\ddw-\frac{\alpha-1}{2}-\frac{w}{2}\right)\frac{1}{\sqrt{w}}\left(w\ddw+\frac{\alpha}{2}+\frac{w}{2}\right)
 -\frac{1}{2}\left(\theta-\alpha+1\right)\\
\hphantom{\conbalw}{} =\frac{1}{\sqrt{w}}\left(w\ddw+\frac{\alpha+1}{2}+\frac{w}{2}\right) \frac{1}{\sqrt{w}}\left(w\ddw-\frac{\alpha}{2}-\frac{w}{2}\right)-\frac{1}{2}\left(\theta-\alpha-1\right)\\
\hphantom{\conbalw}{} =\frac{1}{\sqrt{w}}\left(w\ddw+\frac{\alpha+1}{2}-\frac{w}{2}\right) \frac{1}{\sqrt{w}}\left(w\ddw-\frac{\alpha}{2}+\frac{w}{2}\right)-\frac{1}{2}\left(\theta+\alpha+1\right)\\
\hphantom{\conbalw}{} =\frac{1}{\sqrt{w}}\left(w\ddw-\frac{\alpha-1}{2}+\frac{w}{2}\right)\frac{1}{\sqrt{w}}\left(w\ddw+\frac{\alpha}{2}-\frac{w}{2}\right)-\frac{1}{2}\left(\theta+\alpha-1\right).
\end{gather*}

\subsection{Hermite operator} \label{subs-4}

In the Hermite equation (\ref{hyp4}) we prefer to use the parameter
\begin{gather*}\lambda={a}-\tfrac12.\end{gather*} The {\em $($standard$)$ Hermite operator} is
\begin{gather}
\cS_\lambda (w,\p_w) := \p_w^2-2w\p_w-2\lambda -1.\label{stan4}
\end{gather}
The {\em balanced Hermite operator} is
\begin{gather} \cS_\lambda ^\bal(w,\p_w):=\e^{-\frac{w^2}{2}}\cS_{\lambda}(w,\p_w)\e^{\frac{w^2}{2}} =\partial_w^2-w^2-2\lambda.\label{bal4}
\end{gather}

\emph{Discrete symmetries.} The following operators coincide with $\hbalw$:
\begin{alignat*}{5}
& w = z \colon \quad && \cS_{\lambda}^{\mathrm{bal}}(z, \partial_z),\quad\qquad && w = \ii z\colon \quad&& -\cS_{-\lambda}^{\mathrm{bal}}(z, \partial_z),&\\
& w =-z\colon \quad && \cS_{\lambda}^{\mathrm{bal}}(z, \partial_z),\quad \qquad && w = -\ii z\colon \quad && -\cS_{-\lambda}^{\mathrm{bal}}(z, \partial_z).&
\end{alignat*}

\emph{Transmutation relations:}
\begin{gather*}
(\ddw+w)\hbalw=\hbal{+1}{w}(\ddw+w),\\[1mm]
(\ddw-w )\hbalw=\hbal{-1}{w}(\ddw-w),\\[1mm]
\big({-}w\ddw-\lambda-w^2-\tfrac{5}{2}\big)\hbalw =\hbal{+2}{w}\big({-}w\ddw-\lambda-w^2-\tfrac{1}{2}\big),\\[1mm]
\big(w\ddw-\lambda-w^2+\tfrac{5}{2}\big)\hbalw =\hbal{-2}{w}\big(w\ddw-\lambda-w^2+\tfrac{1}{2}\big).
\end{gather*}

\emph{Factorizations:}
\begin{gather*}
-w^2\hbalw =\big(w\ddw-\lambda-\tfrac{3}{2}-w^2\big) \big({-}w\ddw-\lambda-\tfrac{1}{2}-w^2\big)- (\lambda+1 )^2+\tfrac{1}{4}\\
\hphantom{-w^2\hbalw}{}
=\big({-}w\ddw-\lambda+\tfrac{3}{2}-w^2\big) \big(w\ddw-\lambda+\tfrac{1}{2}-w^2\big)-(\lambda-1)^2+\tfrac{1}{4},\\[1mm]
\hbalw = (\ddw-w ) (\ddw+w )-2\lambda-1 = (\ddw+w ) (\ddw-w )-2\lambda+1.
\end{gather*}

\subsection[${}_0F_1$ operator]{$\boldsymbol{{}_0F_1}$ operator}\label{subs-5}

In the ${}_0F_1$ equation (\ref{hyp5}) we prefer to use the parameter
\begin{gather*} \alpha:=c-1.
\end{gather*}
The {\em $($standard$)$ ${}_0F_1$ operator} is
\begin{gather*}
\cF_\alpha (w,\p_w):=w\p_w^2+(\alpha +1)\p_w-1.
\end{gather*}
The {\em balanced ${}_0F_1$ operator} is
\begin{gather*}
 \cF_{\alpha}^\bal(w,\partial_w) :=w^{\frac{\alpha}{2}}\cF_{\alpha}(w,\p_w)w^{-\frac{\alpha}{2}} =\partial_ww\partial_w-1-\frac{\alpha^2}{4w}.
\end{gather*}

\emph{Discrete symmetries.} $\cF_\alpha (w,\p_w)$ does not change if we f\/lip the sign of $\alpha$.

\emph{Transmutation relations:}
\begin{gather*}
\frac{1}{\sqrt{w}}\left(w \ddw-\frac{\alpha}{2}\right)\bbalw=\bbal{+1}{w}\frac{1}{\sqrt{w}}\left(w\ddw-\frac{\alpha}{2}\right),\\[1mm]
\frac{1}{\sqrt{w}}\left(w \ddw+\frac{\alpha}{2}\right)\bbalw=\bbal{-1}{w}\frac{1}{\sqrt{w}}\left(w\ddw+\frac{\alpha}{2}\right).
\end{gather*}

\emph{Factorizations:}
\begin{gather*}
\bbalw = \frac{1}{\sqrt{w}}\left(w\ddw-\frac{\alpha-1}{2}\right)\frac{1}{\sqrt{w}}\left(w\ddw+\frac{\alpha}{2}\right)-1\\
\hphantom{\bbalw }{} =\frac{1}{\sqrt{w}}\left(w\ddw+\frac{\alpha+1}{2}\right)\frac{1}{\sqrt{w}}\left(w\ddw-\frac{\alpha}{2}\right)-1.
\end{gather*}

\section{Basic complex geometry}\label{s2}

In this section we describe basic notation for complex geometry.
Throughout the section, $\Omega$,~$\Omega_1$,~$\Omega_2$ are open subsets of $\cc^n$ or, more generally, complex manifolds. We will write $\cc^\times$ for the multiplicative group $\cc\backslash\{0\}$.

We will write $\cA(\Omega)$ for the set of holomorphic functions on $\Omega$. $y=(y^1,\dots,y^n)$ will denote generic coordinates on $\Omega$. We will write $\cA^\times(\Omega)$ for the set of nowhere vanishing holomorphic functions on~$\Omega$.
\subsection{Vector f\/ields}

Let $\hol(\Omega)$ denote the Lie algebra of holomorphic vector f\/ields on $\Omega$. Every $A\in\hol(\Omega)$ can be identif\/ied with the dif\/ferential operator{\samepage
\begin{gather*}Af(y)=\sum_i A^i(y)\partial_{y^i}f(y),\qquad f\in\cA(\Omega),\end{gather*}
where $A^i\in\cA(\Omega)$, $i=1,\dots,n$.}

We will denote by $\cA\rtimes\hol(\Omega)$ the Lie algebra of 1st order dif\/ferential operators on~$\Omega$ with holomorphic coef\/f\/icents. Such operators can be written as
\begin{gather*}(A+M)f(y):=\sum_i A^i(y)\partial_{y^i}f(y)+M(y)f(y),\end{gather*}
where $A\in\hol(\Omega)$ and $M\in\cA(\Omega)$.

Let $\fg$ be a Lie subalgebra of $\hol(\Omega)$. A linear function $\fg\ni A\mapsto M_A\in\cA(\Omega)$ satisfying
\begin{gather*} A_{1} M_{A_2}-A_{2}M_{A_1}=M_{[A_1,A_2]}\end{gather*}
will be called a {\em cocycle for $\fg$}. Every cocycle together with $\eta\in\cc$ determines a~homomorphism
\begin{gather*}\fg\ni A\mapsto A+\eta M_A\in\cA\rtimes\hol(\Omega).\end{gather*}

\subsection{Point transformations}\label{Point transformations}

The set of biholomorphic maps $\Omega_1\to\Omega_2$ will be denoted $\Hol(\Omega_1,\Omega_2)$. We set $\Hol(\Omega):=\Hol(\Omega,\Omega)$.

Let ${\alpha}\in\Hol(\Omega_1,\Omega_2)$. The {\em transport} of functions, vector f\/ields, etc. by the map $\alpha$ will be also denoted by $\alpha$. More precisely, for $f\in\cA(\Omega_1)$ we def\/ine ${\alpha}f\in\cA(\Omega_2)$ by
\begin{gather*}({\alpha}f)(y):=f\big({\alpha}^{-1}(y)\big).\end{gather*}
For $A\in\hol(\Omega_1)$, ${\alpha}(A)\in\hol(\Omega_2)$ is def\/ined as
\begin{gather*}{\alpha}(A):=\alpha A \alpha^{-1}.\end{gather*}

If $m\in\cA^\times(\Omega_2)$, then we have a map $m{\alpha}\colon \cA(\Omega_1)\to\cA(\Omega_2)$ given by
\begin{gather*}(m{\alpha} f)(y):=m(y)f\big({\alpha}^{-1}(y)\big).\end{gather*}
$\cA^\times\rtimes\Hol(\Omega_1,\Omega_2)$ will denote the set of transformations $\cA(\Omega_1)\to\cA(\Omega_2)$ of this form. Clearly, $\cA^\times\rtimes\Hol(\Omega )$ is a group.

Suppose that $G$ is a subgroup of $\Hol(\Omega)$. A family $G\ni {\alpha}\mapsto m_{\alpha}\in\cA^\times(\Omega)$ satisfying
\begin{gather*}m_{{\alpha}_2}(y)m_{{\alpha}_1}\big({\alpha}_2^{-1}(y)\big)=m_{{\alpha}_2{\alpha}_1}(y),\qquad {\alpha}_1,{\alpha}_2\in G,\qquad y\in\Omega,\end{gather*}
will be called a {\em cocycle for $G$}. Every cocycle together with $\eta\in\zz$ determines a homomorphism
\begin{gather}
G\ni {\alpha}\mapsto m_{\alpha}^\eta {\alpha}\in\cA^\times\rtimes\Hol(\Omega).\label{biva}\end{gather}

\subsection{Local cocycles}\label{Local cocycles}

Unfortunately, the above def\/inition of a cocycle on a group is too rigid for our purposes. Below we introduce a weaker version of this concept, which we will be better adapted to our goals.

As before, we assume that $G$ is a subgroup of $\Hol(\Omega)$. Besides, we f\/ix $\Omega_0$ open in $\Omega$. For $\alpha\in G$ we will write
\begin{gather} \Omega_0^\alpha:=\Omega_0\cap\alpha(\Omega_0).\label{skip}\end{gather}
Furthermore, we suppose that to every $\alpha\in G$ we associate $m_\alpha\in\cA^\times\big(\Omega_0^\alpha\big)$ satisfying{\samepage
\begin{gather*}
m_{\alpha_2}(y)m_{\alpha_1}\big(\alpha_2^{-1}(y)\big)=m_{\alpha_2\alpha_1}(y),
\qquad \alpha_1,\alpha_2\in G,\qquad y\in\Omega_0\cap\alpha_2(\Omega_0)\cap
\alpha_2\circ\alpha_1(\Omega_0).\end{gather*}
Then $G\mapsto m_\alpha$ will be called a {\em local cocycle for $G$ based on~$\Omega_0$}.}

Let $p\in\cA^\times(\Omega_0)$. Then
\begin{gather} m_\alpha(y):=\frac{p(y)}{p\big(\alpha^{-1}(y)\big)},\qquad y\in \Omega_0^\alpha\label{coc}\end{gather}
is a (trivial) example of a local cocycle based on~$\Omega_0$. Note that if $p$ cannot be extended to a~holomorphic function on the whole $\Omega$, then~(\ref{coc}) cannot be extended to a true cocycle.

Let $\eta\in\zz$. For any $\alpha\in G$ we can def\/ine the map
\begin{gather}
m_\alpha^\eta\alpha\in\cA^\times\rtimes \Hol\big(\Omega_0^{\alpha^{-1}},\Omega_0^\alpha\big).\label{biva1}\end{gather}
For $\alpha_1,\alpha_2\in G$ and $\eta\in\zz$ we have the following weak form of the chain rule:
\begin{gather*}
\big(m_{\alpha_2}^\eta\alpha_2\circ m_{\alpha_1}^\eta\alpha_1\big)(y)=
m_{\alpha_2{\circ}\alpha_1}^\eta\alpha_2{\circ}\alpha_1(y),\qquad
y\in\Omega_0\cap\alpha_2(\Omega_0)\cap \alpha_2{\circ}\alpha_1(\Omega_0).
\end{gather*}
It will be convenient have a special notation for such a collection of maps~(\ref{biva1}): We will write that
\begin{gather*} G\ni\alpha\mapsto m_{\alpha}^\eta
 {\alpha}\mathop{\in}_\loc\cA^\times\rtimes\Hol(\Omega_0)\end{gather*}
is a {\em local representation of $G$}.

\subsection{Half-integer powers of a cocycle}

For non-integer exponents the power function is unfortunately multivalued. Because of that, strictly speaking, $\eta\not\in\zz$ should not be allowed in~(\ref{biva}). However, we will be forced to consider situations when $\eta$ is a half integer. This can be handled by the following formalism.

The non-identity element of the group $\zz_2$ acts on $\cc^\times$ by $\cc^\times\ni a\mapsto-a\in\cc^\times$. This def\/ines $\cc^\times/\zz_2$, which is the space of pairs of non-zero complex numbers dif\/fering by a~sign.

Let $\eta\in\frac12+\zz$. Then for any $a\in\cc^\times$, the power $a^\eta$ can be interpreted as an element in~$\cc^\times/\zz_2$.

Let us restrict our attention to $\Omega$ that are simply connected. We then def\/ine
 \begin{gather}
 \cA^\times(\Omega)/\zz_2:= \big\{(f,-f)\colon f\in\cA^\times(\Omega)\big\},\label{ropo}\end{gather}
 If $f\in\cA^\times(\Omega)$, then $f^\eta$ is well def\/ined as an element of $\cA^\times(\Omega)/\zz_2$.

\begin{Remark} If $\Omega$ is not simply connected, then on the left hand side of~(\ref{ropo}) instead of $\Omega$ we need to put the double cover of $\Omega$. Then $f^\eta$ is still well def\/ined. However we will not use this construction.
\end{Remark}

Let us go back to the setup of Section~\ref{Point transformations}. We can then def\/ine $m_\alpha^\eta\in\cA^\times(\Omega)/\zz_2$. Therefore, (\ref{biva})
can be interpreted as a group of transformations of $\cA^\times(\Omega)/\zz_2$.

A similar remark applies to Section~\ref{Local cocycles}.

\subsection{Generalized symmetries}
Let $\cC$ be a linear dif\/ferential operator on a complex manifold $\Omega$. Let $\alpha\in\Hol(\Omega)$. We say that it is a {\em symmetry} of $\cC$ if\/f
\begin{gather*}\alpha\cC=\cC \alpha.\end{gather*}

Let $m^\sharp,m^\flat\in\cA^\times(\Omega)$. Def\/ine a pair of transformations in $\cA^\times\rtimes\Hol(\Omega)$:
\begin{gather*} \alpha^\sharp :=m^\sharp \alpha, \qquad \alpha^\flat :=m^\flat \alpha.\end{gather*}
We say that a pair $(\alpha^\sharp,\alpha^\flat)$ is a {\em generalized symmetry} of $\cC$ if
\begin{gather*}\alpha^\flat\cC=\cC \alpha^\sharp.\end{gather*}
Clearly, the kernel of $\cC$ is invariant wrt the action of $\alpha^\sharp$:
\begin{gather*}\cC f=0\qquad\text{implies}\quad\cC \alpha^\sharp f=0.\end{gather*}
Generalized symmetries of $\cC$ form a group.

Let $A\in\hol(\Omega)$. We say that it is an {\em infinitesimal symmetry} of $\cC$ if\/f
\begin{gather*}A\cC=\cC A.\end{gather*}

Let $M^\sharp,M^\flat\in\cA(\Omega)$. One can also consider a pair of operators in $\cA\rtimes\hol(\Omega)$
\begin{gather*} A^\sharp := A+M^\sharp,\qquad A^\flat:=A+M^\flat.\end{gather*}
We say that a pair $(A^\sharp,A^\flat)$ is a {\em generalized infinitesimal symmetry} of $\cC$ if
\begin{gather*}A^\flat\cC=\cC A^\sharp.\end{gather*}
Clearly, the kernel of $\cC$ is invariant wrt the action of $A^\sharp$:
\begin{gather*}\cC f=0\qquad \text{implies}\quad \cC A^\sharp f=0.\end{gather*}
Inf\/initesimal generalized symmetries of $\cC$ form a Lie algebra.

\section{Line bundles}\label{s3}

\subsection{Scaling}\label{bund}

A holomorphic bundle $\pi\colon \cV\to\cY$ is called a {\em line bundle} if its f\/ibers are modelled on $\cc^\times$. $\cV$ is equipped with {\em scaling}, a homomorphism $\cc^\times\ni s\mapsto\lambda_s\in\Hol(\cV)$ preserving the f\/ibers, that is, satisfying $\pi\lambda_s=\pi$. The vector f\/ield obtained by dif\/ferentiating $\lambda_s$ is called the {\em vertical vector field} and denoted $V$:
\begin{gather*}\frac{\rmd}{\rmd s}\lambda_s\Big|_{s=1}=:V.\end{gather*}
For $v\in\cV$, we will often simply write $sv$ instead of $\lambda_s v$. We will also write $s=\frac{\lambda_s(v)}{v}$.

Let $\cY_0\subset \cY$ be open. A {\em section based on} $\cY_0$ is a~holomorphic map $\gamma\colon \cY_0\to\cV$ such that $\pi\circ\gamma=\id$. Every section based on $\cY_0$ determines a~trivialization of $\pi^{-1}(\cY_0)$
\begin{gather*}\cY_0\times\cc^\times\ni(y,s)\mapsto s\gamma(y)\in\pi^{-1}(\cY_0).\end{gather*}

\subsection{Vector f\/ields on a line bundle}

Let $\hol^{\cc^\times}(\cV)$ denote the Lie algebra of scaling invariant vector f\/ields, that is,
\begin{gather*}\hol^{\cc^\times}(\cV):=\big\{B\in\hol(\cV)\colon \lambda_s B=B\lambda_s,\; s\in\cc^\times\big\}.\end{gather*}
Let $B\in\hol^{\cc^\times}(\cV)$. Then $B$ determines a unique element of $\hol(\cY)$, which will be denoted $B^\diamond$.

Let $\gamma$ be a section based on $\cY_0$. $B^\diamond$ can be transported by $\gamma$ onto $\gamma(\cY_0)$. Thus we obtain two vector f\/ields on $\gamma(\cY_0)$: $ B\big|_{\gamma(\cY_0)} $ and $\gamma\big( B^\diamond\big)$.
\begin{Proposition}\label{coc1}
For any $v\in \gamma(\cY_0)$, $ B(v)-\gamma( B^\diamond)(v)$ is parallel to $V(v)$. Therefore, there exists $M_B^\gamma\in\cA(\cY_0)$ such that
\begin{gather} B\big(\gamma(y)\big)=M_B^\gamma(y) V \big(\gamma(y)\big)+\gamma\big( B^\diamond\big)(y)
,\qquad y\in\cY_0.\label{prio7}\end{gather}
Moreover,
\begin{gather*}\hol^{\cc^\times}(\cV)\ni B\mapsto M_B^\gamma\end{gather*} is a cocycle.
Hence, for any $\eta\in\cc$,
\begin{gather}\hol^{\cc^\times}(\cV)\ni B\mapsto B^{\gamma,\eta}:=B^\diamond+ \eta M_B^\gamma\in\cA\rtimes\hol(\cY_0)\label{nota1}\end{gather}
is a representation of the Lie algebra of scaling invariant vector fields.
\end{Proposition}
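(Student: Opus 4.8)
The plan is to extract each assertion from the defining relation~(\ref{prio7}) by direct computation. First I would establish that $B(v)-\gamma(B^\diamond)(v)$ is vertical for $v\in\gamma(\cY_0)$. Both $B$ and $\gamma(B^\diamond)$ are vector fields on $\gamma(\cY_0)$ that project under $\pi_*$ to the same vector field $B^\diamond$ on $\cY_0$, since $\gamma(B^\diamond)$ is by construction the $\gamma$-transport of $B^\diamond$ and $B^\diamond$ is by definition the push-forward of $B$ (this uses $\pi\circ\gamma=\id$, so $\pi_*\gamma_* = \id$). Hence their difference lies in $\ker\pi_*$, which at each point of the fiber is one-dimensional and spanned by $V$. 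This gives a unique holomorphic coefficient $M_B^\gamma\in\cA(\cY_0)$ with~(\ref{prio7}); holomorphy follows because $V$ is nonvanishing and both vector fields are holomorphic, so the coefficient is a ratio of holomorphic quantities with nonvanishing denominator.

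Next I would verify the cocycle identity. Here the key structural facts are: the map $B\mapsto B^\diamond$ is a Lie algebra homomorphism $\hol^{\cc^\times}(\cV)\to\hol(\cY)$ (push-forward of $\pi$-related vector fields respects brackets), and the vertical field $V$ commutes with every scaling-invariant $B$, since $V$ generates the scaling action $\lambda_s$ and scaling-invariance of $B$ means exactly $[V,B]=0$. Now apply $B_1$ to the scalar function $M_{B_2}^\gamma$ and vice versa. Working along $\gamma(\cY_0)$ and using~(\ref{prio7}) for $B_1$, $B_2$ and $[B_1,B_2]$, one computes $[B_1,B_2]$ acting on a section: the terms $\gamma(B_1^\diamond)$ and $\gamma(B_2^\diamond)$ contribute $\gamma([B_1^\diamond,B_2^\diamond]) = \gamma([B_1,B_2]^\diamond)$ by the homomorphism property, while the cross terms involving $V$ reorganize — using $[V,\cdot]=0$ and that $V$ as a vertical field kills functions pulled back from $\cY_0$ — to leave precisely $\big(B_1^\diamond M_{B_2}^\gamma - B_2^\diamond M_{B_1}^\gamma\big)V$ as the vertical part. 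Comparing with~(\ref{prio7}) for $[B_1,B_2]$ yields $B_1 M_{B_2}^\gamma - B_2 M_{B_1}^\gamma = M_{[B_1,B_2]}^\gamma$, which is the cocycle condition from Section~\ref{s2} (note $A_i M_{A_j}$ there means $A_i^\diamond$ acting as a first-order operator, consistent with how $B_i$ acts on functions on $\cY_0$).

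Finally, the last assertion — that $B\mapsto B^{\gamma,\eta}=B^\diamond+\eta M_B^\gamma$ is a Lie algebra representation — is immediate once the cocycle identity is in hand: compute $[B_1^\diamond+\eta M_{B_1}^\gamma, B_2^\diamond+\eta M_{B_2}^\gamma]$ in $\cA\rtimes\hol(\cY_0)$. The bracket expands as $[B_1^\diamond,B_2^\diamond] + \eta\big(B_1^\diamond M_{B_2}^\gamma - B_2^\diamond M_{B_1}^\gamma\big)$ (the $\eta^2$ term vanishes since multiplication operators commute), which equals $[B_1,B_2]^\diamond + \eta M_{[B_1,B_2]}^\gamma = [B_1,B_2]^{\gamma,\eta}$. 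This is exactly the statement made in Section~\ref{s2} that a cocycle plus $\eta\in\cc$ determines a homomorphism, applied to the subalgebra $\fg = \{B^\diamond : B\in\hol^{\cc^\times}(\cV)\}\subseteq\hol(\cY_0)$.

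\textbf{Main obstacle.} The only genuinely delicate point is the bookkeeping in the cross terms of the cocycle computation: one must carefully track which terms are vertical and which project down, and use $[V,B]=0$ together with $V f = 0$ for $f$ pulled back along $\pi$. Everything else is formal manipulation of $\pi$-related vector fields. One should also keep the domains straight — all identities are read on the open set $\cY_0$ via the section $\gamma$ — but no new subtlety arises there beyond what is already set up.
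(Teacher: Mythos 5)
Your proposal is correct and follows essentially the same route as the paper: decompose $B$ along the section into a vertical part plus the transported $B^\diamond$, verify the cocycle relation by a bracket computation using $[V,B]=0$ and $\pi$-relatedness, and then invoke the general cocycle-plus-$\eta$ homomorphism remark for the last claim. The one step you flag as "bookkeeping" is exactly where the paper does something you only allude to: since a Lie bracket at a point of $\gamma(\cY_0)$ is not determined by the restrictions of $B_1$, $B_2$ to the section alone, one must first extend the pointwise identity (\ref{prio7}) off the section (which scaling invariance permits), and this is precisely what the paper's global decomposition $B=\tilde M_B^\gamma V+B^\gamma$ on $\cV_0$, with $B^\gamma$ tangent to every scaled section $s\gamma(\cY_0)$, accomplishes before comparing vertical parts.
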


\begin{proof} Every $B\in\hol(\cV_0)$ can be written uniquely as
\begin{gather} B=\tilde M_B^\gamma V+B^\gamma,\label{prio4}
\end{gather}
where $\tilde M_B^\gamma\in\cA(\cV_0)$ and for any $s\in\cc^\times$ the vector f\/ield $\cY_0\ni y\mapsto B^\gamma(s\gamma(y))$ is tangent to the section
$s\gamma(\cY_0)$. Assume now that $B\in\hol^{\cc^\times}(\cV_0)$. This means $[V,B]=0$, which is equivalent to
\begin{gather} \big(V \tilde M_B^\gamma\big) V+[V,B^\gamma]=0.\label{prio}\end{gather}
We also have
\begin{gather}
\gamma(B^\diamond)(y)=B^\gamma\big(\gamma(y)\big),\label{prio1}\end{gather}
and we set
\begin{gather} M_B^\gamma(y):=\tilde M_B^\gamma\big(\gamma(y)\big),\qquad y\in\cY_0.\label{prio2}\end{gather}
Restricting (\ref{prio4}) to $\gamma(\cY_0)$, using (\ref{prio1}) and setting (\ref{prio2}), we obtain~(\ref{prio7}).

Now let $B_1,B_2\in\hol(\cV_0)$. Replacing $B$ in (\ref{prio4}) with $[B_1,B_2]$, we can write
\begin{gather}
[B_1,B_2]=\tilde M_{[B_1,B_2]}^\gamma V+[B_1,B_2]^\gamma.\label{prio11}\end{gather}
If in addition $B_1,B_2\in\hol^{\cc^\times}(\cV_0)$, then using (\ref{prio}) with $B$ replaced with $B_1$ and $B_2$, we obtain
\begin{gather}
[B_1,B_2]=\big(B_1^\gamma\tilde M_{B_2}^\gamma-B_2^\gamma\tilde M_{B_1}^\gamma\big) V
+[B_1^\gamma,B_2^\gamma].\label{prio6}\end{gather}
Comparing (\ref{prio6}) with (\ref{prio11}), we obtain
\begin{gather}
B_1^\gamma\tilde M_{B_2}^\gamma-B_2^\gamma\tilde M_{B_1}^\gamma=\tilde M_{[B_1,B_2]}^\gamma.\label{prio3}\end{gather}
Restricting (\ref{prio3}) to the section $\gamma(\cY_0)$, using~(\ref{prio1}) and setting (\ref{prio2}), we obtain the cocycle relation
\begin{gather*}
B_1^\diamond M_{B_2}^\gamma-B_2^\diamond M_{B_1}^\gamma= M_{[B_1,B_2]}^\gamma.\tag*{\qed}
\end{gather*}
\renewcommand{\qed}{}
\end{proof}

\subsection{Point transformations of a line bundle}

Let $\Hol^{\cc^\times}(\cV)$ denote the group of scaling invariant biholomorphic maps of $\cV$, that is
\begin{gather*}\Hol^{\cc^\times}(\cV):=\big\{\alpha\in\Hol(\cV)\colon \alpha\lambda_s=\lambda_s\alpha,\; s\in\cc^\times\big\}.\end{gather*}
Let $\alpha\in\Hol^{\cc^\times}(\cV)$. Then $\alpha$ determines a unique element of $\Hol(\cY)$, which will be denoted by~${\alpha}^\diamond$.

Let $\gamma$ be a section over $\cY_0$. As in (\ref{skip}), we set $\cY_0^{\alpha^\diamond}:={\alpha^\diamond}(\cY_0)\cap\cY_0$. We def\/ine
$m_\alpha^\gamma\in\cA\big(\cY_0^{\alpha^\diamond}\big)$ by
\begin{gather*}m_\alpha^\gamma(y):=\frac{\gamma(y)} {\alpha\circ\gamma\circ(\alpha^\diamond)^{-1}(y)},\qquad y\in\cY_0^{\alpha^\diamond}.\end{gather*}

\begin{Proposition}\label{coc2}
\begin{gather*}\Hol^{\cc^\times}(\cV)\ni\alpha \mapsto m_\alpha^\gamma\end{gather*}
is a cocycle. Hence for any $\eta\in\zz$
\begin{gather} \Hol^{\cc^\times}(\cV)\ni \alpha\mapsto \alpha^{\gamma,\eta}:=
(m_\alpha^\gamma)^\eta\alpha^\diamond\mathop{\in}_\loc
\cA^\times\rtimes\Hol(\cY_0)\label{nota2}\end{gather}
is a local representation.
\end{Proposition}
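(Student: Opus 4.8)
The plan is to verify the cocycle identity for $\alpha\mapsto m_\alpha^\gamma$ directly from the definition, using only two elementary facts about line bundles, and then to read off the ``hence'' clause from the general construction of Section~\ref{Local cocycles}. Throughout I would abbreviate $\beta_i:=\alpha_i^\diamond$ and first note that $\alpha\mapsto\alpha^\diamond$ is a group homomorphism: $\alpha^\diamond$ is characterized by $\pi\circ\alpha=\alpha^\diamond\circ\pi$, so composing two such relations yields $(\alpha_2\alpha_1)^\diamond=\beta_2\beta_1$ by uniqueness. I would also check that $m_\alpha^\gamma$ genuinely lies in $\cA^\times(\cY_0^{\alpha^\diamond})$: for $y\in\cY_0^{\alpha^\diamond}$ both $\gamma(y)$ and $\alpha\circ\gamma\circ(\alpha^\diamond)^{-1}(y)$ are sent by $\pi$ to $y$ --- the latter because $\pi\circ\alpha=\alpha^\diamond\circ\pi$ --- so the ratio is a well-defined nowhere-vanishing element of $\cc^\times$ in the sense of the notation $s=\lambda_s(v)/v$ from Section~\ref{bund}, depending holomorphically on $y$.

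The one structural observation that does the work is that a scaling-invariant map preserves ratios inside a fiber: if $v,w$ lie in one fiber and $v=sw$, then $\alpha(v)=\alpha(\lambda_sw)=\lambda_s\alpha(w)=s\,\alpha(w)$ since $\alpha\lambda_s=\lambda_s\alpha$, hence $\alpha(v)/\alpha(w)=v/w$. Combined with the trivial identity $(a/b)(b/c)=a/c$ for three points over a common base point, this is all that is needed. Concretely, I would fix $\alpha_1,\alpha_2$ and $y\in\cY_0\cap\beta_2(\cY_0)\cap\beta_2\beta_1(\cY_0)$, set $z:=\beta_2^{-1}(y)$ and $w:=\beta_1^{-1}(z)=(\beta_2\beta_1)^{-1}(y)$ --- noting $z\in\cY_0^{\beta_1}$ and $y\in\cY_0^{\beta_2\beta_1}$, so every term below is defined --- and compute
\begin{gather*}
m_{\alpha_2}^\gamma(y)\,m_{\alpha_1}^\gamma(z)
=\frac{\gamma(y)}{\alpha_2\gamma(z)}\cdot\frac{\gamma(z)}{\alpha_1\gamma(w)}
=\frac{\gamma(y)}{\alpha_2\gamma(z)}\cdot\frac{\alpha_2\gamma(z)}{\alpha_2\alpha_1\gamma(w)}
=\frac{\gamma(y)}{\alpha_2\alpha_1\gamma(w)}=m_{\alpha_2\alpha_1}^\gamma(y),
\end{gather*}
where the middle step applies ratio-preservation under the scaling-invariant map $\alpha_2$ to the fiber over $z$, and the last step uses $(\beta_2\beta_1)^{-1}=\beta_1^{-1}\beta_2^{-1}$. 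This is exactly the local-cocycle identity for $\Hol^{\cc^\times}(\cV)$ based on $\cY_0$.

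For the ``hence'' clause there is nothing new to do: once $\alpha\mapsto m_\alpha^\gamma$ is known to be a local cocycle, (\ref{biva1}) together with the weak chain rule of Section~\ref{Local cocycles} shows that for any $\eta\in\zz$ the assignment $\alpha\mapsto(m_\alpha^\gamma)^\eta\alpha^\diamond$ is a local representation of $\Hol^{\cc^\times}(\cV)$ on $\cY_0$. I do not expect a genuine obstacle; the only thing that needs care is the domain bookkeeping --- checking on the stated overlap that $\beta_2^{-1}(y)$, $(\beta_2\beta_1)^{-1}(y)$ etc.\ really lie in $\cY_0$, so that the pointwise identity is an identity of honest elements of $\cA^\times$ of the correct open sets --- together with the repeated (but harmless) use of $\pi\circ\alpha=\alpha^\diamond\circ\pi$ to confirm that each pair of points whose ratio is formed lies over a common base point. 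The whole argument is the group-level mirror of the proof of Proposition~\ref{coc1}.
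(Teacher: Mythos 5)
Your argument is correct. Note that the paper itself states Proposition~\ref{coc2} without proof (only its infinitesimal counterpart, Proposition~\ref{coc1}, is proved, by splitting a scaling-invariant vector field into a vertical part and a part tangent to the sections $s\gamma$ and comparing commutators), so there is no printed proof to compare against; what you have written is exactly the omitted group-level verification. Your two ingredients do all the work and are justified: $\alpha\mapsto\alpha^\diamond$ is a homomorphism because $\pi\circ\alpha=\alpha^\diamond\circ\pi$ and $\pi$ is surjective, and a scaling-invariant $\alpha$ preserves fiberwise ratios since $\alpha\lambda_s=\lambda_s\alpha$; together with transitivity of ratios over a common base point, the telescoping computation gives precisely the local-cocycle identity of Section~\ref{Local cocycles}, with the correct domains (your bookkeeping that $\beta_2^{-1}(y)\in\cY_0^{\beta_1}$ and $(\beta_2\beta_1)^{-1}(y)\in\cY_0$ on the stated overlap is the only delicate point, and you handle it). Worth noting explicitly: you also need, and implicitly use, that each pair whose ratio you form lies over one base point (e.g.\ $\gamma(z)$ and $\alpha_1\gamma(w)$ both sit over $z$, and $\gamma(y)$, $\alpha_2\gamma(z)$, $\alpha_2\alpha_1\gamma(w)$ all sit over $y$), which again follows from $\pi\circ\alpha=\alpha^\diamond\circ\pi$; spelling out one such check would make the write-up airtight. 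The ``hence'' clause is indeed immediate from (\ref{biva1}) and the weak chain rule, since $\eta\in\zz$ causes no branch issues. In short: a clean, elementary fiberwise-ratio argument that correctly fills a gap the authors left to the reader, and it is the natural group-level mirror of their proof of Proposition~\ref{coc1}.
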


\subsection{Homogeneous functions of integer degree}\label{s1a}

As before, $\cY_0\subset\cY$ is open. We set $\cV_0:=\pi^{-1}(\cY_0)$. For $\eta\in\zz$, let $\Lambda^\eta\big(\cV_0\big)$ denote the space of holomorphic functions on $\cV_0$ homogeneous of degree $\eta$, that is, functions $k\in\cA\big(\cV_0)$ satisfying
 \begin{gather} k(sv)=s^\eta
 k(v),\qquad v\in\cV_0,\qquad s\in\cc^\times.\label{seciu}\end{gather}

Clearly, (\ref{seciu}) implies
\begin{gather} Vk=\eta k.\label{deg}\end{gather}

Let $\gamma$ be a section based on $\cY_0$. We then have an obvious map $\psi^{\gamma,\eta}\colon \Lambda^\eta(\cV_0)\to\cA(\cY_0)$: for $k\in \Lambda^\eta(\cV_0)$ we set
\begin{gather}\big(\psi^{\gamma,\eta} k\big)(y):=k\big( \gamma(y)\big),\qquad y\in\cY_0.\label{psi}\end{gather}

$\psi^{\gamma,\eta}$ is bijective and we can introduce its inverse, denoted $\phi^{\gamma,\eta}$, def\/ined for any $f\in\cA(\cY_0)$ by
\begin{gather} \big(\phi^{\gamma,\eta}f\big)\big(s\gamma(y)\big)=s^\eta f(y),\qquad s\in\cc^\times,
\qquad y\in\cY_0.\label{phi}\end{gather}

\begin{Proposition}\label{coc7} With the notation of \eqref{nota1} and \eqref{nota2},
\begin{gather}
B^{\gamma,\eta}:=\psi^{\gamma,\eta} B\phi^{\gamma,\eta}\in\cA\rtimes\hol(\cY_0),\nonumber\\ 
{\alpha}^{\gamma,\eta}:=\psi^{\gamma,\eta} {\alpha}\phi^{\gamma,\eta}\in\cA^\times\rtimes\Hol\big(\cY_1,\cY_2\big).\label{poiu2}
\end{gather}
\end{Proposition}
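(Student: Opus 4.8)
The plan is to unwind the composition $\psi^{\gamma,\eta}\,(\,\cdot\,)\,\phi^{\gamma,\eta}$ using that $\phi^{\gamma,\eta}$ sends $f\in\cA(\cY_0)$ to the unique $k\in\Lambda^\eta(\cV_0)$ with $k\circ\gamma=f$ (by \eqref{phi}), that $\psi^{\gamma,\eta}$ is its inverse (by \eqref{psi}), and that $B$, $\alpha$ act on functions on $\cV_0$ by the usual transport. First I would check that $B$ and $\alpha$ preserve homogeneity of degree $\eta$, so the composition makes sense: if $Vk=\eta k$ then $V(Bk)=B(Vk)=\eta\,Bk$ because $[V,B]=0$, and if $k(\lambda_sv)=s^\eta k(v)$ then $(\alpha k)(\lambda_sv)=k(\alpha^{-1}\lambda_sv)=k(\lambda_s\alpha^{-1}v)=s^\eta(\alpha k)(v)$ because $\alpha\lambda_s=\lambda_s\alpha$; invoking \eqref{deg} and \eqref{seciu}, $Bk$ and $\alpha k$ again lie in $\Lambda^\eta$ of the relevant restriction of $\cV_0$. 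It then suffices to identify the resulting operators on $\cA(\cY_0)$ with the ones already defined abstractly in \eqref{nota1} and \eqref{nota2}, since the latter are manifestly of the asserted form.

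For the vector field I would fix $f\in\cA(\cY_0)$, set $k:=\phi^{\gamma,\eta}f$, and compute $\big(\psi^{\gamma,\eta}Bk\big)(y)=(Bk)(\gamma(y))$ by inserting the pointwise decomposition \eqref{prio7} of $B$ along the section:
\begin{gather*}
(Bk)(\gamma(y))=M_B^\gamma(y)\,(Vk)(\gamma(y))+\big(\gamma(B^\diamond)\,k\big)(\gamma(y)).
\end{gather*}
The first summand is $\eta\,M_B^\gamma(y)\,f(y)$ by \eqref{deg} and $k\circ\gamma=f$; the second is $(B^\diamond f)(y)$, since the transported field $\gamma(B^\diamond)=\gamma B^\diamond\gamma^{-1}$ satisfies $\gamma(B^\diamond)\,k=\gamma\big(B^\diamond(k\circ\gamma)\big)=\gamma(B^\diamond f)$. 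Hence $\psi^{\gamma,\eta}B\phi^{\gamma,\eta}=B^\diamond+\eta\,M_B^\gamma$, the operator \eqref{nota1}, which lies in $\cA\rtimes\hol(\cY_0)$ because $B^\diamond\in\hol(\cY_0)$ and $M_B^\gamma\in\cA(\cY_0)$ by Proposition \ref{coc1}.

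For the point transformation, with $f$ and $k:=\phi^{\gamma,\eta}f$ as above, $\big(\psi^{\gamma,\eta}\alpha k\big)(y)=k\big(\alpha^{-1}(\gamma(y))\big)$. Since $\pi\circ\alpha^{-1}=(\alpha^\diamond)^{-1}\circ\pi$, the point $\alpha^{-1}(\gamma(y))$ sits in the fiber over $(\alpha^\diamond)^{-1}(y)$, so $\alpha^{-1}(\gamma(y))=s\,\gamma\big((\alpha^\diamond)^{-1}(y)\big)$ for a unique $s\in\cc^\times$; applying $\alpha$ and using $\alpha\lambda_s=\lambda_s\alpha$ gives $\gamma(y)=s\,\alpha\big(\gamma((\alpha^\diamond)^{-1}(y))\big)$, i.e.\ $s=m_\alpha^\gamma(y)$ by the definition of $m_\alpha^\gamma$ given before Proposition \ref{coc2}. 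Homogeneity of $k$ then yields $k\big(\alpha^{-1}(\gamma(y))\big)=\big(m_\alpha^\gamma(y)\big)^\eta f\big((\alpha^\diamond)^{-1}(y)\big)$, i.e.\ $\psi^{\gamma,\eta}\alpha\phi^{\gamma,\eta}=(m_\alpha^\gamma)^\eta\alpha^\diamond$, the transformation \eqref{nota2}, which is visibly in $\cA^\times\rtimes\Hol(\cY_1,\cY_2)$ with $\cY_1=(\alpha^\diamond)^{-1}(\cY_0)\cap\cY_0$ and $\cY_2=\cY_0^{\alpha^\diamond}$.

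The computations are short, so the only real work is bookkeeping, which is also where I expect the one genuine pitfall. I would be careful to restrict $B$ and $\alpha$ to $\cV_0=\pi^{-1}(\cY_0)$, to note that $\alpha(\,\cdot\,)$ is defined only on $\cV_0\cap\alpha(\cV_0)=\pi^{-1}(\cY_0^{\alpha^\diamond})$ and that $\gamma\circ(\alpha^\diamond)^{-1}$ makes sense only on $\cY_0^{\alpha^\diamond}$, and to apply consistently the conventions $(\alpha f)(y)=f(\alpha^{-1}(y))$ for functions and $\alpha(A)=\alpha A\alpha^{-1}$ for vector fields, so that the scalar in the second identity comes out as $m_\alpha^\gamma$ and not its inverse. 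The remaining routine points---that $\phi^{\gamma,\eta}f$ is genuinely holomorphic (clear in the trivialization of $\pi^{-1}(\cY_0)$ afforded by $\gamma$) and that for half-integer $\eta$ the factor $(m_\alpha^\gamma)^\eta$ is only determined in $\cc^\times/\zz_2$---are handled by the material already set up, in particular the subsection on half-integer powers of a cocycle.
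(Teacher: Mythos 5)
Your proof is correct, and it is essentially the argument the paper intends: Proposition~\ref{coc7} is stated there without proof, being regarded as an immediate unwinding of the definitions \eqref{psi}, \eqref{phi}, the decomposition \eqref{prio7} from Proposition~\ref{coc1}, and the definition of $m_\alpha^\gamma$ preceding Proposition~\ref{coc2}, which is exactly what you carry out, obtaining $\psi^{\gamma,\eta}B\phi^{\gamma,\eta}=B^\diamond+\eta M_B^\gamma$ and $\psi^{\gamma,\eta}\alpha\phi^{\gamma,\eta}=(m_\alpha^\gamma)^\eta\alpha^\diamond$ on the correct domains. Your attention to the domain $\cY_0^{\alpha^\diamond}$ and to the half-integer (and non-integer) $\eta$ case matches the paper's treatment in the surrounding subsections, so there is nothing to add.
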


\subsection{Homogeneous functions of non-integer degree}\label{s1b}

One can try to generalize the above construction to $\eta\in\cc\backslash\zz$. In this case, there is a problem with the def\/inition of functions homogeneous of degree $\eta$, because the power function is multivalued on $\cc^\times$. Therefore, we cannot use $\cV_0:=\pi^{-1}(\cY_0)$. Instead, let us we assume that
$\cV_0\subset\cV$ is open, connected, $\pi(\cV_0)=\cY_0$ and $\pi^{-1}(y)\cap\cV_0$ is simply connected for any $y\in\cY_0$. We say that $k\in\Lambda^\eta(\cV_0)$ if $k\in\cA\big(\cV_0)$ and
\begin{gather} k(sv)=s^\eta k(v),\qquad v,sv\in\cV_0,\qquad s\in\cc^\times.\label{seciu1}\end{gather}
Note that (\ref{seciu1}) is unambiguous, because, for any $y\in\cY_0$, on $\pi^{-1}(y)\cap \cV_0$ we have a~unique continuation of holomorphic functions. (\ref{deg}) still holds.

Let $\gamma$ be a section based on $\cY_0$ whose image is contained in $\cV_0$. $\psi^{\gamma,\eta}$ is still bijective and we can introduce its inverse, denoted $\phi^{\gamma,\eta}$, def\/ined for any $f\in\cA(\cY)$ by
\begin{gather*} \big(\phi^{\gamma,\eta}f\big)\big(s\gamma(y)\big)=s^\eta f(y),\qquad s\in\cc^\times,
\qquad y\in\cY_0,\qquad s\gamma(y)\in\cV_0.
\end{gather*}
With this def\/inition, (\ref{poiu2}) is still true.

\section{Complex Euclidean spaces}\label{s4}

\subsection{Linear transformations}

Let us f\/irst consider the vector space $\cc^n$ without the Euclidean structure.
The af\/f\/ine general linear Lie algebra $\cc^n\rtimes \gl(\cc^n)$ can be identif\/ied with the subalgebra of $\hol(\cc^n)$ spanned by
\begin{gather*} \partial_{y^j},\quad j=1,\dots,n,\qquad
 y^i\partial_{y^j},\quad i,j=1,\dots,n. \end{gather*}
 Similarly, the af\/f\/ine general linear group $\cc^n\rtimes \GL(\cc^n)$ is a subgroup of $\Hol(\cc^n)$.

We will have a special notation for {\em the generator of dilations}
\begin{gather*}
D_{\cc^n}:=\sum_{i=1}^ny^i\p_{y^i}.\end{gather*}
Obviously,
\begin{alignat*}{3}
& D_{\cc^n}B =BD_{\cc^n},\qquad && B\in\gl(\cc^n),&\\
& D_{\cc^n}\alpha=\alpha D_{\cc^n},\qquad && \alpha\in\GL(\cc^n).&
\end{alignat*}

Let $\sigma=(\sigma_1,\dots,\sigma_n)$ be a permutation of $\{1,\dots,n\}$. Then
\begin{gather*}\sigma\big(y^1,\dots, y^n\big):=\big(y^{\sigma_1^{-1}},\dots, y^{\sigma_n^{-1}}\big)\end{gather*}
def\/ines an element of $\GL(\cc^n)$. On the level of point transformations it acts as
\begin{gather*}(\sigma f)\big(y^{1},\dots,y^n\big):=
f\big(y^{\sigma_1},\dots,y^{\sigma_n}\big).\end{gather*}

\subsection{Bilinear scalar product}
Suppose that
\begin{gather*}\cc^n\ni y,x\mapsto \langle y|x\rangle=\sum_{i,j}g_{i,j}y^ix^{ j}\end{gather*}
is a nondegenerate symmetric bilinear form on~$\cc^n$ called the {\em scalar product}. Clearly, if we know the {\em square} of each vector \begin{gather*}\langle y|y\rangle=\sum_{i,j}g_{i,j}y^iy^{j},\end{gather*}
we have the complete information about the scalar product.

$[g^{ij}]$ will denote the inverse of $[g_{ij}]$.
The {\em orthogonal Lie algebra of $\cc^n$}, understood as a Lie subalgebra of $\hol(\cc^n)$, is def\/ined as
\begin{gather*}\so(\cc^n):=\big\{B\in \gl(\cc^n)\colon B\langle y|y\rangle =0\big\}.\end{gather*}
For $i,j=1,\dots,n$, def\/ine
\begin{gather*}B_{i,j}:=\sum_k(g_{j,k}y_i\p_{y_k}-g_{i,k}y_k\p_{y_j}).\end{gather*}
 $\{B_{i,j}\colon i<j\}$ is a basis of $\so(\cc^n)$. Clearly, $B_{i,j}=-B_{j,i}$, in particular $B_{i,i}=0$.

Likewise, recall that the {\em orthogonal} and the {\em special orthogonal group of $\cc^n$} is def\/ined as
\begin{gather*}
\mathrm{O}(\cc^n):=\big\{{\alpha}\in \GL(\cc^n)\colon \langle {\alpha}y|{\alpha}x\rangle=\langle y|x\rangle,\; y,x\in \cc^n\big\},\\
\SO(\cc^n):=\big\{\alpha\in\mathrm{O}(\cc^n)\colon \det\alpha=1\big\}.
\end{gather*}

 We def\/ine
\begin{gather*}\text{the Laplacian}\quad\Delta_{\cc^n}:=\sum_{i,j=1}^ng^{i,j}\partial_{y^i}\partial_{y^j}, \qquad \text{and} \\
\text{the Casimir operator}\quad\cC_{\cc^n}:=\frac{1}{2}\sum_{i,j=1}^ng^{i,k}g^{j,l}B_{i,j}B_{k,l}.
\end{gather*}
Clearly,
\begin{alignat*}{5}
& \Delta_{\cc^n}B=B\Delta_{\cc^n},\quad && B\in\cc^n\rtimes\so(\cc^n), \quad \qquad && \cC_{\cc^n}B=B\cC_{\cc^n},\quad && B\in\so(\cc^n), & \\
&\Delta_{\cc^n}\alpha=\alpha\Delta_{\cc^n},\quad && \alpha \in\cc^n\rtimes{\rm O}(\cc^n), \quad \qquad && \cC_{\cc^n}\alpha=\alpha\cC_{\cc^n},\quad && \alpha \in{\rm O}(\cc^n).&
\end{alignat*}
Note the identity
\begin{gather}
\Delta_{\cc^n}=\frac{1}{\langle y|y\rangle}\big(D_{\cc^n}^2+(n-2)D_{\cc^n}+\cC_{\cc^n}\big).\label{polar}\end{gather}
We will denote by $\cS^{n-1}(R)$ the {\em $($complex$)$ sphere in $\cc^n$ of squared radius~$R$}, that is
\begin{gather*}\cS^{n-1}(R):=\big\{y\in\cc^n\colon \langle y|y\rangle=R\big\}.\end{gather*}
We also introduce the {\em null quadric}
\begin{gather} \cV^{n-1}:=\cS^{n-1}(0)\backslash\{0\}.\label{null}\end{gather}

\subsection{Split coordinates}

The coordinates that we describe in this subsection are particularly convenient for the analysis of $\so(\cc^n)$ and $\mathrm{O}(\cc^n)$. Let $n=2m$ if $n$ is even and $n=2m+1$ if $n$ is odd. Set
\begin{gather*}
I_n:=\begin{cases}\{-1,1,\dots,-m,m\},&\text{for even} \ n,\\
\{0,-1,1,\dots,-m,m\},&\text{for odd} \ n.
\end{cases}\end{gather*}
The coordinates in $\cc^n$ will be labelled by $I_n$, so that the square of $y=[y^i]_{i\in I_n}$ is given by
\begin{gather*}\langle y|y\rangle =\sum\limits_{i\in I_n}y_{i}y_{-i}
=\begin{cases}
\displaystyle \sum\limits_{i=1}^m2y_{-i}y_i,&\text{for even} \ n,\\
\displaystyle y_0^2+\sum\limits_{i=1}^m2y_{-i}y_i&\text{for odd} \ n.
\end{cases}\end{gather*}
Clearly, $g_{i,j}=g^{i,j}=\delta_{i,-j}$.

For $n=2m$, $\so(\cc^n)$ has a basis consisting of
\begin{alignat}{3}
& N_i:=B_{i-i}=y_{-i}\p_{y_{-i}}-y_{i}\p_{y_{i}},\qquad && j=1,\dots,m,& \label{posa1}\\
& B_{ij}:=y_{-i}\p_{y_{j}}-y_{-j}\p_{y_{i}},\qquad && 1\leq|i|<|j|\leq m.& \label{posa2}
\end{alignat}
For $n=2m+1$ we have to add
\begin{gather}
B_{0j}=y_{0}\p_{y_{j}}-y_{-j}\p_{y_{0}},\qquad |j|=1,\dots,m.\label{posa3}
\end{gather}

The subalgebra of $\so(\cc^n)$ spanned by (\ref{posa1}) is its {\em Cartan algebra}. (\ref{posa2}), and in the odd case also (\ref{posa3}), are its {\em root operators}:
\begin{gather*}
[N_k,B_{i,j}]=(\sgn(i)\delta_{k,|i|}+\sgn(j)\delta_{k,|j|})B_{i,j},\qquad [N_k,B_{0,j}]=\sgn(j)\delta_{k,|j|}B_{i,j}.
\end{gather*}

We have
\begin{gather*}
\Delta_{\cc^n}=\sum_{i\in I_n}\partial_{y_{i}}\partial_{y_{-i}},\qquad \cC_{\cc^n}=\frac{1}{2}\sum_{i,j\in I_n}B_{i,j}B_{{-i,-j}}.
\end{gather*}

\subsection{Weyl symmetries}\label{Weyl symmetries}

In our applications of the group invariance we will restrict ourselves only to the so-called ``Weyl symmetries''. It will be convenient to consider ``Weyl symmetries'' contained not only in~$\SO(\cc^n)$, but in the whole $\mathrm{O}(\cc^n)$.

Permutations of $I_{2m}$ that preserve its decomposition into pairs $\{-1,1\},\dots,\{-m,m\}$ correspond to a subgroup of $\mathrm{O}(2m)$ that will be denoted $\W(\cc^{2m})$. It is isomorphic to ${\zz}_2^m\rtimes S_m$. It is generated by two kinds of transformations: $\tau_j$, $j=1,\dots,m$, which swap the elements of the $j$th pair, and permutations from~$S_m$, which permute the pairs. If $\sigma=(\sigma_1,\dots,\sigma_m)\in S_m$, then
\begin{gather*}
\sigma f(y_{-1},y_{1},\dots,y_{-m},y_m):=f(y_{-\sigma_1},y_{\sigma_1},\dots,y_{-\sigma_m},y_{\sigma_m}).
\end{gather*}
For $j=1,\dots,n$,
\begin{gather*}
\tau_j f(y_{-1},y_{1},\dots,y_{-j},y_j,\dots):=f(y_{-1},y_{1},\dots,y_{j},y_{-j},\dots).
\end{gather*}

We have{\samepage
 \begin{alignat*}{3}
& \sigma B_{i,j}\sigma^{-1}=B_{\sigma_i,\sigma_j},\qquad&& \tau_k B_{i,j}\tau_k^{-1}=(-1)^{\delta_{|i|,k}+\delta_{|j|,k}}B_{i,j},&\\
& \sigma N_j\sigma^{-1}=N_{\sigma_j},\qquad&& \tau_k N_{j}\tau_k^{-1} =(-1)^{\delta_{j,k}}N_j.&
\end{alignat*}}

Using $\cc^{2m+1}=\cc\oplus\cc^{2m}$, we embed $\W(\cc^{2m})$ in $\mathrm{O}(\cc^{2m+1})$. We also introduce a transformation $\tau_0\in \mathrm{O}(\cc^{2m+1})$ given by
\begin{gather*}
\tau_0f(y_0,y_{-1},y_1,\dots,y_{-m},y_m) :=f(-y_0,y_{-1},y_1,\dots,y_{-m},y_m).
\end{gather*}
Clearly, $\tau_0$ commutes with $\W(\cc^{2m})$. The group $\W(\cc^{2m+1})$ is def\/ined as the group generated by~$\W(\cc^{2m})$ and~$\tau_0$, and is isomorphic to $\zz_2\times\zz_2^m\rtimes S_m$. We have for $i,j=1,\dots,m$
\begin{gather*}
\tau_0 B_{0,j}\tau_0^{-1}=-B_{0,j},\qquad \tau_0 B_{i,j}\tau_0^{-1}=B_{i,j},\qquad \tau_0N_j\tau_0^{-1}=N_j.
\end{gather*}

In both even and odd cases $\W(\cc^n)$ acts as a group of automorphisms
of $\so(\cc^n)$
 leaving invariant the Cartan algebra. To compute the determinant of
 elements of $\W(\cc^n)$ it suf\/f\/ices to remember that
$\det\sigma=1$ for $\sigma\in S_m$ and $\det\tau_j=-1$.

\section{Conformal invariance}
\label{Conformal invariance}

The main subject of this section is the description of generalized (inf\/initesimal) symmetries of the Laplace equation
\begin{gather} \Delta_{\cc^n}f=0.\label{lapl}\end{gather}
We will see in particular that the Lie algebra of generalized symmetries is $\so(\cc^{n+2})$. We will see that it is convenient to start the description of these symmetries from the space $\cc^{n+2}$, which we will call the {\em extended space}. The space $\cc^n$ will be embedded inside $\cc^{n+2}$ as a section of the null quadric. We will see how the Laplacian $\Delta_{\cc^{n+2}}$ reduces to the Laplacian~$\Delta_{\cc^n}$.

\subsection{Conformal invariance of Riemannian manifolds}

Suppose that a (complex) manifold $\Omega$ is equipped with a nondegenerate holomorphic covariant 2-tensor f\/ield $g$, called the {\em $($complex$)$ metric tensor}. We will say that $(\Omega,g)$ is a~{\em $($complex$)$ Riemannian space}.

Thus if $A,B\in\hol(\Omega)$, then we have a holomorphic function
\begin{gather*}\Omega\ni y\mapsto g(A,B)(y)=g_{i,j}(y)A^i(y)B^{ j}(y)\end{gather*}
called the {\em scalar product}.

Let ${\alpha}\in\Hol(\Omega)$. We can transport $g$ by ${\alpha}$:
\begin{gather*}{\alpha}(g)(A,B):=g\big({\alpha}^{-1}(A),{\alpha}^{-1}(B)\big).\end{gather*}
We say that ${\alpha}$ is {\em conformal} if there exists $m_{\alpha}\in\cA^\times(\Omega)$ such that
\begin{gather*}{\alpha} (g)=m_{\alpha} g.\end{gather*}
Let $\Cf(\Omega)$ denote the group of conformal maps on $(\Omega,g)$. One can check that{\samepage
\begin{gather*}\Cf(\Omega)\times\Omega\ni({\alpha},y)\mapsto
m_{\alpha}(y)\in\cA^\times(\Omega)\end{gather*} is a cocycle.}

Let $C\in\hol(\Omega)$. The {\em Lie derivative} of $g$ in the direction of $C$ is denoted $Cg$ and def\/ined by
\begin{gather*}(Cg)(A,B):=C (g(A,B) )-g([C,A],B)-g(A,[C,B]).\end{gather*}
We say that $C$ is {\em infinitesimally conformal} if there exists $M_C\in\cA(\Omega)$ such that
\begin{gather*}Cg=M_Cg.\end{gather*}
Let $\cf(\Omega)$ denote the Lie algebra of inf\/initesimally conformal f\/ields. One can check that \begin{gather*}
\cf(\Omega)\times\Omega\ni(C,y)\mapsto M_C(y)\in\cA(\Omega)\end{gather*}
 is a cocycle.

We say that a manifold $\Omega$ has a {\em conformal structure}, if it is covered by a family of open sets~$\Omega_i$ equipped with bilinear scalar products~$g_i$ such that on $\Omega_i\cap\Omega_j$ we have
\begin{gather*}g_i=\rho_{i,j} g_j\end{gather*}
for some $\rho_{i,j}\in\cA^\times(\Omega_i\cap\Omega_j)$. Clearly, a~Riemannian structure on $\Omega$ is not necessary to def\/ine $\Cf(\Omega)$ and $\cf(\Omega)$~-- we need only a conformal structure on~$\Omega$.

\subsection{Null quadric}

Consider the extended space, that is, the complex Euclidean space $\cc^{n+2}$. The central role will be played by the representations
\begin{gather}
\so\big(\cc^{n+2}\big)\ni B \mapsto B\in \hol\big(\cc^{n+2}\big),\label{popo1}\\
\mathrm{O}\big(\cc^{n+2}\big)\ni {\alpha} \mapsto {\alpha}\in \Hol\big(\cc^{n+2}\big),\label{popo2}
\end{gather}
and the symmetry
\begin{alignat}{3}
&B\Delta_{\cc^{n+2}} =\Delta_{\cc^{n+2}}B,\qquad&& B\in\so\big(\cc^{n+2}\big),&\label{syme1a}\\
&{\alpha}\Delta_{\cc^{n+2}}=\Delta_{\cc^{n+2}}{\alpha},\qquad&& {\alpha}\in \mathrm{O}\big(\cc^{n+2}\big).&\label{syme2a}
\end{alignat}

As in (\ref{null}), we introduce
\begin{gather*}\cV:=\big\{z\in\cc^{n+2} \colon z\neq0,\; \langle z|z\rangle=0\big\}\end{gather*}
called the {\em null quadric}.

Multiplication by $s\in\cc^\times$ preserves $\cV$. Therefore, we can def\/ine the {\em projective quadric}
\begin{gather*}\cY:=\cV/\cc^\times=\big\{\cc^\times z\colon z\in\cV\big\}.\end{gather*}
It is an $n$-dimensional complex manifold. Let $\pi\colon \cV\to\cY$ denote the natural projection. Clearly, $\cV$ is a complex line bundle over~$\cY$. As in Section~\ref{bund}, the multiplication by $s$ will be often denoted by $\cc^\times\ni s\mapsto \lambda_s\in\Hol(\cV)$ and the corresponding vertical vector f\/ield by $V\in\hol(\cV)$.

We can restrict (\ref{popo1}) and (\ref{popo2}) to $\cV$ and note that they are scaling invariant. Thus, we have natural embeddings
\begin{gather}
\so\big(\cc^{n+2}\big)\ni B\mapsto B\in\hol^{\cc^\times }(\cV),\label{pyo1}\\
\mathrm{O}\big(\cc^{n+2}\big) \ni {\alpha} \mapsto {\alpha} \in\Hol^{\cc^\times }(\cV).\label{pyo2}
\end{gather}
(Recall that $\hol^{\cc^\times }(\cV)$, resp.\ $\Hol^{\cc^\times }(\cV)$ denote the scaling invariant holomorphic vector f\/ields, resp.\ bijections). Therefore, (\ref{pyo1}) and (\ref{pyo2}) induce their actions on~$\cY$:
\begin{gather}
\so\big(\cc^{n+2}\big)\ni B\mapsto B^\diamond \in\hol(\cY),\label{so0}\\
\mathrm{O}\big(\cc^{n+2}\big)\ni {\alpha} \mapsto {\alpha}^\diamond \in\Hol(\cY).\label{SO0}
\end{gather}

\subsection{Conformal invariance of projective quadric}

Let $g$ denote the restriction of the metric tensor on $\cc^{n+2}$ to $\cV$. Note that the null space of $g$ is 1-dimensional and is spanned by the vertical f\/ield~$V$. In particular,
\begin{gather} g(V,A)=0,\qquad A\in\hol(\cV).\label{one1}\end{gather}
Moreover, the scaling scales the metric tensor:
\begin{gather}
\lambda_{s}(g) =s^2 g,\qquad s\in\cc^\times,\label{one2}\\
Vg =2g.\label{one3}
\end{gather}
Using (\ref{one1}), we can extend (\ref{one2}) to multiplication by nonconstant functions:
\begin{Proposition}\label{gfgh} Let $\cU$ be open in $\cY$, $m\in\cA^\times(\cU)$. Define $\lambda_m\in\Hol\big(\pi^{-1}(\cU)\big)$ by
\begin{gather*}\lambda_m(z):=m(\pi(z))z,\qquad z\in\pi^{-1}(\cU).\end{gather*}
Then $\lambda_{m}g=m^2g$. In particular, for any section $\gamma$, the restriction $\lambda_m\colon \gamma(\cU)\to m\circ\pi\gamma(\cU)$ is conformal.
\end{Proposition}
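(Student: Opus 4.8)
The plan is to compute $\lambda_m(g)$ directly from the definition of the transport of a tensor, reducing to a fiberwise statement and then using the three structural facts already recorded: $g(V,A)=0$ for all $A$, equation~\eqref{one2}, and equation~\eqref{one3}. First I would fix $z\in\pi^{-1}(\cU)$ and note that, since $m$ is constant along fibers, near $z$ the map $\lambda_m$ agrees, up to first order, with the scaling $\lambda_{s}$ at the value $s=m(\pi(z))$ \emph{plus} a correction whose derivative is vertical. More precisely, if $A\in\hol\big(\pi^{-1}(\cU)\big)$, then $\lambda_m^{-1}(A)$ differs from $\lambda_{s}^{-1}(A)$ (with $s$ frozen at $m(\pi(z))$) by a vector field that is a (function) multiple of $V$: the extra terms come from differentiating $m\circ\pi$, and since $\pi$ is constant along the fiber direction spanned by $V$, these extra terms point along $V$ when evaluated appropriately.

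The key step is therefore the following algebraic observation: for any $A,B\in\hol\big(\pi^{-1}(\cU)\big)$,
\begin{gather*}
\big(\lambda_m(g)\big)(A,B)=g\big(\lambda_m^{-1}A,\lambda_m^{-1}B\big)
=g\big(\lambda_{s}^{-1}A+f_AV,\ \lambda_{s}^{-1}B+f_BV\big),
\end{gather*}
for suitable holomorphic functions $f_A,f_B$, where $s=m\circ\pi$. Expanding the right-hand side by bilinearity and using $g(V,\,\cdot\,)=0$ from~\eqref{one1}, all terms containing $V$ drop out, leaving $g\big(\lambda_{s}^{-1}A,\lambda_{s}^{-1}B\big)=\big(\lambda_{s}(g)\big)(A,B)$. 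Now~\eqref{one2} (in its pointwise form, $\lambda_{s}(g)=s^2g$ at each point, with $s$ the local scaling factor) gives $\big(\lambda_{s}(g)\big)(A,B)=(m\circ\pi)^2\,g(A,B)=m^2g(A,B)$, which is the claim $\lambda_m g=m^2 g$. Alternatively, one can phrase the same computation infinitesimally: writing $m=\e^{h}$ locally with $h\in\cA(\cU)$, the flow generating $\lambda_m$ is $hV$ pulled back appropriately, and $(hV)g=h(Vg)=2hg$ by~\eqref{one3} together with the fact that $g(V,\cdot)=0$ kills the commutator terms $g([hV,A],B)$ modulo $g(A,B)$ contributions; integrating this gives $\lambda_m g=\e^{2h}g=m^2g$.

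For the final sentence: given a section $\gamma$, the metric induced on $\gamma(\cU)$ is $\gamma^*g$ and the metric on $m\circ\pi\,\gamma(\cU)$ is the restriction of $g$ to that section. Since $\lambda_m$ carries $\gamma(\cU)$ to $m\circ\pi\,\gamma(\cU)$ and satisfies $\lambda_m g=m^2g$ globally on $\pi^{-1}(\cU)$, restricting this identity to $\gamma(\cU)$ shows that $\lambda_m\colon\gamma(\cU)\to m\circ\pi\,\gamma(\cU)$ pulls the target metric back to $m^2$ times the source metric; since $m^2\in\cA^\times(\cU)$, this is exactly the statement that $\lambda_m$ is conformal between these two copies of $\cY$ (each with the metric inherited as a section of $\cV$).

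The main obstacle I anticipate is bookkeeping the ``correction is vertical'' claim carefully: one must check that the difference between $\lambda_m$ and the frozen scaling $\lambda_{s}$ genuinely produces only $V$-directional terms in $\lambda_m^{-1}A$, which rests on the geometric fact that the fibers of $\pi$ are precisely the orbits of $\lambda_\bullet$, i.e., are integral curves of $V$, so that any variation of $m\circ\pi$ along a fiber vanishes. Once that is clear, the rest collapses immediately because $g(V,\cdot)=0$ is such a strong annihilation property. No delicate estimates are involved; the content is entirely the interplay of~\eqref{one1}--\eqref{one3} with the fiber structure.
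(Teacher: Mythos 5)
Your argument is correct and is exactly the verification the paper intends: the paper states Proposition~\ref{gfgh} as an immediate consequence of~(\ref{one1})--(\ref{one3}), and your computation — splitting $\lambda_m^{-1}(A)$ into the frozen scaling part plus a vertical correction coming from differentiating $m\circ\pi$, killing the correction with $g(V,\cdot)=0$, and then applying $\lambda_s(g)=s^2g$ pointwise (or, infinitesimally, $\mathcal{L}_{hV}g=2hg$ and integrating) — is precisely the omitted proof. No discrepancy with the paper's route; the restriction-to-sections statement then follows by restricting the identity $\lambda_m g=m^2g$, as you say.
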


Let $\cU$ be open in $\cY$. Let $\gamma$ be a section over $\cU$. The tensor $g$ restricted to $\gamma(\cU)$ is non\-de\-ge\-nerate. We can transport it by $\gamma^{-1}$ onto $\cU$. This way we endow $\cU$ with a metric tensor.

For $i=1,2$, let $\cU_i$ be two open subsets of $\cY$ equipped with sections $\gamma_i$. Let $g_i$ be the corresponding complex Riemannian tensors. Obviously, there exists $\rho_{2,1}\in\cA^\times(\cU_1\cap\cU_2)$ such that
\begin{gather*}\gamma_2(y)=\rho_{2,1}(y)\gamma_1(y),\qquad y\in\cU_1\cap\cU_2.\end{gather*}
Therefore, by Proposition~\ref{gfgh},
\begin{gather*}g_2=\big(\gamma_2^{-1}\gamma_1\big) g_1=\rho_{2,1}^2 g_1.\end{gather*}

Cover $\cY$ with open subsets $\cU_i$, $i=1,\dots,N$, equipped with sections $\gamma_i$. Let $g_i$ be the corresponding Riemannian tensors on~$\cU_i$. Then on $\cU_i\cap\cU_j$ we have
\begin{gather*}g_j=\rho_{j,i}^2 g_i,\qquad \rho_{j,i}\in\cA^\times(\cU_i\cap\cU_j).\end{gather*}
This way we endow $\cY$ with a conformal structure. It is easy to see that it does not depend on the choice of the covering and sections.

\begin{Proposition} \eqref{so0} is infinitesimally conformal and \eqref{SO0} is conformal.\end{Proposition}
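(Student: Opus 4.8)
The plan is to exploit the structure already in place: we have the orthogonal group $\mathrm{O}(\cc^{n+2})$ and the orthogonal Lie algebra $\so(\cc^{n+2})$ acting on the null quadric $\cV$ by scaling-invariant maps (via \eqref{pyo1} and \eqref{pyo2}), hence descending to $\cY$ via $\diamond$ (via \eqref{so0}, \eqref{SO0}), and we have just equipped $\cY$ with a conformal structure built from the restriction $g$ of the ambient metric. The claim is that these descended actions respect the conformal structure. First I would recall that since $\mathrm{O}(\cc^{n+2})$ preserves the bilinear form $\langle\cdot|\cdot\rangle$ on $\cc^{n+2}$, each $\alpha\in\mathrm{O}(\cc^{n+2})$ satisfies $\alpha(g)=g$ exactly, where here $g$ still denotes the full ambient tensor; restricting to $\cV$, and noting from \eqref{pyo2} that $\alpha$ preserves $\cV$, we get $\alpha(g)=g$ for the restricted tensor too. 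Similarly, for $B\in\so(\cc^{n+2})$ the Lie derivative $Bg$ of the ambient tensor vanishes, and this persists on $\cV$.

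The core of the argument is then to transfer this from $\cV$ down to $\cY$ through a section. Fix a section $\gamma$ over an open $\cU\subset\cY$ and let $g_\gamma$ be the induced Riemannian tensor on $\cU$, i.e.\ $g_\gamma=\gamma^{-1}(g|_{\gamma(\cU)})$. Given $\alpha\in\mathrm{O}(\cc^{n+2})$, the descended map $\alpha^\diamond$ need not carry $\gamma(\cU)$ back into $\gamma(\cU)$; instead $\alpha\circ\gamma\circ(\alpha^\diamond)^{-1}$ is another section over $\alpha^\diamond(\cU)\cap\cU$, differing from $\gamma$ by a nonvanishing scalar — precisely the cocycle $m_\alpha^\gamma$ introduced just before Proposition~\ref{coc2}. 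Now I would compute $(\alpha^\diamond)(g_\gamma)$ and, using $\alpha(g)=g$ on $\cV$ together with Proposition~\ref{gfgh} (which says multiplication by a nonconstant function scales $g$ by its square), show that $(\alpha^\diamond)(g_\gamma)=(m_\alpha^\gamma\circ(\alpha^\diamond)^{-1})^{-2}\,g_\gamma$ or some such explicit nonvanishing multiple of $g_\gamma$. Hence $\alpha^\diamond$ is conformal on $\cU$. Since the conformal structure on $\cY$ was shown to be independent of the covering and sections, and since conformality is a local notion, this proves \eqref{SO0} is conformal.

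For the infinitesimal statement \eqref{so0}, I would run the parallel computation with Lie derivatives. Take $B\in\so(\cc^{n+2})$; from \eqref{pyo1}, $B$ is a scaling-invariant field on $\cV$ with $Bg=0$ there. Decompose $B$ along a section $\gamma$ as in Proposition~\ref{coc1}: $B|_{\gamma(\cU)}=M_B^\gamma V+\gamma(B^\diamond)$. Taking the Lie derivative of $g$ and using the linearity of $C\mapsto Cg$, together with the facts $Vg=2g$ (equation \eqref{one3}) and $g(V,\cdot)=0$ (equation \eqref{one1}), the vertical part $M_B^\gamma V$ contributes only a multiple of $g$ (an explicit calculation gives a contribution proportional to $M_B^\gamma g$ plus terms that vanish by \eqref{one1}), while $0=Bg=$ this multiple of $g$ plus $(\gamma(B^\diamond))g$ forces $(\gamma(B^\diamond))g$ to be a scalar multiple of $g$ as well. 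Transporting back by $\gamma^{-1}$ gives $B^\diamond g_\gamma=M\,g_\gamma$ for some $M\in\cA(\cU)$, i.e.\ $B^\diamond$ is infinitesimally conformal.

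The main obstacle I anticipate is purely bookkeeping: carefully tracking how the section $\gamma$, the descended map $\alpha^\diamond$ (or field $B^\diamond$), and the cocycle $m_\alpha^\gamma$ (or $M_B^\gamma$) interact, so that the nonvanishing conformal factor on $\cU$ comes out correctly, and verifying that $\alpha^\diamond$ indeed lands in $\Hol(\cU,\alpha^\diamond(\cU))$ rather than some smaller set. None of the individual steps is deep — the identities \eqref{one1}--\eqref{one3} and Propositions~\ref{coc1}, \ref{coc2}, \ref{gfgh} do all the real work — but assembling them cleanly, and invoking the already-established independence of the conformal structure from the choice of covering, is where care is needed.
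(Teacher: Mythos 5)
Your proposal is correct and follows essentially the same route as the paper: the infinitesimal case via the decomposition of Proposition~\ref{coc1} together with $Bg=0$, $Vg=2g$ and $g(V,\cdot)=0$, and the group case via the section $\alpha\circ\gamma\circ(\alpha^\diamond)^{-1}$, the cocycle $m_\alpha^\gamma$, and Proposition~\ref{gfgh}. The only cosmetic difference is that the paper packages the group computation as the auxiliary conformal map $\beta=(m\circ\pi)\alpha$ satisfying $\beta^\gamma=\gamma\alpha^\diamond\gamma^{-1}$, whereas you transport the induced metric directly.
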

\begin{proof} Let $\gamma$ be a section over $\cU\subset\cY$.

Let $B\in \so(\cc^{n+2})$. By Proposition~\ref{coc1}, there exists $M_B^\gamma\in\cA(\cU)$ such that
\begin{gather*}\gamma(B^\diamond)= \big(B -M_B^\gamma V\big)\big|_{\gamma(\cU)}.\end{gather*}
Now $Bg=0$ and $Vg=2g$. Hence
\begin{gather*}\gamma(B^\diamond) g= -2M_B^\gamma g.\end{gather*}
Therefore, $\gamma(B^\diamond)$ is inf\/initesimally conformal on~$\gamma(\cU)$. Hence $B^\diamond$ is inf\/initesimally conformal on~$\cU$.

Let ${\alpha}\in \mathrm{O}(\cc^{n+2})$. Clearly, $\gamma_{\alpha}:={\alpha}\circ\gamma\circ({\alpha}^\diamond)^{-1}$ is a~section based on ${\alpha}^\diamond(\cU)$. Therefore, there exists $m\in\cA^\times\big({\alpha}^\diamond(\cU)\cap \cU\big)$ such that
\begin{gather}
\gamma(y)=m(y)\gamma_{\alpha}(y),\qquad y\in{\alpha}^\diamond(\cU)\cap \cU.
\label{qwas}
\end{gather}
Def\/ine now $\beta\colon \pi^{-1}\big(\cU\cap ({\alpha}^\diamond)^{-1}(\cU)\big)\to \pi^{-1}\big({\alpha}^\diamond(\cU)\cap \cU\big)$
 \begin{gather*}\beta(z):=m\big(\pi(z)\big)\alpha(z).\end{gather*}
Substituting $z=\gamma(y)$ and using (\ref{qwas}) we obtain
\begin{gather*}\beta(\gamma(y))=m(y)\alpha\big(\gamma(y)\big)=\gamma(\alpha^\diamond(y)).\end{gather*}
Thus $\beta$ maps $\gamma\big(\cU\cap ({\alpha}^\diamond)^{-1}(\cU)\big)$ onto $\gamma\big({\alpha}^\diamond(\cU)\cap \cU\big)$. Let $\beta^\gamma$ denote this restriction of $\beta$. By Proposition~\ref{gfgh}, $\beta^\gamma$ is conformal. Clearly, it satisf\/ies the identity
\begin{gather*}\beta^\gamma=\gamma\alpha^\diamond\gamma^{-1}.\end{gather*}
Hence $\alpha^\diamond$ is conformal.
\end{proof}

\subsection{Conformal invariance of complex Euclidean space}

Fix a vector $e\in\cV$, and set
\begin{gather*} \cV_{e}:=\{z\in\cV\colon \langle z|e\rangle\neq0\}.
\end{gather*}
Clearly, $\cV_{e}$ is invariant with respect to the action of $\cc^\times$ and $\cY_{e}:=\cV_{e}/\cc^\times$ is an open dense subset of~$\cY$. We have a~natural section of the line bundle $\cV\to\cY$ based on $\cY_e$:
\begin{gather}
\cY_e\ni \cc^\times z\mapsto\frac{z}{\langle z|e\rangle}\in\cV_e.\label{secto}\end{gather}

Next choose a vector $d\in\cV$ such that $\langle e|d\rangle=1$. Clearly, $\{e,d\}^\perp$ is $n$-dimensional. It will be convenient further on to choose coordinates $(z_i)_{i\in I_n}$ in $\{e,d\}^\perp$. Each $z\in\cc^{n+2}$ can be written as
 \begin{gather*}
z=\left[\begin{matrix}(z_i)_{i\in I_n}\\z_{-m-1}\\z_{m+1}\end{matrix}\right]=(z_i)_{i\in I_n}+z_{-m-1} e+z_{m+1}d,
\end{gather*}
where $(z_i)_{i\in I_n}\in \{e,d\}^\perp\simeq\cc^n$, $(z_{-m-1},z_{m+1})\in\cc^2$.

Using $y=(y_i)_{i\in I_n}$ as generic variables in $\cc^n$, and noting that $z_{m+1}=\langle z|e\rangle$, we see that~$\cY_{e}$ can be identif\/ied with~$\cc^n$ through the map
 \begin{gather*}\cc^n\ni y\mapsto \cc^\times\left[\begin{matrix}y\\-\frac{\langle y|y\rangle }{2}\\ 1\end{matrix}\right]\in\cY_{e}.
\end{gather*}
With this identif\/ication, the section (\ref{secto}) can be written as
 \begin{gather}\cc^n\ni y\mapsto
\left[\begin{matrix}y\\-\frac{\langle y|y\rangle }{2}\\ 1\end{matrix}\right]\in\cV_{e}.\label{secto2}\end{gather}
\begin{Remark}
The above discussion shows that $\cY_{e}$ has a natural structure of the {\em affine $n$-dimen\-sional Euclidean space}. The choice of~$d\in\cV_e$ (which does not inf\/luence the def\/inition of $\cY_e$) determines the origin of coordinates in $\cY_e$.
\end{Remark}

The stabilizer of $\left[\begin{matrix}0\\1\\0\end{matrix}\right]=e\in\cV$ inside $\mathrm{O}(\cc^{n+2})$ is isomorphic to $\cc^n\rtimes \mathrm{O}(\cc^n)$, and is given by
\begin{gather*}\left[\begin{matrix}
\one &0&w\\
- w^\#&1&-\frac12\langle w|w\rangle\\
0&0&1\end{matrix}\right]\left[\begin{matrix}
\beta&0&0\\
0&1&0\\
0&0&1\end{matrix}\right],\qquad \beta\in \mathrm{O}\big(\cc^n\big),\qquad w\in\cc^n.\end{gather*}
The \looseness=-1 stabilizer of $\cc^\times e\in\cY$ inside $\mathrm{O}(\cc^{n+2})$ is isomorphic to $\cc^n\rtimes\big( \mathrm{O}(\cc^n)\times \mathrm{O}(\cc^2)\big)$ and is given by
\begin{gather*}\left[\begin{matrix}
\one&0&w\\
- w^\#&1&-\frac12\langle w|w\rangle\\
0&0&1\end{matrix}\right]\left[\begin{matrix}
\beta&0&0\\
0&s&0\\
0&0&s^{-1}\end{matrix}\right],\qquad \beta\in \mathrm{O}\big(\cc^n\big),\qquad w\in\cc^n,\qquad s\in\cc^\times.\end{gather*}

\subsection{Laplacian on bundle of homogeneous functions}

Let $\cV_0$ be an open subset of $\cV$ and $\eta\in\cc$. We def\/ine $\Lambda^\eta(\cV_0)$ to be the set of holomorphic functions on $\cV_0$ homogeneous of degree $\eta$. (See Sections~\ref{s1a} and~\ref{s1b} for a discussion.)

Clearly, $B\in \so(\cc^{n+2})$ preserves $\Lambda^\eta(\cV_0)$. We will denote by $B^{\diamond,\eta}$ the restriction of $B\in\hol(\cV)$ to $\Lambda^\eta(\cV_0)$.

Clearly, ${\alpha}\in \mathrm{O}(\cc^{n+2})$ maps $\Lambda^\eta(\cV_0)$ onto $\Lambda^\eta\big({\alpha}(\cV_0)\big)$. We will denote by ${\alpha}^{\diamond,\eta}$ the restriction of ${\alpha}$ to $\Lambda^\eta(\cV_0)$. Thus we have representations
\begin{gather*}
\so\big(\cc^{n+2}\big) \ni B \mapsto B^{\diamond,\eta},\qquad 
\mathrm{O}\big(\cc^{n+2}\big) \ni {\alpha} \mapsto {\alpha}^{\diamond,\eta}.
\end{gather*}

We f\/ind the following theorem curious because it allows in some situations to restrict a {\em second order} dif\/ferential operator to a~submanifold.

\begin{Theorem} \label{wer}Let $\Omega\subset\cc^{n+2}$ be open. Let $ K\in\cA(\Omega)$ be homogeneous of degree~$\frac{2-n}{2}$ such that
\begin{gather*} K\big|_{ cV\cap\Omega}=0.\end{gather*}
Then
\begin{gather*}\Delta_{\cc^{n+2}} K\big|_{\cV\cap\Omega}=0.\end{gather*}
\end{Theorem}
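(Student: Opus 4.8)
The plan is to reduce the statement, which is pointwise on the hypersurface $\cV\cap\Omega$, to a local computation near an arbitrary $z_0\in\cV\cap\Omega$, and there to exploit the homogeneity of $K$ together with a local factorization through $\langle z|z\rangle$. Since the scalar product on $\cc^{n+2}$ is nondegenerate and $z_0\neq0$, the differential $\rmd\langle z|z\rangle$ does not vanish at $z_0$, so near $z_0$ the null quadric is a smooth complex hypersurface cut out by $\langle z|z\rangle$, and $\langle z|z\rangle$ may be completed to a system of local holomorphic coordinates. Because $K$ is holomorphic and vanishes on $\{\langle z|z\rangle=0\}$, expanding it in that coordinate yields a factorization
\begin{gather*} K=\langle z|z\rangle\, L \end{gather*}
with $L$ holomorphic on a (possibly smaller) neighbourhood of $z_0$.

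First I would determine the homogeneity of the local factor $L$. Applying $D_{\cc^{n+2}}$ to $K=\langle z|z\rangle L$ and using $D_{\cc^{n+2}}\langle z|z\rangle=2\langle z|z\rangle$ together with the hypothesis $D_{\cc^{n+2}}K=\tfrac{2-n}{2}K$, one gets $\langle z|z\rangle\big(D_{\cc^{n+2}}L+2L\big)=\tfrac{2-n}{2}\langle z|z\rangle L$; cancelling $\langle z|z\rangle$ on the dense open set where it is nonzero and extending by holomorphy gives
\begin{gather*} D_{\cc^{n+2}}L=-\tfrac{n+2}{2}\,L .\end{gather*}

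Then I would compute $\Delta_{\cc^{n+2}}K$ from this factorization. The relevant fact is the operator identity
\begin{gather*}\Delta_{\cc^{n+2}}\,\langle z|z\rangle-\langle z|z\rangle\,\Delta_{\cc^{n+2}}=2(n+2)+4D_{\cc^{n+2}},\end{gather*}
where $\langle z|z\rangle$ stands for multiplication by that function; this is just the second-order Leibniz rule, with $\Delta_{\cc^{n+2}}\langle z|z\rangle=2(n+2)$ and a short computation showing that the first-order cross term equals $4D_{\cc^{n+2}}$. Applying both sides to $L$,
\begin{gather*}\Delta_{\cc^{n+2}}K=\langle z|z\rangle\,\Delta_{\cc^{n+2}}L+\big(2(n+2)+4D_{\cc^{n+2}}\big)L ,\end{gather*}
and the bracketed term vanishes identically by the homogeneity established above, since then $4D_{\cc^{n+2}}L=-2(n+2)L$. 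Hence $\Delta_{\cc^{n+2}}K=\langle z|z\rangle\,\Delta_{\cc^{n+2}}L$, which vanishes on $\cV\cap\Omega$; as $z_0$ was arbitrary, the theorem follows.

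The only delicate point is the division $K=\langle z|z\rangle L$ with $L$ holomorphic, and this is precisely where the hypothesis $z_0\neq0$ (i.e.\ the removal of the origin from the null quadric) enters: it guarantees $\rmd\langle z|z\rangle\neq0$, so that $\langle z|z\rangle$ is a genuine coordinate near $z_0$ and a holomorphic function vanishing on $\{\langle z|z\rangle=0\}$ is divisible by it. Everything else is elementary. One could alternatively organize the computation around the polar identity~(\ref{polar}): there $\cC_{\cc^{n+2}}$ is built from the vector fields $B_{i,j}\in\so(\cc^{n+2})$, which are tangent to $\cV$ and therefore preserve the ideal of holomorphic functions vanishing on $\cV$, while $\big(D_{\cc^{n+2}}^2+nD_{\cc^{n+2}}\big)K=\tfrac{(2-n)(2+n)}{4}K$; but this only yields that $\langle z|z\rangle\,\Delta_{\cc^{n+2}}K$ vanishes on $\cV$, which is vacuous, so the local factorization is what actually carries the proof.
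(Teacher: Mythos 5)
Your argument is correct, but it proves the theorem by a genuinely different route than the paper. You localize at a point $z_0\in\cV\cap\Omega$, use $\rmd\langle z|z\rangle(z_0)\neq0$ to divide, $K=\langle z|z\rangle L$ with $L$ holomorphic, deduce $D_{\cc^{n+2}}L=-\tfrac{n+2}{2}L$ from Euler's relation, and conclude via the commutator $[\Delta_{\cc^{n+2}},\langle z|z\rangle]=2(n+2)+4D_{\cc^{n+2}}$ that $\Delta_{\cc^{n+2}}K=\langle z|z\rangle\,\Delta_{\cc^{n+2}}L$, which visibly vanishes on $\cV$; all steps check out (the divisibility and the cancellation of $\langle z|z\rangle$ are legitimate exactly as you say). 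The paper instead splits the space, $\cc^{n+2}=\cc^n\oplus\cc^2$ in Method~I and $\cc^{n+2}=\cc^{n+1}\oplus\cc$ in Method~II, and rewrites $\langle z|z\rangle_{\cc^n}\Delta_{\cc^{n+2}}$ (resp.\ $\langle z|z\rangle_{\cc^{n+1}}\Delta_{\cc^{n+2}}$) as a sum of terms each of which manifestly kills or preserves the condition of vanishing on $\cV$: polynomials in $\so(\cc^{n+2})$ fields (tangent to $\cV$), a factor $D_{\cc^{n+2}}-\tfrac{2-n}{2}$ annihilating $K$, and a term carrying an explicit $\langle z|z\rangle$; see \eqref{deq} and \eqref{deq3}. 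Your closing remark is on target: the naive use of \eqref{polar} with the full square only gives the vacuous statement that $\langle z|z\rangle\,\Delta K$ vanishes on $\cV$, and the paper's device is precisely to use the \emph{partial} squares, whose restriction to $\cV$ is not identically zero (after which one divides on a dense subset of $\cV\cap\Omega$, a small point the paper leaves implicit and which your factorization sidesteps entirely). What your cleaner local argument does not buy is the by-product the paper actually wants: identities \eqref{deq} and \eqref{deq3} immediately yield \eqref{deq1a} and \eqref{deq3a}, expressing $\langle z|z\rangle_{\cc^n}\Delta_{\cc^{n+2}}^\diamond$ and $\langle z|z\rangle_{\cc^{n+1}}\Delta_{\cc^{n+2}}^\diamond$ through Casimir operators of distinguished subalgebras, and these are the engine for the factorizations of the hypergeometric-type operators later in the paper. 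So your proof is a valid and more elementary replacement for the theorem itself, while the paper's computation is designed to deliver more than the theorem.
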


\begin{proof}
 We will give two proofs. Each of the proofs will provide us with a formula, which will be useful later on.

{\bf Method I.} We use the decomposition $\cc^{n+2}=\cc^n\oplus\cc^2$. As usual, we will denote by $\langle z|z\rangle $ the square of $z\in\cc^{n+2}$, by~$D_{\cc^{n+2}}$ the generator of dilations, by $\cC_{\cc^{n+2}}$ the Casimir operator and by $\Delta_{\cc^{n+2}}$ the Laplacian on~$\cc^{n+2}$. We will also need the corresponding objects on $\cc^n$: $\langle z|z\rangle_{\cc^n}$, $D_{\cc^n}$, $\cC_{\cc^n}$, and $\Delta_{\cc^n}$. We will write
\begin{alignat*}{3}
& \langle z|z\rangle_{m+1}:=2z_{-m-1}z_{m+1},\qquad&&  \Delta_{m+1}:=2\p_{z_{-m-1}}\p_{z_{m+1}},&\\
& D_{m+1}:=z_{-m-1}\p_{z_{-m-1}}+z_{m+1}\p_{z_{m+1}},\qquad && N_{m+1}:=z_{-m-1}\p_{z_{-m-1}}-z_{m+1}\p_{z_{m+1}}.&
\end{alignat*}
Thus we have
\begin{gather*}
\langle z|z\rangle =\langle z|z\rangle_{\cc^n}+\langle z|z\rangle_{m+1},\qquad
\Delta_{\cc^{n+2}} =\Delta_{\cc^n}+\Delta_{m+1},\qquad D_{\cc^{n+2}} =D_{\cc^n}+D_{m+1}.
\end{gather*}

The following identity is a consequence of (\ref{polar}):
\begin{gather}\nonumber
\langle z|z\rangle_{\cc^n}\Delta_{\cc^{n+2}} =
\langle z|z\rangle_{\cc^n}\Delta_{\cc^n}+\big(\langle
z|z\rangle-\langle z|z\rangle_{m+1}\big)\Delta_{m+1}\\ \nonumber
\hphantom{\langle z|z\rangle_{\cc^n}\Delta_{\cc^{n+2}}}{}
=\cC_{\cc^n}+\Bigl(D_{\cc^n}-\frac{2-n}{2}\Bigr)^2-
\left(\frac{2-n}{2}\right)^2 +\langle z|z\rangle \Delta_{m+1}-
D_{m+1}^2+N_{m+1}^2\\ \nonumber
\hphantom{\langle z|z\rangle_{\cc^n}\Delta_{\cc^{n+2}}}{}
=\langle z|z\rangle \Delta_{m+1}+
\left(D_{\cc^n}-\frac{2-n}{2}-D_{m+1}\right) \left(D_{\cc^{n+2}}-\frac{2-n}{2}\right)\\
\hphantom{\langle z|z\rangle_{\cc^n}\Delta_{\cc^{n+2}}=}{}
-\left(\frac{2-n}{2}\right)^2+\cC_{\cc^n}+N_{m+1}^2.\label{deq}
\end{gather}
$\bigl(\frac{2-n}{2}\bigr)^2$ is a scalar. $\cC_{\cc^n}$ and $N_{m+1}^2$ are polynomials in elements of $\so(\cc^{n+2})$. $\cV$ is tangent to $\so(\cc^{n+2})$. Therefore, all operators in the last line of (\ref{deq}) can be restricted to $\cV$. The operator $D_{\cc^{n+2}}-\frac{2-n}{2}$ vanishes on functions in $\Lambda^{\frac{2-n}{2}}(\Omega)$. The operator $\langle z|z\rangle\Delta_{m+1} $ is zero when restricted to $\cV$.

{\bf Method II.} We write $\cc^{n+2}=\cc^{n+1}\oplus\cc$ with the distinguished variable denoted by~$t$. We assume that the square of $z\in\cc^{n+2}$ is given by
\begin{gather*}\langle z|z\rangle=\langle z|z\rangle_{\cc^{n+1}}+t^2.\end{gather*}
We will use various operators on $\cc^{n+1}$: $D_{\cc^{n+1}}$, $\cC_{\cc^{n+1}}$, and $\Delta_{\cc^{n+1}}$. We have
\begin{gather*}
D_{\cc^{n+2}} = D_{\cc^{n+1}}+t\partial_t,\qquad \Delta_{\cc^{n+2}}= \Delta_{\cc^{n+1}}+\partial_t^2.
\end{gather*}
We have the following identity
\begin{gather}\notag
\langle z|z\rangle_{\cc^{n+1}}\Delta_{\cc^{n+2}}
=\langle z|z\rangle_{\cc^{n+1}}\Delta_{\cc^{n+1}}+\langle z|z\rangle\partial_{t}^2-t^2\partial_{t}^2\\ \nonumber
\hphantom{\langle z|z\rangle_{\cc^{n+1}}\Delta_{\cc^{n+2}}}{}
=\cC_{\cc^{n+1}}+\left(D_{\cc^{n+1}}+\frac{n-1}{2}\right)^2-\left(\frac{n-1}{2}\right)^2 +\langle z|z\rangle\partial_t^2\\
\hphantom{\langle z|z\rangle_{\cc^{n+1}}\Delta_{\cc^{n+2}}=}{}
- \left(t\partial_{t}-\frac12\right)^2+\left(\frac12\right)^2\nonumber\\
\hphantom{\langle z|z\rangle_{\cc^{n+1}}\Delta_{\cc^{n+2}}}{}
=\langle z|z\rangle \partial_{t}^2+\left(D_{\cc^{ n+1}}+\frac{n}{2}-t\p_{t}\right)\left(D_{\cc^{n+2}}-\frac{2-n}{2}\right) \nonumber\\
\hphantom{\langle z|z\rangle_{\cc^{n+1}}\Delta_{\cc^{n+2}}=}{}
-\left(\frac{n-2}{2}\right)\frac{n}{2}+\cC_{\cc^{n+1}}.\label{deq3}
 \end{gather}
Then we argue similarly as in Method~I.
\end{proof}

Using Theorem \ref{wer} we can restrict the Laplacian to functions in $\Lambda^\eta(\cV_0)$ for $\eta=\frac{2-n}{2}$. More precisely, we introduce the following def\/inition.

Let $k\in\Lambda^{\frac{2-n}{2}}(\cV_0)$. Let $\Omega$ be any open subset of $\cc^{n+2}$ such that $\cV_0=\Omega\cap\cV$, let $ K\in\cA(\Omega)$ be homogeneous of degree $\frac{2-n}{2}$ and
\begin{gather*}k= K\big|_{\cV_0}.\end{gather*}
(We can always f\/ind such $\Omega$ and~$K$.) Note that $\Delta_{\cc^{n+2}} K$ is homogeneous of degree $\frac{-2-n}{2}$. We set
\begin{gather*}\Delta_{\cc^{n+2}}^\diamond k:=\Delta_{\cc^{n+2}} K\big|_{\cV_0}.\end{gather*}

By Theorem \ref{wer}, the above def\/inition does not depend on the choice of $\Omega$ and $ K$ and def\/ines a~map
\begin{gather*}
\Delta_{\cc^{n+2}}^\diamond\colon \ \Lambda^{\frac{2-n}{2}}(\cV_0)\to \Lambda^{\frac{-2-n}{2}}(\cV_0).
\end{gather*}

\begin{Remark} Let us explain the notation $\Delta_{\cc^{n+2}}^\diamond$ for the {\em reduced Laplacian}. We do not put the degree of homogeneity $\eta=\frac{2-n}{2}$ as a superscript, because it is f\/ixed by Theorem~\ref{wer}, unlike in the case of the representations of $\so(\cc^{n+2})$ and ${\rm O}(\cc^{n+2})$. The subscript $\cc^{n+2}$ is a little confusing, because $\Delta_{\cc^{n+2}}^\diamond$ acts on functions of only $n+1$ variables, and after f\/ixing a section on functions of $n$ variables. However, the initial operator is clearly $\Delta_{\cc^{n+2}}$. Finally, the diamond $\diamond$ is a symbol that we have already used in the context of line bundles.
\end{Remark}

Restricting (\ref{syme1a}) and (\ref{syme2a}) to $\Lambda^{\frac{2-n}{2}}(\cV_0)$ we obtain
\begin{alignat*}{3}
&B^{\diamond,\frac{-2-n}{2}}\Delta_{\cc^{n+2}}^\diamond =\Delta_{\cc^{n+2}}^\diamond B^{\diamond,\frac{2-n}{2}},\qquad&& B\in \so\big(\cc^{n+2}\big),&\\ 
&{\alpha}^{\diamond,\frac{-2-n}{2}}\Delta_{\cc^{n+2}}^\diamond =\Delta_{\cc^{n+2}}^\diamond {\alpha}^{\diamond,\frac{2-n}{2}},\qquad&& {\alpha}\in \mathrm{O}\big(\cc^{n+2}\big).&
\end{alignat*}

The following proposition is the consequence of the proof of Theorem~\ref{wer}.
\begin{Proposition}\quad
\begin{enumerate}\itemsep=0pt
\item[$(1)$] In the notation of Method~I of the proof of Theorem~{\rm \ref{wer}}, we have
\begin{gather}
\langle z|z\rangle_{\cc^n}\Delta_{\cc^{n+2}}^\diamond= \left(\frac{n-2}{2}\right)^2+\cC_{\cc^n}+N_{m+1}^2.\label{deq1a}\end{gather}
\item[$(2)$]
In the notation of Method II of the proof of Theorem~{\rm \ref{wer}}, we have
\begin{gather}
\langle z|z\rangle_{\cc^{n+1}}\Delta_{\cc^{n+2}}^\diamond= -\left(\frac{n-2}{2}\right)\frac{n}{2}+\cC_{\cc^{n+1}}.\label{deq3a}\end{gather}
\end{enumerate}
\end{Proposition}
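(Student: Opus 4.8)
The plan is to deduce both identities by simply restricting to the null quadric the two operator identities produced inside the proof of Theorem~\ref{wer}, namely (\ref{deq}) from Method~I and (\ref{deq3}) from Method~II; no fresh computation is required. Recall that (\ref{deq}) and (\ref{deq3}) are genuine identities of holomorphic differential operators on $\cc^{n+2}$, so I may apply either side to an arbitrary holomorphic function and then restrict the output to $\cV$.

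Concretely, I would fix $k\in\Lambda^{\frac{2-n}{2}}(\cV_0)$ and choose, as in the definition of $\Delta_{\cc^{n+2}}^\diamond$, an open $\Omega\subset\cc^{n+2}$ with $\cV_0=\Omega\cap\cV$ together with a homogeneous extension $K\in\cA(\Omega)$ of degree $\frac{2-n}{2}$ with $k=K\big|_{\cV_0}$. For part~(1) I apply the right-hand side of (\ref{deq}) to $K$ and restrict to $\cV_0$. Of the three groups of terms, two disappear: the term $\langle z|z\rangle\Delta_{m+1}K$ restricts to $0$ because $\langle z|z\rangle$ vanishes identically on $\cV$, and the term $\big(D_{\cc^n}-\tfrac{2-n}{2}-D_{m+1}\big)\big(D_{\cc^{n+2}}-\tfrac{2-n}{2}\big)K$ vanishes already before restriction, since $\big(D_{\cc^{n+2}}-\tfrac{2-n}{2}\big)K=0$ by homogeneity (the Euler relation $D_{\cc^{n+2}}K=\tfrac{2-n}{2}K$, used already in the proof of Theorem~\ref{wer}). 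Hence only the scalar together with $\cC_{\cc^n}K$ and $N_{m+1}^2K$ survive, all restricted to $\cV_0$. On the left-hand side, $\langle z|z\rangle_{\cc^n}\,\Delta_{\cc^{n+2}}K\big|_{\cV_0}$ equals $\langle z|z\rangle_{\cc^n}\,\Delta_{\cc^{n+2}}^\diamond k$ straight from the definition of $\Delta_{\cc^{n+2}}^\diamond$; note that $\langle z|z\rangle_{\cc^n}$ is \emph{not} identically zero on $\cV$ (only $\langle z|z\rangle=\langle z|z\rangle_{\cc^n}+\langle z|z\rangle_{m+1}$ is), so this product is a genuine nonzero function on $\cV_0$ and the resulting equation is a real constraint on $\Delta_{\cc^{n+2}}^\diamond$. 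Comparing the two sides yields (\ref{deq1a}). Part~(2) is entirely parallel, starting from (\ref{deq3}): the term $\langle z|z\rangle\partial_t^2K$ dies on $\cV_0$, the term $\big(D_{\cc^{n+1}}+\tfrac{n}{2}-t\partial_t\big)\big(D_{\cc^{n+2}}-\tfrac{2-n}{2}\big)K$ dies by homogeneity, and only the scalar together with $\cC_{\cc^{n+1}}K$ survives, giving (\ref{deq3a}).

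The one point worth a sentence of justification is that the operators surviving on the right-hand sides --- $\cC_{\cc^n}$ and $N_{m+1}^2$ for part~(1), and $\cC_{\cc^{n+1}}$ for part~(2) --- descend to well-defined operators on $\Lambda^{\frac{2-n}{2}}(\cV_0)$ that commute with the restriction map $K\mapsto K\big|_{\cV_0}$. This was in effect already noted in the proof of Theorem~\ref{wer}: each of these is a polynomial in elements of $\so(\cc^{n+2})$, the vector fields of $\so(\cc^{n+2})$ are tangent to $\cV$ and are scaling invariant, so they preserve $\cV_0$ as well as the space of functions homogeneous of degree $\frac{2-n}{2}$. Consequently $\big(\cC_{\cc^n}K\big)\big|_{\cV_0}$ is exactly $\cC_{\cc^n}$ acting on $k$ inside $\Lambda^{\frac{2-n}{2}}(\cV_0)$, and likewise for $N_{m+1}^2$ and $\cC_{\cc^{n+1}}$, so the resulting equalities really are operator identities on $\Lambda^{\frac{2-n}{2}}(\cV_0)$. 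I do not expect any genuine obstacle: the substantive content is already packed into (\ref{deq}) and (\ref{deq3}), and the proposition is just what one gets by evaluating them on homogeneous functions and restricting to the null quadric; the only care needed is to keep straight which term vanishes for which reason ($\langle z|z\rangle=0$ on $\cV$ versus $D_{\cc^{n+2}}$ acting as the scalar $\tfrac{2-n}{2}$) and to check that the degrees of homogeneity match so that both sides of each identity land in $\Lambda^{\frac{2-n}{2}}(\cV_0)$.
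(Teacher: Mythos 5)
Your proposal is correct and follows essentially the same route as the paper, whose proof consists precisely of the remark that (\ref{deq1a}) follows from (\ref{deq}) and (\ref{deq3a}) from (\ref{deq3}); you have simply spelled out the two vanishing mechanisms (the factor $\langle z|z\rangle$ on $\cV$, and $D_{\cc^{n+2}}-\tfrac{2-n}{2}$ on homogeneous functions) and the restriction/descent argument that the paper already packaged into the proof of Theorem~\ref{wer}.
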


\begin{proof}
(\ref{deq1a}) follows from (\ref{deq}). (\ref{deq3a}) follows from (\ref{deq3}).
\end{proof}

\subsection{Conformal invariance of Laplacian for a general section}

The operator $\Delta_{\cc^{n+2}}^\diamond$ is quite abstract. In this subsection we describe how to make it more explicit.

Consider an open set $\cY_0$ in $\cY$. Let $\cV_0:=\pi^{-1}(\cY_0)$ and $\eta\in\frac12\zz$. Choose a section $\gamma$ of the line bundle $\cV\to\cY$ based on $\cY_0$. As in (\ref{phi}) and (\ref{psi}), we can introduce $\psi^{\gamma,\eta }\colon \Lambda^\eta(\cV_0)\to\cA(\cY_0)$ and its left inverse $\phi^{\gamma ,\eta}$. We set
\begin{alignat*}{3}
&B^{\gamma ,\eta} :=\psi^{\gamma,\eta} B^{\diamond,\eta}\phi^{\gamma
 ,\eta} = \psi^{\gamma,\eta}B\phi^{\gamma ,\eta},\qquad&& B \in\so\big(\cc^{n+2}\big),&\\
&\alpha^{\gamma ,\eta}:=\psi^{\gamma,\eta} \alpha^{\diamond,\eta}\phi^{\gamma,\eta}=\psi^{\gamma,\eta} \alpha\phi^{\gamma
 ,\eta},\qquad&& \alpha \in\mathrm{O}\big(\cc^{n+2}\big).&
\end{alignat*}

As explained in Propositions~\ref{coc1}, \ref{coc2} and~\ref{coc7}, we obtain a representation and a local representation
\begin{gather}
\so\big(\cc^{n+2}\big)\ni B \mapsto B^{\gamma ,\eta}\in \cA\rtimes \hol(\cY_0),\label{pos1c}\\
\mathrm{O}\big(\cc^{n+2}\big) \ni {\alpha}\mapsto {\alpha}^{\gamma ,\eta} \underset{\loc}{\in} \cA^\times\rtimes\Hol(\cY_0).\label{pos2c}
\end{gather}
We also def\/ine
\begin{gather*}
\Delta_{\cc^{n+2}}^\gamma:=\psi^{\gamma,\frac{-2-n}{2}}\Delta_{\cc^{n+2}}^\diamond \phi^{\gamma,\frac{2-n}{2}} =\psi^{\gamma,\frac{-2-n}{2}}\Delta_{\cc^{n+2}}
\phi^{\gamma,\frac{2-n}{2}} .\end{gather*}
We have the identities{\samepage
\begin{alignat}{3}
&B^{\gamma ,\frac{-2-n}{2}}\Delta_{\cc^{n+2}}^\gamma =\Delta_{\cc^{n+2}}^\gamma B^{\gamma ,\frac{2-n}{2}},\qquad&& B \in \so\big(\cc^{n+2}\big),&\label{posd1}\\
&{\alpha}^{\gamma ,\frac{-2-n}{2}}\Delta_{\cc^{n+2}}^\gamma =\Delta_{\cc^{n+2}}^\gamma {\alpha}^{\gamma ,\frac{2-n}{2}},\qquad&& {\alpha} \in \mathrm{O}\big(\cc^{n+2}\big).&\label{posd2}
\end{alignat}}

Thus we have shown that (inf\/initesimal) conformal transformations of the $n$-dimensional manifold $\cY_0$ lead to generalized (inf\/initesimal) symmetries of $\Delta_{\cc^{n+2}}^\gamma$. Even if (in a somewhat dif\/ferent form) this is a known fact, it seems that our derivation is new and of interest. In particular, it shows that a large class of second order $n$-dimensional operators together with their generalized symmetries directly come from the $(n+2)$-dimensional Laplacian with its true symmetries.

\subsection[Conformal invariance of Laplacian on $\cc^n$]{Conformal invariance of Laplacian on $\boldsymbol{\cc^n}$}\label{subsec-conf}

Let us describe more closely the above construction in the case of the section (\ref{secto}). In this case, instead of $\gamma$ we will write ``$\fl$'', for {\em flat}. We identify of $\cY_e$ with $\cc^n$. We can restrict~(\ref{so0}) to an action of $\so(\cc^{n+2})$ on $\cY_e$, and (\ref{SO0}) to a local action of $\mathrm{O}(\cc^{n+2})$ on $\cY_{e}$. Using~(\ref{nota1}) and~(\ref{nota2}), we obtain
\begin{gather*}
\so\big(\cc^{n+2}\big) \ni B \mapsto B^{\fl }\in\hol\big(\cc^n\big),\qquad 
\mathrm{O}\big(\cc^{n+2}\big)\ni {\alpha}\mapsto {\alpha}^{\fl }\underset{\loc}{\in}\Hol\big(\cc^n\big).
\end{gather*}

We introduce $\psi^{\fl,\eta }\colon \Lambda^\eta(\cV_e)\to\cA(\cc^n)$ and its left inverse $\phi^{\fl ,\eta}$. (\ref{pos1c}) and (\ref{pos2c}) can be rewritten as
\begin{gather*}
\so\big(\cc^{n+2}\big) \ni B \mapsto B^{\fl,\eta}\in \cA\rtimes \hol\big(\cc^n\big),\qquad 
\mathrm{O}\big(\cc^{n+2}\big) \ni {\alpha} \mapsto {\alpha}^{\fl,\eta} \underset{\loc}{\in}\cA^\times\rtimes\Hol\big(\cc^n\big).
\end{gather*}

The $(n+2)$-dimensional Laplacian reduced to the f\/lat section is just the usual $n$-dimensional Laplacian:
\begin{gather*}\Delta_{\cc^{n+2}}^\fl=\Delta_{\cc^n}.\end{gather*}
The symmetries (\ref{posd1}) and (\ref{posd2}) become the generalized symmetries of the usual Laplacian:
\begin{alignat*}{3}
&B^{\fl ,\frac{-2-n}{2}}\Delta_{\cc^n} =\Delta_{\cc^n} B^{\fl ,\frac{2-n}{2}},\qquad&& B \in \so\big(\cc^{n+2}\big),&\\
&{\alpha}^{\fl ,\frac{-2-n}{2}}\Delta_{\cc^n}=\Delta_{\cc^n} {\alpha}^{\fl ,\frac{2-n}{2}},\qquad&&{\alpha} \in \mathrm{O}\big(\cc^{n+2}\big).&
\end{alignat*}
Thus $\cc^{n+2}$ serves to describe in a simple way conformal symmetries of $\cc^n$. When used in this fashion, the space $\cc^{n+2}$ will be sometimes called
the {\em extended space}.

Below we sum up information about conformal symmetries on the level of the extended space~$\cc^{n+2}$ and the space~$\cc^n$. We will use the split coordinates, that is, $ z\in\cc^{n+2}$ and $y\in \cc^n$ have the square
\begin{gather*}
 \langle z|z\rangle =\sum_{j\in I_{n+2}}z_{-j}z_j,\qquad \langle y|y\rangle =\sum_{j\in I_n}y_{-j}y_j.\end{gather*}
As a rule, if a given operator does not depend on $\eta$, we will omit $\eta$.

 {\bf Cartan algebra of $\so(\cc^{n+2})$.}
Cartan operators of $\so(\cc^n)$, $i=1,\dots,m$:
\begin{gather*}
N_i =z_{-i}\p_{z_{-i}}-z_{i}\p_{z_i},\qquad N_i^{\fl}=y_{-i}\p_{y_{-i}}-y_{i}\p_{y_i}.
\end{gather*}
Generator of dilations:
\begin{gather*}
N_{m+1}=z_{-m-1}\p_{z_{-m-1}}-z_{m+1}\p_{z_{m+1}},\qquad
N_{m+1}^{\fl ,\eta}=\sum\limits_{i\in I_n}y_i\p_{y_i}-\eta = D_{\cc^n}-\eta.
\end{gather*}

{\bf Root operators.} Roots of $\so(\cc^n)$, $|i|<|j|$, $i,j\in I_n$:
\begin{gather*}
B_{i,j}=z_{-i}\p_{z_j}-z_{-j}\p_{z_i},\qquad B_{i,j}^{\fl }=y_{-i}\p_{y_j}-y_{-j}\p_{y_i}.
\end{gather*}
Generators of translations, $j\in I_n$:
\begin{gather*}
B_{-m-1,j}=z_{m+1}\p_{z_j}-z_{-j}\p_{z_{-m-1}},\qquad B_{-m-1,j}^{\fl } =\p_{y_j}.
\end{gather*}
Generators of special conformal transformations, $j\in I_n$:
\begin{gather*}
B_{m+1,j} =z_{-m-1}\p_{z_j}-z_{-j}\p_{z_{m+1}},\qquad B_{m+1,j}^{\fl ,\eta} =-\frac{1}{2} \langle y|y\rangle \p_{y_j}
+y_{-j}\sum\limits_{i\in I_n}y_i\p_{y_i}-\eta y_{-j}.
\end{gather*}

{\bf Weyl symmetries.} We will write $K$ for a function on $\cc^{n+2}$ and $f$ for a function on $\cc^n$.
Ref\/lection:
\begin{gather*}
\tau_0 K(z_0,\dots) =K(-z_0,\dots),\\
\tau_0^{\fl }f(y_0,\dots)=f(-y_0,\dots).
\end{gather*}
Flips, $j=1,\dots,m$:
\begin{gather*}
\tau_j K(\dots,z_{-j},z_j,\dots,z_{-m-1},z_{m+1})=K(\dots,z_{j},z_{-j},\dots,z_{-m-1},z_{m+1}),\\
\tau_j^{\fl}f(\dots,y_{-j},y_j,\dots)=f(\dots, y_{j},y_{-j},\dots).
\end{gather*}
Inversion:
\begin{gather*}
\tau_{m+1} K(\dots,z_{-m-1},z_{m+1})=K(\dots,z_{m+1},z_{-m-1}),\\
\tau_{m+1}^{\fl,\eta} f(y)=\left(-\frac{\langle y|y\rangle }{2}\right)^\eta f\left(-\frac{2y}{\langle y|y\rangle }\right).
\end{gather*}
Permutations, $\sigma\in S_m$:
\begin{gather*}
\sigma K(\dots,z_{-j},z_j,\dots,z_{-m-1},z_{m+1})=K(\dots,z_{-\sigma_j},z_{\sigma_j},\dots,z_{-m-1},z_{m+1}),\\
\sigma^{\fl }f(\dots,y_{-j},y_j,\dots)=f(\dots, y_{-\sigma_j},y_{\sigma_j},\dots).
\end{gather*}
Special conformal transformations, $j=1,\dots,m$:
\begin{gather*}
\sigma_{(j,m+1)} K(z_{-1},z_1,\dots,z_{-j},z_j,\dots,z_{-m-1},z_{m+1})\\
\qquad{} =K(z_{-1},z_1,\dots,z_{-m-1},z_{m+1},\dots,z_{-j},z_j),\\[1mm]
\sigma_{(j,m+1)}^{\fl ,\eta} f(y_{-1},y_{1},\dots,y_{-j},y_j,\dots)=
 y_j^\eta f\left(\frac{y_{-1}}{y_j},\frac{y_{1}}{y_j},\dots,-\frac{\langle y|y\rangle}{2y_j},\frac{1}{y_j},\dots\right).
\end{gather*}

{\bf Laplacian:}
\begin{gather*}
\Delta_{\cc^{n{+}2}}=\sum\limits_{i\in I_{n+2}}\p_{z_i}\p_{z_{-i}},\qquad
\Delta_{\cc^{n+2}}^\fl =\sum\limits_{i\in I_n}\p_{y_i}\p_{y_{-i}} = \Delta_{\cc^n}.
\end{gather*}

{\bf Computations.} Let us describe how to derive these formulas in an easy way. Consider $\cc^{n+1}\times\cc^\times$ (def\/ined by $z_{m+1}\neq0$), which is an open dense subset of $\cc^{n+2}$. Clearly, $\cV_e$ is contained in $\cc^{n+1}\times\cc^\times$.

We will write $\Lambda^\eta(\cc^{n+1}\times\cc^\times)$ for the space of functions homogeneous of degree $\eta$ on $\cc^{n+1}\times\cc^\times$.

Instead of using the maps $\phi^{\fl ,\eta}$ and $\psi^{\fl ,\eta}$, as in (\ref{phi}) and (\ref{psi}), we will prefer $\Phi^{\fl ,\eta}\colon \cA(\cc^n)\to\Lambda^\eta(\cc^{n+1}\times\cc^\times)$ and $\Psi^{\fl ,\eta}\colon \Lambda^\eta(\cc^{n+1}\times\cc^\times)\to \cA(\cc^n)$ def\/ined below.

For $ K\in \Lambda^\eta\bigl(\cc^{n+1}\times\cc^\times\bigr)$, we def\/ine $\Psi^{\fl,\eta} K\in\cA(\cc^n)$ by
\begin{gather*}\big(\Psi^{\fl,\eta} K\big)(y)= K\left(y,-\frac{\langle y|y\rangle }{2},1\right), \qquad y\in \cc^n.\end{gather*}

Let $f\in\cA(\cc^n)$. Then there exists a unique function in $\Lambda^\eta\bigl(\cc^{n+1}\times\cc^\times\bigr)$ that extends $f$ and does not depend on $z_{-m-1}$. It is given by
\begin{gather*}\big(\Phi^{\fl ,\eta} f\big) (\dots,z_m,z_{-m-1},z_{m+1}) :=z_{m+1}^\eta f\left(\dots,\frac{z_m}{z_{m+1}}\right).\end{gather*}

$\Psi^{\fl,\eta}$ is a left inverse of $\Phi^{\fl ,\eta}$:
\begin{gather*}
\Psi^{\fl,\eta}\Phi^{\fl ,\eta}=\id.
\end{gather*}
Clearly,
\begin{gather*}
\Phi^{\fl ,\eta}f\big|_{\cV_e} =\phi^{\fl ,\eta}f,\qquad \Psi^{\fl,\eta}K =\psi^{\fl,\eta}\big(K\big|_{\cV_e}\big).
\end{gather*}
Moreover, functions in $\Lambda^\eta(\cc^{n+1}\times\cc^\times)$ restricted to $\cV_e$ are in $\Lambda^\eta(\cV_e)$. Therefore,
\begin{alignat*}{3}
& B^{\fl ,\eta} =\Psi^{\fl,\eta} B\Phi^{\fl ,\eta},\qquad && B \in \so\big(\cc^{n+2}\big),&\\
& {\alpha}^{\fl ,\eta} =\Psi^{\fl,\eta} {\alpha}\Phi^{\fl ,\eta},\qquad && {\alpha} \in \mathrm{O}\big(\cc^{n+2}\big).&
\end{alignat*}
Note also that
\begin{gather*}\Delta_{\cc^{n+2}}^\fl=\Psi^{\fl,\eta} \Delta_{\cc^{n+2}}\Phi^{\fl ,\eta}=\Delta_{\cc^n}.\end{gather*}

In practice, the above idea can be implemented by the following change of coordinates on~$\cc^{n+2}$:
\begin{gather*}
y_i:= \frac{z_i}{z_{m+1}},\quad i\in I_{n},\qquad R:= \sum\limits_{i\in I_{n+2}}z_{i}z_{-i},\qquad p:= z_{m+1}.
\end{gather*}
The inverse transformation is
\begin{gather*}
z_i= py_i,\quad i\in I_{n},\qquad z_{-m-1}= \frac{1}{2}\left(\frac{R}{p}-p\sum\limits_{i\in I_{n}}y_iy_{-i}\right),\qquad z_{m+1}=p.
\end{gather*}
The derivatives are equal to
\begin{gather*}
\p_{z_i}= z_{m+1}^{-1}\p_{y_i}+2z_{-i}\p_R,\quad i\in I_{n},\qquad \p_{z_{-m-1}}= 2z_{m+1}\p_R,\\
\p_{z_{m+1}}= \p_p-z_{m+1}^{-2}\sum\limits_{i\in I_{n}}z_i\p_{y_i}+2 z_{-m-1}\p_R.
\end{gather*}

Note that these coordinates are def\/ined on $\cc^{n+1}\times\cc^\times$. $\cV_e$ is given by the condition $R=0$. The section (\ref{secto}) (see also~(\ref{secto2})) is given by $p=1$.

For a function $y\mapsto f(y)$ we have
\begin{gather*}\big(\Phi^{\fl ,\eta} f\big)(y,R,p)=p^\eta f(y).\end{gather*}
For a function $(y,R,p)\mapsto K(y,R,p)$ we have
\begin{gather*}\big(\Psi^{\fl,\eta} K\big)(y)= K(y,0,1).\end{gather*}
Note also that on $\Lambda^\eta(\cc^{n+1}\times\cc^\times)$ we have
\begin{gather*} p\p_p=\eta. \end{gather*}

\subsection[Dimension $n=1$]{Dimension $\boldsymbol{n=1}$}

Let us illustrate the constructions of this section by describing the projective quadric in the lowest dimensions, where everything is very explicit. We start with dimension $n=1$.

The 1-dimensional projective quadric is isomorphic to the Riemann sphere or, what is the same, the 1-dimensional projective complex space:
\begin{gather*}\cY^1\simeq\cc\cup\{\infty\}=P^1\cc.\end{gather*}

Indeed, consider $\cc^3$ with the scalar product
\begin{gather*}\langle z|z\rangle=z_0^2+2z_{-1}z_{+1}.\end{gather*}
We can cover $\cY^1$ with two maps:
\begin{gather*}
\cc\ni s\mapsto \phi_+(s) =\left(s,1,-\frac12s^2\right)\cc^\times\in\cY^1,\qquad \cc\ni s\mapsto \phi_-(s) =\left(s,-\frac12s^2,1\right)\cc^\times\in\cY^1.
\end{gather*}
The transition map is
\begin{gather*}\phi_+^{-1}\phi_-(s)=-\frac{2}{s}.\end{gather*}
The Lie algebra $\so(\cc^3)$ is spanned by
\begin{gather*}B_{0,1},\quad B_{0,-1},\quad N_1,\end{gather*}
with the commutation relations
\begin{gather*}
[B_{0,1},B_{0,-1}] =N_1,\qquad [N_1,B_{0,1}] =B_{0,1},\qquad [N_1,B_{0,-1}] =-B_{0,-1}.
\end{gather*}
The Casimir operator is
\begin{gather}\label{casimir1}
\cC=2B_{0,1}B_{0,-1}-N_1^2-N_1 =2B_{0,-1}B_{0,1}-N_1^2+N_1.
\end{gather}

\subsection[Dimension $n=2$]{Dimension $\boldsymbol{n=2}$}

The 2-dimensional projective quadric is isomorphic to the product of two Riemann spheres:
\begin{gather*}\cY^2\simeq P^1\cc\times P^1\cc .\end{gather*}
Indeed, consider $\cc^4$ with the scalar product
\begin{gather*}\langle z|z\rangle=2z_{-1}z_{+1}+2z_{-2}z_{+2}.\end{gather*}
We can cover $\cY^2$ with four maps:
\begin{gather*}
\cc\times\cc\ni (t,s)\mapsto \phi_{+1}(t,s)=(-ts,1,t,s)\cc^\times\in\cY^2,\\
\cc\times\cc\ni (t,s)\mapsto \phi_{-1}(t,s)=(1,-ts,s,t)\cc^\times\in\cY^2,\\
\cc\times\cc\ni (t,s)\mapsto \phi_{+2}(t,s)=(-s,-t,1,ts)\cc^\times\in\cY^2,\\
\cc\times\cc\ni (t,s)\mapsto \phi_{-2}(t,s)=(-t,-s,-ts,1)\cc^\times\in\cY^2.
\end{gather*}
Here are the transition maps:
\begin{gather*}
\begin{split}
& \phi_{-1}^{-1}\phi_{+1}(t,s)=\phi_{-2}^{-1}\phi_{+2}(t,s) =\big({-}t^{-1},-s^{-1}\big),\\
& \phi_{-2}^{-1}\phi_{-1}(t,s)=\phi_{+2}^{-1}\phi_{+1}(t,s) =\big({-}t^{-1},s\big),\\
& \phi_{-2}^{-1}\phi_{+1}(t,s)=\phi_{+2}^{-1}\phi_{-1}(t,s) =\big(t,-s^{-1}\big).
\end{split}
\end{gather*}
The Lie algebra $\so(\cc^4)$ is spanned by
\begin{gather*}N_1,\quad N_2,\quad B_{1,2},\quad B_{1,-2},\quad B_{-1,2},\quad B_{-1,-2}.\end{gather*}
Its Casimir operator is
\begin{gather*}\cC=2B_{1,2}B_{-1,-2}+2B_{1,-2}B_{-1,2}-N_1^2-N_2^2+2N_1.\end{gather*}

As is well known, $\so(\cc^4)$ decomposes into the direct sum $\so_+(\cc^3)\oplus\so_-(\cc^3)$ of two commuting Lie algebras isomorphic to $\so(\cc^3)$ spanned by
\begin{gather*}
B_{1,2},\quad B_{-1,-2},\quad N_1+N_2\qquad\text{and}\qquad B_{1,-2},\quad B_{-1,2},\quad N_1-N_2
\end{gather*}
with the commutation relations
\begin{alignat*}{3}
& [B_{1,2},B_{-1,-2}]=N_1+N_2,\qquad && [B_{1,-2},B_{-1,2}]=N_1-N_2,&\\
& [N_1+N_2,B_{1,2}]=2B_{1,2}, \qquad && [N_1-N_2,B_{1,-2}]=2B_{1,-2},&\\
& [N_1+N_2,B_{-1,-2}]=-2B_{-1,-2}, \qquad && [N_1-N_2,B_{-1,2}]=-2B_{-1,2}.&
\end{alignat*}
The corresponding Casimir operators are
\begin{gather*}
\cC_{+}=2B_{1,2}B_{-1,-2}-\frac12(N_1+N_2)^2-N_1-N_2 =2B_{-1,-2}B_{1,2}-\frac12(N_1+N_2)^2+N_1+N_2,\\
\cC_{-}=2B_{1,-2}B_{-1,2}-\frac12(N_1-N_2)^2-N_1+N_2=2B_{-1,2}B_{1,-2}-\frac12(N_1-N_2)^2+N_1-N_2.
\end{gather*}
Thus
\begin{gather*}\cC=\cC_{+}+\cC_{-}.\end{gather*}

In the enveloping algebra of $\so(\cc^4)$ the operators $\cC_{+}$ and $\cC_{-}$ are distinct. They satisfy $\alpha(\cC_{-})=\cC_{+}$ for $\alpha\in\mathrm{O}(\cc^4)\backslash\SO(\cc^4)$, for instance for $\tau_i$, $i=1,2$.

However, inside the associative algebra of dif\/ferential operators on $\cc^4$ we have the identity
\begin{gather*}B_{1,2}B_{-1,-2}-B_{1,-2}B_{-1,2}=N_1N_2+N_2,\end{gather*}
which implies
\begin{gather*}\cC_{+}=\cC_{-}\end{gather*}
inside this algebra. Therefore, represented in the algebra of dif\/ferential operators we have
\begin{gather}
\cC =4B_{1,2}B_{-1,-2}-(N_1+N_2)^2-2N_1-2N_2 \nonumber\\
\hphantom{\cC}{} =4B_{-1,-2}B_{1,2}-(N_1+N_2)^2+2N_1+2N_2\nonumber\\
\hphantom{\cC}{} =4B_{1,-2}B_{-1,2}-(N_1-N_2)^2-2N_1+2N_2 \nonumber\\
\hphantom{\cC}{}=4B_{-1,2}B_{1,-2}-(N_1-N_2)^2+2N_1-2N_2.\label{casimir2}
\end{gather}

\section[$\so(\cc^6)$ and the hypergeometric equation]{$\boldsymbol{\so(\cc^6)}$ and the hypergeometric equation}\label{s6}

In this section we derive the hypergeometric operator and its $\so(\cc^6)$ symmetries. We will consider the following levels:
\begin{enumerate}\itemsep=0pt
\item[(1)] extended space $\cc^6$ and the Laplacian $\Delta_{\cc^6}$,
\item[(2)] reduction to the so-called spherical section and the corresponding Laplace--Beltrami operator,
\item[(3)] depending on the choice of coordinates, separation of variables leads to the balanced or standard hypergeometric operator.
\end{enumerate}

Alternatively, one can use a dif\/ferent derivation:
\begin{enumerate}\itemsep=0pt
\item[$(2)'$] reduction to $\cc^4$ and $\Delta_{\cc^4}$ with help of the f\/lat section,
\item[$(3)'$] with appropriate coordinates, separation of variables leads to the balanced or standard hypergeometric operator.
\end{enumerate}

A separate subsection is devoted to factorizations of the hypergeometric operator. We will see that they are closely related to $\so(\cc^4)$ subalgebras of $\so(\cc^6)$ and their Casimir operators.

\subsection[Extended space $\cc^6$]{Extended space $\boldsymbol{\cc^6}$}

We consider $\cc^6$ with the coordinates
\begin{gather}
z_{-1},\quad z_1,\quad z_{-2},\quad z_2,\quad z_{-3}, \quad z_3\label{sq0}\end{gather}
and the scalar product given by
\begin{gather}
\langle z|z\rangle=2z_{-1}z_1+2z_{-2}z_2+2z_{-3}z_3.\label{sq1}\end{gather}

 {\bf Lie algebra $\so(\cc^6)$.} Cartan algebra:
\begin{gather*}N_i=z_{-i}\partial_{z_{-i}}-z_{i}\partial_{z_{i}},\qquad i=1,2,3.\end{gather*}
Root operators:
\begin{gather*}B_{i,j} =z_{-i}\p_{{j}}-z_{-j}\p_{{i}},\qquad 1\leq|i|<|j|\leq 3.\end{gather*}

{\bf Weyl symmetries.} Transpositions:
\begin{gather*}
\sigma_{(12)}K(z_{-1},z_1,z_{-2},z_2,z_{-3},z_3)=K(z_{-2},z_2,z_{-1},z_1,z_{-3},z_3),\\
\sigma_{(13)}K(z_{-1},z_1,z_{-2},z_2,z_{-3},z_3)=K(z_{-3},z_3,z_{-2},z_2,z_{-1},z_1),\\
\sigma_{(23)}K(z_{-1},z_1,z_{-2},z_2,z_{-3},z_3)=K(z_{-1},z_1,z_{-3},z_3,z_{-2},z_2).
\end{gather*}
Cycles:
\begin{gather*}
\sigma_{(123)}K(z_{-1},z_1,z_{-2},z_2,z_{-3},z_3)=K(z_{-3},z_3,z_{-1},z_1,z_{-2},z_2),\\
\sigma_{(132)}K(z_{-1},z_1,z_{-2},z_2,z_{-3},z_3)=K(z_{-2},z_2,z_{-3},z_3,z_{-1},z_1).
\end{gather*}
Flips:
\begin{gather*}
\tau_1 K(z_{-1},z_1,z_{-2},z_2,z_{-3},z_3)=K(z_{1},z_{-1},z_{-2},z_2,z_{-3},z_3),\\
\tau_2 K(z_{-1},z_1,z_{-2},z_2,z_{-3},z_3)=K(z_{-1},z_{1},z_{2},z_{-2},z_{-3},z_3),\\
\tau_3 K(z_{-1},z_1,z_{-2},z_2,z_{-3},z_3)=K(z_{-1},z_{1},z_{-2},z_2,z_{3},z_{-3}).
\end{gather*}

{\bf Laplacian:}
\begin{gather}
\Delta_{\cc^6}=2\partial_{z_{-1}}\partial_{z_{1}}+2\partial_{z_{-2}}\partial_{z_{2}}+2\partial_{z_{-3}}\partial_{z_{3}}.\label{sq2}\end{gather}

{\bf Symmetries:}
\begin{subequations}
\begin{alignat}{3}
& N_i\Delta_{\cc^6}=\Delta_{\cc^6} N_i,\qquad && 1\leq i\leq 3,&\label{sq31}\\
& B_{i,j}\Delta_{\cc^6}=\Delta_{\cc^6} B_{i,j},\qquad && 1\leq|i|<|j|\leq3, & \\
& \sigma\Delta_{\cc^6} =\Delta_{\cc^6}\sigma,\qquad && \sigma\in S_3, & \\
&\tau_j\Delta_{\cc^6} =\Delta_{\cc^6}\tau_j,\qquad && 1\leq j\leq3.& \label{sq34}
\end{alignat}
\end{subequations}

\subsection{Spherical section}

In this subsection we consider the section of the quadric
\begin{gather*}\cV^5:=\big\{z\in\cc^6\colon 2z_{-1}z_1+2z_{-2}z_2+2z_{-3}z_3=0\big\}\end{gather*}
 given by equations
\begin{gather*}4 =2\left(z_{-1}z_1+z_{-2}z_2\right)=-2z_3z_{-3}.\end{gather*}
We will call it the {\em spherical section}, because it coincides with $\cS^3(4)\times\cS^1(-4)$. The superscript used for this section will be ``$\sph$'' for spherical.

We will see that this section is well suited to obtain the hypergeometric equation, both in the balanced and standard form, because its conformal factor is trivially equal to~$1$.

As a preparation for a discussion of this section, let us choose the coordinates
\begin{alignat*}{5}
& r =\sqrt{2\left(z_{-1}z_1+z_{-2}z_2\right)}, \qquad && p =\sqrt{2z_3z_{-3}}, \qquad && w =\frac{z_{-1}z_1}{z_{-1}z_1+z_{-2}z_{2}},& \\
& u_1 =\sqrt\frac{z_{-1}}{z_{1}},\qquad && u_2 =\sqrt\frac{z_{-2}}{z_{2}},\qquad && u_3 =\sqrt\frac{z_{-3}}{z_{3}}.&
\end{alignat*}
The null quadric in these coordinates is given by $r^2+p^2=0$. The generator of dilations is
\begin{gather*}D_{\cc^6}=r\ddr+p\ddp.\end{gather*}
The spherical section is given by the condition $r^2=4$.

Let us now describe in detail various objects in the spherical section.

{\bf Lie algebra $\so(\cc^6)$.} Cartan operators:
\begin{gather*}
N_{1}^\sph =u_1\ddu{1},\qquad N_{2}^\sph =u_2\ddu{2},\qquad N_{3}^\sph =u_3\ddu{3}.
\end{gather*}
Roots:
\begin{subequations}\label{coor}
\begin{gather}
B_{2,1}^\sph=u_1u_2\sqrt{w(1-w)} \left(\ddw-\frac{N^\sph_1}{2w}+\frac{N^\sph_2}{2(1-w)}\right) ,\\[1mm]
B_{-2,-1}^\sph=\frac{1}{u_1 u_2}\sqrt{w(1-w)} \left(\ddw+\frac{N^\sph_1}{2w}-\frac{N^\sph_2}{2(1-w)}\right) ,\\[1mm]
B_{2,-1}^\sph=\frac{u_2}{u_1}\sqrt{w(1-w)} \left(\ddw+\frac{N^\sph_1}{2w}+\frac{N^\sph_2}{2(1-w)}\right) ,\\[1mm]
B_{-2,1}^\sph=\frac{u_1}{u_2}\sqrt{w(1-w)} \left(\ddw-\frac{N^\sph_1}{2w}-\frac{N^\sph_2}{2(1-w)}\right) ,\\[1mm]
B_{3,1}^{\sph,{\eta}}=\frac{\ii}{2}u_1u_3\sqrt{w} \left(\eta+2(1-w)\ddw-\frac{N^\sph_1}{w}- N^\sph_3\right) ,\\[1mm]
B_{-3,-1}^{\sph,{\eta}}=\frac{\ii}{2}\frac{1}{u_1u_3}\sqrt{w} \left(\eta+2(1-w)\ddw+\frac{N^\sph_1}{w}+ N^\sph_3\right) ,\\[1mm]
B_{3,-1}^{\sph,{\eta}}=\frac{\ii}{2}\frac{u_3}{u_1}\sqrt{w} \left(\eta+2(1-w)\ddw+\frac{N^\sph_1}{w}- N^\sph_3\right) ,\\[1mm]
B_{-3,1}^{\sph,{\eta}}=\frac{\ii}{2}\frac{u_1}{u_3}\sqrt{w} \left(\eta+2(1-w)\ddw-\frac{N^\sph_1}{w}+ N^\sph_3\right) ,\\[1mm]
B_{3,2}^{\sph,{\eta}}=\frac{\ii}{2}u_2u_3\sqrt{1-w} \left(\eta-2w\ddw-\frac{N^\sph_2}{1-w}- N^\sph_3\right) ,\\[1mm]
B_{-3,-2}^{\sph,{\eta}}=\frac{\ii}{2}\frac{1}{u_2u_3}\sqrt{1-w} \left(\eta-2w\ddw+\frac{N^\sph_2}{1-w}+ N^\sph_3\right) ,\\[1mm]
B_{3,-2}^{\sph,{\eta}}=\frac{\ii}{2}\frac{u_3}{u_2}\sqrt{1-w} \left(\eta-2w\ddw+\frac{N^\sph_2}{1-w}- N^\sph_3\right) ,\\[1mm]
B_{-3,2}^{\sph,{\eta}}=\frac{\ii}{2}\frac{u_2}{u_3}\sqrt{1-w} \left(\eta-2w\ddw-\frac{N^\sph_2}{1-w}+ N^\sph_3\right) .
\end{gather}
\end{subequations}

{\bf Weyl symmetries.} Transpositions:
\begin{gather*}
\sigma_{12}^{\sph,{\eta}} f(w,u_1,u_2,u_3) = f (1-w,u_2,u_1,u_3 ),\\
\sigma_{13}^{\sph,{\eta}} f(w,u_1,u_2,u_3) = \left(\ii\sqrt{w}\right)^\eta f\left(\frac{1}{w},u_3,u_2,u_1\right),\\
\sigma_{23}^{\sph,{\eta}} f(w,u_1,u_2,u_3) = \left(\ii\sqrt{1-w}\right)^\eta f\left(\frac{w}{w-1},u_1,u_3,u_2\right).
\end{gather*}
Cycles:
\begin{gather*}
\sigma_{312}^{\sph,{\eta}} f(w,u_1,u_2,u_3) = \left(\ii\sqrt{1-w}\right)^\eta f\left(\frac{1}{1-w},u_3,u_1,u_2\right),\\
\sigma_{231}^{\sph,{\eta}} f(w,u_1,u_2,u_3)= \left(\ii\sqrt{w}\right)^\eta f\left(1-\frac{1}{w},u_2,u_3,u_1\right).
\end{gather*}
Flips:
\begin{gather*}
\tau_{1}^{\sph,{\eta}} f(w,u_1,u_2,u_3)= f\left(w,\frac{1}{u_1},u_2,u_3\right),\\
\tau_{2}^{\sph,{\eta}} f(w,u_1,u_2,u_3)= f\left(w,u_1,\frac{1}{u_2},u_3\right),\\
\tau_{3}^{\sph,{\eta}} f(w,u_1,u_2,u_3) = f\left(w,u_1,u_2,\frac{1}{u_3}\right).
\end{gather*}

 The Laplacian in coordinates (\ref{coor}) is
\begin{gather*}
\Delta_{\cc^6} =\frac{4}{r^2} \biggl(\frac{1}{4}\left((r\ddr)^2+2(r\ddr)+\frac{r^2}{p^2}(p\ddp)^2\right)\\
\hphantom{\Delta_{\cc^6} =\frac{4}{r^2} \biggl(}{} +\ddw w(1-w)\ddw-\frac{(u_1\ddu{1})^2}{4w}-\frac{(u_1\ddu{2})^2}{4(1-w)}-\frac{r^2}{p^2}\frac{(u_3\ddu{3})^2}{4}\biggr).
\end{gather*}
Using
\begin{gather*}
(r\ddr)^2+2r\ddr+\tfrac{r^2}{p^2}(p\ddp)^2 =\left(r\ddr+\frac{r^2}{p^2}(p\ddp)+1\right) (r\ddr+p\ddp+1 )-1\\
\hphantom{(r\ddr)^2+2r\ddr+\tfrac{r^2}{p^2}(p\ddp)^2 =}{} -\left(\frac{r^2}{p^2}+1\right) (p\ddp ) (r\ddr+1 ),
\end{gather*}
 $r^2+p^2=0$ and $r\ddr+p\ddp=-1$, we obtain
\begin{gather*}
\Delta_{\cc^6}^\diamond = \frac{4}{r^2}\left(\ddw w(1-w)\ddw-\frac{(u_1\ddu{1})^2}{4w}-\frac{(u_2\ddu{2})^2}{4(1-w)}+\frac{(u_3\ddu{3})^2}{4}-\frac14\right).
\end{gather*}

To convert $\Delta_{\cc^6}^\diamond$ into the {\it reduced Laplacian} $\Delta_{\cc^6}^\sph$, we simply remove the prefactor $\frac{4}{r^2}$, obtaining the Laplace--Beltrami operator on $\cS^3(4)\times\cS^1(-4)$:
\begin{gather}
\Delta_{\cc^6}^\sph =\ddw w(1-w)\ddw- \frac{\big(N_1^\sph\big)^2}{4w}-\frac{\big(N_2^\sph\big)^2}{4(1-w)}+\frac{\big(N_3^\sph\big)^2}{4}-\frac{1}{4}.\label{sq9}
\end{gather}

{\bf Generalized symmetries:}
\begin{alignat}{3}
& N_i^{\sph}\Delta_{\cc^6}^\sph =\Delta_{\cc^6}^\sph N_i^{\sph},\qquad &&1 \leq i\leq 3, & \label{yq1}\\
& B_{i,j}^{\sph,-3}\Delta_{\cc^6}^\sph =\Delta_{\cc^6}^\sph B_{i,j}^{\sph,-1},\qquad&& 1 \leq|i|<|j|\leq3, &\label{yq2}\\
& \sigma^{\sph,-3}\Delta_{\cc^6}^\sph =\Delta_{\cc^6}^\sph \sigma^{\sph,-1},\qquad && \sigma\in S_3,&\label{yq3}\\
& \tau_j^{\sph}\Delta_{\cc^6}^\sph =\Delta_{\cc^6}^\sph \tau_j^{\sph},\qquad&& 1\leq j\leq3.&\label{yq4}
\end{alignat}

\subsection{Balanced hypergeometric operator}

Using the spherical section we make an ansatz
\begin{gather}
f(w,u_1,u_2,u_3)=u_1^\alpha u_2^\beta u_3^\mu F(w).\label{form0}\end{gather}
Clearly,
\begin{gather*}
N_1^\sph f=\alpha f,\qquad N_2^\sph f=\beta f,\qquad N_3^\sph f=\mu f.
\end{gather*}
Therefore, on functions of the form (\ref{form0}), $\Delta_{\cc^6}^\sph$, that is~(\ref{sq9}), coincides with the balanced hypergeometric operator~(\ref{hyp1c}). The generalized symmetries for the roots (\ref{yq2}), for the permutations~(\ref{yq3}) and for the f\/lips~(\ref{yq4}) coincide with the transmutation relations, with the discrete symmetries, and with the sign changes of $\alpha$, $\beta$, $\mu$ of the balanced hypergeometric equation, respectively; see Section~\ref{subs-1}.

\subsection{Standard hypergeometric operator}

Alternatively, we can slightly change the coordinates (\ref{coor}), replacing $u_1$, $u_2$ with
\begin{gather*}
\tilde{u}_1:= \frac{z_{-1}}{\sqrt{z_{-1}z_1+z_{-2}z_2}}=u_1\sqrt{w},\qquad
\tilde{u}_2:=\frac{z_{-2}}{\sqrt{z_{-1}z_1+z_{-2}z_2}}=u_2\sqrt{1-w}. 
\end{gather*}
As compared with the previous coordinates, we need to replace $\partial_w$ with
\begin{gather}
\partial_w+N_1^\sph\frac{1}{2w}+N_2^\sph\frac{1}{2(w-1)}.\label{change1}\end{gather}
Let us only quote the results for the Cartan operators
\begin{gather*}
N_{1}^\sph =\tilde u_1\p_{\tilde u_1},\qquad N_{2}^\sph =\tilde u_2\p_{\tilde u_2},\qquad N_{3}^\sph =u_3\ddu{3},
\end{gather*}
and the reduced Laplacian:
\begin{gather*}
\begin{split}
& \Delta_{\cc^6}^\sph =w(1-w)\p_w^2+\big(\big(1+N_1^\sph\big)(1-w)-\big(1+N_2^\sph \big)w\big)\p_w\\
& \hphantom{\Delta_{\cc^6}^\sph =}{} +\frac14\big(N_3^\sph\big)^2-\frac14\big(N_1^\sph+N_2^\sph +1\big)^2.
\end{split}
\end{gather*}

If we now make the ansatz
 \begin{gather}
f(w,\tilde u_1,\tilde u_2,u_3)=\tilde u_1^\alpha \tilde u_2^\beta u_3^\mu \tilde F(w),\label{las}\end{gather}
then clearly,
\begin{gather*}
N_1^\sph f=\alpha f,\qquad N_2^\sph f=\beta f,\qquad N_3^\sph f=\mu f.
\end{gather*}
It is easy to see that on functions of the form (\ref{las}), $\Delta_{\cc^6}^\sph$ coincides with the standard hypergeometric operator~(\ref{hy1-tra}). When~(\ref{change1}) is applied to root operators and Weyl symmetries, we also obtain the symmetries of the standard hypergeometric operator described in~\cite{De}.

\subsection{Factorizations}

In the Lie algebra $\so(\cc^6)$ represented in (\ref{sq0}) we have~3 distinguished Lie subalgebras isomorphic to $\so(\cc^4)$: in an obvious notation,
 \begin{gather*}\so_{12}\big(\cc^4\big),\quad \so_{13}\big(\cc^4\big),\quad \so_{23}\big(\cc^4\big).\end{gather*}
 By (\ref{casimir2}), the corresponding Casimir operators are
\begin{gather*}
\cC_{12} =4B_{1,2}B_{-1,-2}-(N_1+N_2)^2-2N_1-2N_2
 =4B_{-1,-2}B_{1,2}-(N_1+N_2)^2+2N_1+2N_2\\
 \hphantom{\cC_{12}}{} =4B_{1,-2}B_{-1,2}-(N_1-N_2)^2-2N_1+2N_2
 =4B_{-1,2}B_{1,-2}-(N_1-N_2)^2+2N_1-2N_2,\\
\cC_{13} =4B_{1,3}B_{-1,-3}-(N_1+N_3)^2-2N_1-2N_3
 =4B_{-1,-3}B_{1,3}-(N_1+N_3)^2+2N_1+2N_3\\
 \hphantom{\cC_{13}}{} =4B_{1,-3}B_{-1,3}-(N_1-N_2)^2-2N_1+2N_3
 =4B_{-1,3}B_{1,-3}-(N_1-N_3)^2+2N_1-2N_3,\\
\cC_{23} =4B_{2,3}B_{-2,-3}-(N_2+N_3)^2-2N_2-2N_3 =4B_{-2,-3}B_{2,3}-(N_2+N_3)^2+2N_2+2N_3\\
\hphantom{\cC_{23}}{}
 =4B_{2,-3}B_{-2,3}-(N_2-N_2)^2-2N_2+2N_3 =4B_{-2,3}B_{2,-3}-(N_2-N_3)^2+2N_2-2N_3.
\end{gather*}
After the reduction described in (\ref{deq1a}), we obtain the identities
\begin{subequations}
\begin{gather}\label{facto1}
(2z_{-1}z_1+2z_{-2}z_2)\Delta_{\cc^6}^\diamond =-1+\cC_{12}^{\diamond,-1}+\big(N_3^{\diamond,-1}\big)^2,\\
\label{facto2}(2z_{-1}z_1+2z_{-3}z_3)\Delta_{\cc^6}^\diamond =-1+\cC_{13}^{\diamond,-1}+\big(N_2^{\diamond,-1}\big)^2,\\\label{facto3}
(2z_{-2}z_2+2z_{-3}z_3)\Delta_{\cc^6}^\diamond =-1+\cC_{23}^{\diamond,-1}+\big(N_1^{\diamond,-1}\big)^2.
\end{gather}
\end{subequations}

If we use the spherical section, (\ref{facto1}), (\ref{facto2}), (\ref{facto3}) become
\begin{subequations}
\begin{gather}\label{facto1b}
4\Delta_{\cc^6}^\sph =-1+\cC_{12}^{\sph,-1}+\big(N_3^{\sph,-1}\big)^2,\\
\label{facto2b} -4(1-w)\Delta_{\cc^6}^\sph =-1+\cC_{13}^{\sph,-1}+\big(N_2^{\sph,-1}\big)^2,\\
\label{facto3b} -4w\Delta_{\cc^6}^\sph=-1+\cC_{23}^{\sph,-1}+\big(N_1^{\sph,-1}\big)^2.
\end{gather}
\end{subequations}
They yield the factorizations of the balanced hypergeometric operator described in Section~\ref{subs-1}. Applying \eqref{change1}, we also obtain the factorizations of the standard hypergeometric operator described in~\cite{De}.

\subsection[Conformal symmetries of $\Delta_{\cc^4}$]{Conformal symmetries of $\boldsymbol{\Delta_{\cc^4}}$}\label{sec-c4}

In this subsection we describe the reduction of the Laplacian on $\cc^6$ to $\cc^4$, which is accomplished by aplying the f\/lat section. This will lead us to an alternative derivation of the hypergeometric equation. Besides, the material of this subsection will be needed when we will discuss the conf\/luent equation.

To a large extent, this subsection is a specif\/ication of Section~\ref{subsec-conf} to $n=4$. Recall that the f\/lat section is given by
\begin{gather*} z_{-3}=-z_{-1}z_1-z_{-2}z_2,\qquad z_3=1. \end{gather*}
It is parametrized with $y\in\cc^4$. More precisely, we introduce the coordinates
\begin{gather}
y_{-1}=z_{-1},\qquad y_1=z_1,\qquad y_{-2}=z_{-2},\qquad y_2=z_2.\label{sq4}
\end{gather}
Thus this section can be identif\/ied with $\cc^4$ with the scalar product given by the square
\begin{gather} \langle y|y\rangle=2y_{-1}y_1+2y_{-2}y_2.\label{sq5}\end{gather}

{\bf Lie algebra $\so(\cc^6)$.} Cartan algebra:
\begin{gather*}
N_i^{\fl} =y_{-i}\partial_{y_{-i}}-y_{i}\partial_{y_{i}},\qquad i=1,2,\\
N_3^{\fl,\eta } =y_{-1}\partial_{y_{-1}}+y_{1}\partial_{y_{1}}+y_{-2}\partial_{y_{-2}} +y_{2}\partial_{y_{2}}-\eta.
\end{gather*}
Root operators:
\begin{gather*}
B_{1,2}^\fl =y_{-1}\p_{y_2}-y_{-2}\p_{y_1},\qquad \  \, B_{-1,-2}^\fl =y_{1}\p_{y_{-2}}-y_{2}\p_{y_{-1}},\\
B_{1,-2}^\fl =y_{-1}\p_{y_{-2}}-y_{2}\p_{y_1},\qquad
B_{-1,2}^\fl =y_{1}\p_{y_2}-y_{-2}\p_{y_{-1}},\\
B_{3,1}^{\fl,\eta} =y_{-1}(\partial_{y_{-1}}-\eta)-y_{-2}y_2\partial_{y_1} +y_{-1}y_{-2}\partial_{y_{-2}}+y_{-1}y_{2}\partial_{y_{2}},\\
B_{-3,-1}^{\fl} =\p_{y_{-1}},\qquad B_{3,-1}^{\fl,\eta} =y_{1}(\partial_{y_{1}}-\eta)-y_{-2}y_2\partial_{y_{-1}}
+y_{1}y_{-2}\partial_{y_{-2}}+ y_{1}y_{2}\partial_{y_{2}},\\
B_{-3,1}^{\fl} =\p_{y_1},\qquad\quad \,
B_{3,2}^{\fl,\eta} =y_{-2}(\partial_{y_{-2}}-\eta)-y_{-1}y_1\partial_{y_2}+y_{-2}y_{-1}\partial_{y_{-1}}+y_{-2}y_{1}\partial_{y_{1}},\\
B_{-3,-2}^{\fl} =\p_{y_{-2}},\qquad B_{3,-2}^{\fl,\eta} =y_{2}(\partial_{y_{2}}-\eta)-y_{-1}y_1\partial_{y_{-2}}
+y_{2}y_{-1}\partial_{y_{-1}}+y_{2}y_{1}\partial_{y_{1}},\\
B_{-3,2}^{\fl} =\p_{y_2}.
\end{gather*}

 {\bf Weyl symmetries.} Transpositions:
\begin{gather*}
\sigma_{(12)}^\fl f(y_{-1},y_1,y_{-2},y_2) = f(y_{-2},y_2,y_{-1},y_1),\\
\sigma_{(13)}^{\fl,\eta }f(y_{-1},y_1,y_{-2},y_2)= y_1^{\eta }f\left(\frac{-y_{-1}y_1-y_{-2}y_2}{y_1},
\frac{1}{y_1},\frac{y_{-2}}{y_1},\frac{y_2}{y_1}\right),\\
\sigma_{(23)}^{\fl,\eta }f(y_{-1},y_1,y_{-2},y_2) = y_2^{\eta }f\left(\frac{y_{-1}}{y_2},\frac{y_{1}}{y_2},
\frac{-y_{-1}y_1-y_{-2}y_2}{y_2},\frac{1}{y_2}\right).
\end{gather*}
Cycles:
\begin{gather*}
\sigma_{(123)}^{\fl,\eta } f(y_{-1},y_1,y_{-2},y_2)=
y_2^{\eta }f\left(\frac{-y_{-1}y_1-y_{-2}y_2}{y_2},
\frac{1}{y_2},\frac{y_{-1}}{y_2},\frac{y_1}{y_2}\right),\\
\sigma_{(132)}^{\fl,\eta }f(y_{-1},y_1,y_{-2},y_2) =
y_1^{\eta }f\left(\frac{y_{-2}}{y_1},\frac{y_2}{y_1},
\frac{-y_{-1}y_1-y_{-2}y_2}{y_1},\frac{1}{y_1}\right).
\end{gather*}
Flips:
\begin{gather*}
\tau_1^{\fl}f(y_{-1},y_1,y_{-2},y_2)=f(y_{1},y_{-1},y_{-2},y_2),\\
\tau_2^{\fl}f(y_{-1},y_1,y_{-2},y_2)=f(y_{-1},y_{1},y_{2},y_{-2}),\\
\tau_3^{\fl,\eta }f(y_{-1},y_1,y_{-2},y_2)=(-2y_{-1}y_1-2y_{-2}y_2)^{\eta }f\left(\frac{y_{-1},y_{1},y_{-2},y_2}{-y_{-1}y_1-y_{-2}y_2}\right).
\end{gather*}

{\bf Reduced Laplacian} coincides with the 4-dimensional Laplacian:
\begin{gather*}\Delta_{\cc^6}^\fl=\Delta_{\cc^4}=2\partial_{y_{-1}}\partial_{y_{1}}+ 2\partial_{y_{-2}}\partial_{y_{2}}.\end{gather*}

{\bf Generalized symmetries:}
\begin{alignat*}{3}
& N_i^{\fl,-3}\Delta_{\cc^4} =\Delta_{\cc^4} N_i^{\fl,-1},\qquad &&1 \leq i\leq 3, & \\ 
& B_{i,j}^{\fl,-3}\Delta_{\cc^4} =\Delta_{\cc^4} B_{i,j}^{\fl,-1},\qquad && 1 \leq |i|<|j|\leq3, & \\ 
& \sigma^{\fl,-3}\Delta_{\cc^4} =\Delta_{\cc^4}\sigma^{\fl,-1},\qquad &&\sigma \in S_3, & \\ 
& \tau_j^{\fl,-3}\Delta_{\cc^4} =\Delta_{\cc^4}\tau_j^{\fl,-1},\qquad&&1 \leq j\leq3. & 
\end{alignat*}

\subsection[Deriving balanced hypergeometric operator from $\Delta_{\cc^4}$]{Deriving balanced hypergeometric operator from $\boldsymbol{\Delta_{\cc^4}}$}

Introduce the following coordinates in $\cc^4$:
\begin{gather}
w =\frac{y_{-1}y_1}{y_{-1}y_1+y_{-2}y_{2}},\qquad\!
r =\sqrt{2\left(y_{-1}y_1+y_{-2}y_2\right)},\qquad\!
u_1 =\sqrt\frac{y_{-1}}{y_{1}},\qquad\!
u_2 =\sqrt\frac{y_{-2}}{y_{2}}.\!\!\! \label{coor1}
\end{gather}
We check that
\begin{gather*}
N_1^\fl =u_1\p_{u_1},\qquad N_2^\fl =u_2\p_{u_2},\qquad N_3^{\fl,\eta} =r\p_r-\eta,\\[1mm]
\Delta_{\cc^4} =\frac{1}{r^2}\left((r\ddr+1)^2-1 +4\ddw w(1-w)\ddw-\frac{(u_1\ddu{1})^2}{w}-\frac{(u_2\ddu{2})^2}{(1-w)}\right).
\end{gather*}
Thus the ansatz
\begin{gather*} f(w,u_1,u_2,r)=u_1^\alpha u_2^\beta r^{\mu-1}F(w) \end{gather*}
leads to the balanced hypergeometric operator.

\subsection[Deriving standard hypergeometric operator from $\Delta_{\cc^4}$]{Deriving standard hypergeometric operator from $\boldsymbol{\Delta_{\cc^4}}$}

Alternatively, we can slightly change the coordinates (\ref{coor1}), replacing $u_1$, $u_2$ with
\begin{gather*} \tilde u_1:=
\frac{y_{-1}}{\sqrt{y_{-1}y_1+y_{-2}y_2}} = u_1\sqrt{w},\qquad \tilde u_2:=\frac{y_{-2}}{\sqrt{y_{-1}y_1+y_{-2}y_2}} = u_2\sqrt{1-w}.
\end{gather*}
We check that
\begin{gather*}
N_1^\fl =\tilde u_1\p_{\tilde u_1},\qquad N_2^\fl =\tilde u_2\p_{\tilde u_2},\qquad N_3^{\fl,\eta} =r\p_r-\eta, \\
\Delta_{\cc^4} =\frac{1}{r^2}\Big((r\ddr+1)^2-1+4w(1-w)\p_w^2 \\
\hphantom{\Delta_{\cc^4} =}{} +4\big((1+u_1\p_{u_1})(1-w)-(1+u_2\p_{u_2} )w\big)\p_w -(u_1\p_{u_1}+u_2\p_{u_2} +1)^2\Big). 
\end{gather*}
Thus the ansatz
\begin{gather*}f(w,\tilde u_1,\tilde u_2,r)=\tilde u_1^\alpha \tilde u_2^\beta r^{\mu-1}F(w)
\end{gather*} leads to the standard hypergeometric equation.

\section[$\so(\cc^5)$ and the Gegenbauer equation]{$\boldsymbol{\so(\cc^5)}$ and the Gegenbauer equation}\label{s7}

In this section we derive the Gegenbauer operator and its $\so(\cc^5)$ symmetries. The whole section is very similar to Section~\ref{s6}, where we derived the hypergeometric operator with its $\so(\cc^6)$ symmetries. The main dif\/ference is lower dimension.

We will consider the following levels:
\begin{enumerate}\itemsep=0pt
\item[(1)] extended space $\cc^5$ and the Laplacian $\Delta_{\cc^5}$,
\item[(2)] reduction to the so-called spherical section and the corresponding Laplace--Beltrami ope\-ra\-tor,
\item[(3)] depending on the choice of coordinates, separation of variables leads to the balanced or standard Gegenbauer operator.
\end{enumerate}

There exists an alternative derivation:
\begin{enumerate}\itemsep=0pt
\item[$(2)'$] reduction to $\cc^3$ and $\Delta_{\cc^3}$ with help of the f\/lat section,
\item[$(3)'$] with appropriate coordinates, separation of variables leads to the balanced or standard Gegenbauer operator.
\end{enumerate}

Some of the aspects of the Gegenbauer equation are actually more complicated than the corresponding aspects of the hypergeometric equation. This is seen, in particular, when we consider factorizations of the Gegenbauer operator, which come in two separate varieties, unlike for the hypergeometric operator, which has a~single variety of factorizations. This corresponds to the fact that $\so(6)$ is simply-laced, whereas $\so(5)$ is not, i.e., its root operators are not of equal length.

\subsection[Extended space $\cc^5$]{Extended space $\boldsymbol{\cc^5}$}\label{sub-geg1}

We consider $\cc^5$ with the coordinates
\begin{gather} z_0,z_{-2},z_2,z_{-3},z_3\label{geg1}\end{gather}
and the scalar product given by
\begin{gather} \langle z|z\rangle=z_0^2+2z_{-2}z_2+2z_{-3}z_3.\label{geg5}\end{gather}
Note that we omit the indices $-1$, $1$; this makes it easier to compare $\cc^5$ with $\cc^6$.

{\bf Lie algebra $\so(\cc^5)$.} Cartan algebra:
\begin{gather*}N_i=z_{-i}\partial_{z_{-i}}-z_{i}\partial_{z_{i}},\qquad i=2,3.\end{gather*}
Root operators:
\begin{gather*}
B_{i,j} =z_{-i}\p_{{j}}-z_{-j}\p_{{i}},\quad |i| =2,\quad |j|=3,\qquad
B_{0,j} =z_0\p_j-z_{-j}\p_0,\quad |j| =2,3.
\end{gather*}

{\bf Weyl symmetries.} Transposition:
\begin{gather*}
\sigma_{(23)}K(z_0,z_{-2},z_2,z_{-3},z_3)=K(z_0,z_{-3},z_3,z_{-2},z_2).
\end{gather*}
Ref\/lection and f\/lips:
\begin{gather*}
\tau_0 K(z_0,z_{-2},z_2,z_{-3},z_3)=K(-z_0,z_{2},z_{-2},z_{-3},z_3),\\
\tau_2 K(z_0,z_{-2},z_2,z_{-3},z_3)=K(z_0,z_{2},z_{-2},z_{-3},z_3),\\
\tau_3 K(z_0,z_{-2},z_2,z_{-3},z_3)=K(z_0,z_{-2},z_{2},z_{3},z_{-3}).
\end{gather*}

{\bf Laplacian:}
\begin{gather*}\Delta_{\cc^5}=\partial_{z_0}^2+2\partial_{z_{-2}}\partial_{z_{2}}+
2\partial_{z_{-3}}\partial_{z_{3}}.\end{gather*}

{\bf Generalized symmetries:}
\begin{alignat*}{3}
& N_i\Delta_{\cc^5} =\Delta_{\cc^5} N_i,\qquad&& i=2,3, &\\
&B_{i,j}\Delta_{\cc^5}=\Delta_{\cc^5} B_{i,j},\qquad &&|i|=2,\quad |j|=3, &\\
&B_{0,j}\Delta_{\cc^5}=\Delta_{\cc^5} B_{0,j},\qquad &&|j|=2,3, &\\
&\sigma_{(23)}\Delta_{\cc^5}=\Delta_{\cc^5}\sigma_{(23)}, \qquad &&&\\
&\tau_j\Delta_{\cc^5}=\Delta_{\cc^5}\tau_j,\qquad&& j=0,2,3.&
\end{alignat*}

\subsection{Spherical section}

We consider the section of the quadric
\begin{gather*}\cV^4:=\big\{z\in\cc^5\colon z_0^2+2z_{-2}z_2+2z_{-3}z_3=0\big\}\end{gather*}given by equations
\begin{gather*}1 =z_0^2+2z_{-2}z_2=-2z_3z_{-3}.\end{gather*}
We will call it the {\em spherical section}, because it is $\cS^2(1)\times\cS^1(-1)$. The superscript used for this section will be ``$\sph$'' for spherical.

Introduce the following coordinates in $\cc^5$:
\begin{gather*}
r =\sqrt{z_0^2+2z_{-2}z_2},\qquad p =\sqrt{2z_3z_{-3}},\\
w =\sqrt{\frac{z_0^2}{2z_{-2}z_2+z_0^2}},\qquad u_2 =\sqrt{\frac{z_{-2}}{z_{2}}},\qquad
u_3 =\sqrt{\frac{z_{-3}}{z_{3}}}\;.\phantom{\sqrt{\frac{1}{2}}}
\end{gather*}
Similarly as in the previous section, the null quadric in these coordinates is given by $r^2+p^2=0$. The generator of dilations is
\begin{gather*}D_{\cc^5}=r\ddr+p\ddp.\end{gather*}
The spherical section is given by the condition $r^2=1$.

Below we describe various objects in the spherical section.

{\bf Lie algebra $\so(\cc^5)$.} Cartan operators:
\begin{gather*}
N_{2}^\sph =u_2\ddu{2},\qquad N_{3}^\sph =u_3\ddu{3}.
\end{gather*}

Root operators:
\begin{gather*}
B_{3,2}^{\sph,{\eta}}=\ii u_2u_3\frac{\sqrt{1-w^2}}{2} \left(\eta-w\ddw-\frac{N_2^\sph}{1-w^2}-N_3^\sph\right),\\[1mm]
B_{-3,-2}^{\sph,{\eta}}=\ii\frac{1}{u_2u_3} \frac{\sqrt{1-w^2}}{2} \left(\eta-w\ddw+\frac{N_2^\sph}{1-w^2}+N_3^\sph\right),\\[1mm]
B_{3,-2}^{\sph,{\eta}}=\ii \frac{u_3}{u_2}\frac{\sqrt{1-w^2}}{2} \left(\eta-w\ddw+\frac{N_2^\sph}{1-w^2}-N_3^\sph\right),\\[1mm]
B_{-3,2}^{\sph,{\eta}}=\ii \frac{u_2}{u_3}\frac{\sqrt{1-w^2}}{2} \left(\eta-w\ddw-\frac{N_2^\sph}{1-w^2}+N_3^\sph\right),\\[1mm]
B_{3,0}^{\sph,{\eta}} = \ii u_3\frac{w}{\sqrt{2}} \left(\eta+\frac{1-w^2}{w}\ddw-N_3^\sph\right),\\[1mm]
B_{-3,0}^{\sph,{\eta}} = \ii \frac{1}{u_3}\frac{w}{\sqrt{2}} \left(\eta+\frac{1-w^2}{w}\ddw+N_3^\sph\right),\\[1mm]
B_{2,0} = u_2\sqrt{\frac{1-w^2}{2}} \left(\ddw+\frac{w}{1-w^2} N_2^\sph \right),\\[1mm]
B_{-2,0} = \frac{1}{u_2}\sqrt{\frac{1-w^2}{2}} \left(\ddw-\frac{w}{1-w^2} N_2^\sph \right).
\end{gather*}

{\bf Weyl symmetries.} Transpositions:
\begin{gather*}
\sigma_{23}^{\sph,{\eta}} f(w,u_2,u_3) = \left(\ii\sqrt{1-w^2}\right)^\eta f\left(\frac{w}{\sqrt{w^2-1}},u_3,u_2\right).
\end{gather*}
Ref\/lection and f\/lips:
\begin{gather*}
\tau_{0}^{\sph} f(w,u_2,u_3) = f(-w,u_2,u_3),\\[1mm]
\tau_{2}^{\sph} f(w,u_2,u_3)= f(w,\frac{1}{u_2},u_3),\qquad \tau_{3}^{\sph} f(w,u_2,u_3)= f(w,u_2,\frac{1}{u_3}).
\end{gather*}

The Laplacian in coordinates is
\begin{gather*}
\Delta_{\cc^5}=\frac{1}{r^2}\left((r\ddr)^2+(r\ddr)+\frac{r^2}{p^2}(p\ddp)^2+\ddw (1-w^2)\ddw-\frac{(u_2\ddu{1})^2}{1-w^2}-\frac{r^2}{p^2}(u_3\ddu{3})^2\right).
\end{gather*}
Using $r^2+p^2=1$ and
\begin{gather*}
(r\ddr)^2+r\ddr+\frac{r^2}{p^2}(p\ddp)^2 =\left(r\ddr+\frac{r^2}{p^2}(p\ddp)+\frac{1}{2}\right)\left(r\ddr+p\ddp+\frac{1}{2}\right)-\frac{1}{4}\\
\hphantom{(r\ddr)^2+r\ddr+\frac{r^2}{p^2}(p\ddp)^2 =}{}
+\left(\frac{r^2}{p^2}+1\right)\left(r\ddr+\frac{1}{2}\right) (p\ddp ).
\end{gather*}
we obtain
\begin{gather*}
\Delta_{\cc^5}^\diamond = \frac{1}{r^2}\left(\ddw \big(1-w^2\big)\ddw-\frac{(u_2\ddu{1})^2}{1-w^2}+(u_3\ddu{3})^2-\frac14\right).
\end{gather*}

To convert $\Delta_{\cc^5}^\diamond$ into the {\it reduced Laplacian} $\Delta_{\cc^5}^\sph$ we simply remove $\frac{1}{r^2}$, obtaining the Laplace--Beltrami operator on $\cS^2(1)\times\cS(1)$:
\begin{gather}
\Delta_{\cc^5}^\sph= \ddw \big(1-w^2\big)\ddw-\frac{\big(N_2^\sph\big)^2}{1-w^2}+\big(N_3^\sph\big)^2-\frac{1}{4}.\label{sph1}
\end{gather}

We have
\begin{alignat}{3}
& N_i^{\sph}\Delta_{\cc^3} =\Delta_{\cc^3}N_i^{\sph},\qquad&& i=2, 3,&\label{yt1}\\
&B_{i,j}^{\sph,-\frac52}\Delta_{\cc^3}=\Delta_{\cc^3}B_{i,j}^{\sph,-\frac12},\qquad&&|i|=2,\quad |j|=3,&\label{yt2}\\
&B_{0,j}^{\sph,-\frac52}\Delta_{\cc^3}=\Delta_{\cc^3} B_{0,j}^{\sph,-\frac12},\qquad&&|j|=2,3,&\label{yt3}\\
&\sigma_{(23)}^{\sph,-\frac52}\Delta_{\cc^3}=\Delta_{\cc^3}\sigma_{(23)}^{\sph,-\frac12},\qquad &&&\label{yt4}\\
&\tau_j^{\sph}\Delta_{\cc^3}=\Delta_{\cc^3}\tau_j^{\sph},\qquad&&j=0,2,3.& \label{yt5}
\end{alignat}

\subsection{Balanced Gegenbauer operator}

Using the spherical section we make an ansatz
\begin{gather} f(w,u_2,u_3)=u_2^\alpha u_3^\lambda F(w).\label{form1}\end{gather}
Clearly,
\begin{gather*}
N_2^\sph f=\alpha f,\qquad N_3^\sph f=\lambda f.
\end{gather*}
Therefore, on functions of the form (\ref{form1}), $\Delta_{\cc^5}^\sph$ (\ref{sph1}) coincides with the balanced Gegenbauer ope\-ra\-tor~(\ref{bal2}). The generalized symmetries for the roots (\ref{yt2}) and (\ref{yt3}), for the permuta\-tion~(\ref{yt4}), and for the f\/lips~(\ref{yt5}) coincide with the transmutation relations, the discrete symmetries, and the sign changes of $\alpha$, $\lambda$ of the balanced Gegenbauer operator, respectively; see Section~\ref{subs-2}.

\subsection{Standard Gegenbauer operator}

Alternatively, we can replace the coordinate $u_2$ with
\begin{gather*} \tilde u_2:=\frac{z_{-2}}{\sqrt{z_0^2{+}2z_{-2}z_2}}= u_2\sqrt{\frac{1-w^2}{2}}.
\end{gather*}
As compared with the previous coordinates, we need to replace $\partial_w$ with
\begin{gather*} \partial_w-\frac{w}{\sqrt{1-w^2}}N_2^\sph.\end{gather*}
In these coordinates
\begin{gather*}
N_{2}^\sph =\tilde u_2\p_{\tilde u_2} ,\qquad N_{3}^\sph =u_3\ddu{3},\\
\Delta_{\cc^5}^\sph =\big(1-w^2\big)\p_w^2-2\big(1+N_2^\sph \big)w\p_w+\big(N_3^\sph\big)^2-\left(N_2^\sph +\frac{1}{2}\right)^2.
\end{gather*}

We make the ansatz
\begin{gather} f(w,u_2,u_3)=\tilde u_2^\alpha u_3^\lambda F(w).\label{form2}\end{gather}
Clearly,
\begin{gather*}
N_2^\sph f=\alpha f,\qquad N_3^\sph f=\lambda f.
\end{gather*}
Therefore, on functions of the form (\ref{form2}), $\Delta_{\cc^5}^\sph$ coincides with the standard Gegenbauer ope\-rator.

\subsection{Factorizations}

In the Lie algebra $\so(\cc^5)$ with the coordinates $z_0$, $z_{-2}$, $z_2$, $z_{-3}$, $z_3$ we have 3 distinguished Lie subalgebras: one isomorphic to $\so(\cc^4)$ and two isomorphic to $\so(\cc^3)$. In an obvious notation,
 \begin{gather*}\so_{23}\big(\cc^4\big),\quad \so_{02}\big(\cc^3\big),\quad \so_{03}\big(\cc^3\big).\end{gather*}
By (\ref{casimir2}) and (\ref{casimir1}), the corresponding Casimir operators are{\samepage
\begin{gather*}
\cC_{23}=4B_{2,3}B_{-2,-3}-(N_2+N_3)^2-2N_2-2N_3
=4B_{-2,-3}B_{2,3}-(N_2+N_3)^2+2N_2+2N_3\\
\hphantom{\cC_{23}}{}=4B_{2,-3}B_{-2,3}-(N_2-N_2)^2-2N_2+2N_3=4B_{-2,3}B_{2,-3}-(N_2-N_3)^2+2N_2-2N_3,\\
\cC_{02}=2B_{0,2}B_{0,-2}-N_2^2-N_2=2B_{0,2}B_{0,-2}-N_2^2+N_2,\\
\cC_{03} =2B_{0,3}B_{0,-3}-N_3^2-N_3=2B_{0,3}B_{0,-3}-N_3^2+N_3.
\end{gather*}}

After the reduction described in (\ref{deq1a}) and (\ref{deq3a}), we obtain the identities{\samepage
\begin{subequations}
\begin{gather}
\label{factor1}(2z_{-2}z_2+2z_{-3}z_3)\Delta_{\cc^5}^\diamond =-\frac34+\cC_{23}^{\diamond,-\frac12},\\ \label{factor2}
\big(z_0^2+2z_{-2}z_2\big)\Delta_{\cc^5}^\diamond =-\frac14+\cC_{02}^{\diamond,-\frac12}+\big(N_3^{\diamond,-\frac12}\big)^2,\\ \label{factor3}
\big(z_0^2+2z_{-3}z_3\big)\Delta_{\cc^5}^\diamond =-\frac14+\cC_{03}^{\diamond,-\frac12}+\big(N_2^{\diamond,-\frac12}\big)^2.
\end{gather}
\end{subequations}}

If we use the spherical section, (\ref{factor1}), (\ref{factor2}), (\ref{factor3}) become
\begin{gather*}
-w^2\Delta_{\cc^5}^\sph =-\frac34+\cC_{23}^{\sph,-\frac12},\\ 
\Delta_{\cc^5}^\sph =-\frac14+\cC_{02}^{\sph,-\frac12}+\big(N_3^{\sph}\big)^2,\\ 
\big(w^2-1\big)\Delta_{\cc^5}^\sph =-\frac14+\cC_{03}^{\sph,-\frac12}+\big(N_2^{\sph}\big)^2.
\end{gather*}
They yield the factorizations of the balanced Gegenbauer operator described in Section~\ref{subs-2} and of the standard Gegenbauer operator described in~\cite{De}.

\subsection[Conformal symmetries of $\Delta_{\cc^3}$]{Conformal symmetries of $\boldsymbol{\Delta_{\cc^3}}$}\label{subsec-c3}

In this subsection we describe the reduction of the Laplacian on~$\cc^5$ to~$\cc^3$. To this end we apply the f\/lat section. This will lead us to an alternative derivation of the Gegenbauer equation. Besides, the material of this subsection will be needed when we will discuss the Hermite equation.

To a large extent, this subsection is a specif\/ication of Section~\ref{subsec-conf} to $n=3$. Recall that the f\/lat section is given by
\begin{gather*} z_{-3}=-\frac12z_0^2-z_{-2}z_2,\qquad z_3=1. \end{gather*}

We introduce the coordinates
\begin{gather}
y_{0}=z_{0},\qquad y_{-2}=z_{-2},\qquad y_2=z_2.\label{geg2}
\end{gather}
Thus we obtain $\cc^3$ with the scalar product given by
\begin{gather} \langle y|y\rangle=y_0^2+2y_{-2}y_2.\label{geg3}\end{gather}

{\bf Lie algebra $\so(\cc^5)$.} Cartan operators:
\begin{gather*}
N_2^{\fl} =y_{-2}\partial_{y_{-2}}-y_{2}\partial_{y_{2}},\\
N_3^{\fl,\eta } =y_0\p_{y_0}+y_{-2}\partial_{y_{-2}}+y_{2}\partial_{y_{2}}-\eta.
\end{gather*}
Root operators:
\begin{gather*}
B_{0,2}^\fl =y_{0}\p_{y_2}-y_{-2}\p_{y_0},\qquad
B_{0,-2}^\fl =y_{0}\p_{y_{-2}}-y_{2}\p_{y_0},\\
B_{0,3}^\fl =y_{0}\p_{y_3}-y_{-3}\p_{y_0},\qquad
B_{0,-3}^\fl=y_{0}\p_{y_{-3}}-y_{3}\p_{y_0},\\
B_{3,2}^{\fl,\eta}=y_{-2}(\partial_{y_{-2}}-\eta)-y_{-2}y_2\partial_{y_2}+y_{-2}y_{-2}\partial_{y_{-2}}+y_{-2}y_{2}\partial_{y_{2}},\qquad
B_{-3,-2}^{\fl} =\p_{y_{-2}},\\
B_{3,-2}^{\fl,\eta}=y_{2}(\partial_{y_{2}}-\eta)-y_{-2}y_2\partial_{y_{-2}}+y_{2}y_{-2}\partial_{y_{-2}}+y_{2}y_{2}\partial_{y_{2}},\qquad\quad \
B_{-3,2}^{\fl} =\p_{y_2}.
\end{gather*}

{\bf Weyl symmetries.} Transpositions:
\begin{gather*}
\sigma_{(23)}^{\fl,\eta }f(y_0,y_{-2},y_2)=y_2^{\eta }f\left(\frac{-\frac12y_0^2-y_{-2}y_2}{y_2},\frac{1}{y_2}\right).
\end{gather*}
Flips:
\begin{gather*}
\tau_0^{\fl} f(y_0,y_{-2},y_2)=f(y_0,y_{2},y_{-2}),\qquad \tau_2^{\fl} f(y_0,y_{-2},y_2)=f(y_0,y_{2},y_{-2}),\\
\tau_3^{\fl,\eta } f(y_0,y_{-2},y_2)=\big({-}y_0^2-2y_{-2}y_2\big)^{\eta } f\left(\frac{y_0,y_{-2},y_{2}}{-\frac12y_0^2-y_{-2}y_2}\right).
\end{gather*}

{\bf Reduced Laplacian} coincides with the 3-dimensional Laplacian:
\begin{gather*}\Delta_{\cc^3}^\fl=\Delta_{\cc^3}=\partial_{y_0}^2+2\partial_{y_{-2}}\partial_{y_{2}}.\end{gather*}

{\bf Generalized symmetries:}
\begin{alignat*}{3}
&N_i^{\fl,-\frac52}\Delta_{\cc^3}=\Delta_{\cc^3} N_i^{\fl,-\frac12},\qquad &&i=2,3, & \\ 
&B_{i,j}^{\fl,-\frac52}\Delta_{\cc^3}=\Delta_{\cc^3} B_{i,j}^{\fl,-\frac12},\qquad&& |i|=2,\quad |j|=3, & \\ 
&B_{0,j}^{\fl,-\frac52}\Delta_{\cc^3}=\Delta_{\cc^3} B_{0,j}^{\fl,-\frac12},\qquad&&|j|=2,3,& \\ 
&\sigma_{(23)}^{\fl,-\frac52}\Delta_{\cc^3}=\Delta_{\cc^3}\sigma_{(23)}^{\fl,-\frac12},\qquad && \\ 
&\tau_j^{\fl,-\frac52}\Delta_{\cc^3}=\Delta_{\cc^3}\tau_j^{\fl,-\frac12},\qquad&& j=0,2,3.& 
\end{alignat*}

\subsection[Deriving balanced Gegenbauer operator from $\Delta_{\cc^3}$]{Deriving balanced Gegenbauer operator from $\boldsymbol{\Delta_{\cc^3}}$}

Introduce the following coordinates in $\cc^3$:
\begin{gather*}
u:=\sqrt{\frac{y_{-2}}{y_2}},\qquad r:=\sqrt{y_0^2+2y_{-2}y_2},\qquad w:=\frac{y_0}{\sqrt{y_0^2+2y_{-2}y_2}}.
\end{gather*}
Clearly,
\begin{gather*}
N_2^\fl =u\p_{u},\qquad N_3^{\fl,\eta} =r\p_{r}-\eta,\\[1mm]
\Delta_{\cc^3}= \frac{1}{r^2} \left( \ddw \big(1-w^2\big)\ddw-\frac{(u\p_u)^2}{1-w^2}+\left(r\p_r+\frac12\right)^2-\frac{1}{4}\right).
\end{gather*}

Thus the ansatz
\begin{gather*}f(w,u,r)=u^\alpha r^{\lambda-\frac12} F(w)\end{gather*}
leads to the balanced Gegenbauer operator (\ref{bal2}).

\subsection[Deriving standard Gegenbauer operator from $\Delta_{\cc^3}$]{Deriving standard Gegenbauer operator from $\boldsymbol{\Delta_{\cc^3}}$}

Instead of the coordinate $u$ choose
\begin{gather*}\tilde u:=\frac{y_{-2}}{\sqrt{y_0^2{+}2y_{-2}y_2}}=u\sqrt{\frac{1-w^2}{2}}.\end{gather*}
Clearly,
\begin{gather*}
N_2^\fl =\tilde u\p_{\tilde u},\qquad N_3^{\fl,\eta} = r\p_{ r}-\eta,\\[1mm]
\Delta_{\cc^3}=\frac{1}{r^2}\left( \big(1-w^2\big)\p_w^2-2\big(1+N_2^\fl \big)w\p_w-\left(\tilde u\p_{\tilde u} +\frac{1}{2}\right)^2 +\left(r\p_r+\frac12\right)^2\right).
\end{gather*}
 Thus the ansatz
\begin{gather*}f(w,\tilde u,r)=\tilde u^\alpha r^{\lambda-\frac12} \tilde F(w)\end{gather*}
leads to the standard Gegenbauer operator (\ref{stan2}).

\section{Symmetries of the heat equation -- the Schr\"odinger algebra}\label{sec-sch}

The main subject of this section are generalized (inf\/initesimal) symmetries of the heat equation
\begin{gather}\big(\Delta_{\cc^{n-2}}+\p_t\big)f=0.\label{heat}\end{gather}
We will see in particular that the Lie group of generalized symmetries of~(\ref{heat}) is $\sch(\cc^{n-2})$, the so-called {\em Schr\"odinger Lie algebra}.

We will reduce the heat equation (\ref{heat}) to the Laplace equation on $\cc^n$ (\ref{lapl}), whose Lie algebra of generalized inf\/initesimal symmetries is, as we saw, $\so(\cc^{n+2})$. $\sch(\cc^{n-2})$ can be viewed as a~subalgebra of~$\so(\cc^{n+2})$.

Note that the choice of the dimension $n-2$ in (\ref{heat}) makes our presentation of the heat equation consistent with that of the Laplace equation of Section~\ref{Conformal invariance}. It will be convenient to start again from the extended space $\cc^{n+2}$, where all symmetries greatly simplify.

\subsection[The Schr\"odinger Lie algebra and group on $\cc^{n+2}$]{The Schr\"odinger Lie algebra and group on $\boldsymbol{\cc^{n+2}}$}

We consider again the space $\cc^{n+2}$ with the scalar product given by
\begin{gather*}\langle z|z\rangle=\sum_{i\in I_{n+2}}z_{-i}z_i,\qquad z\in\cc^{n+2},\end{gather*}
and the Laplacian
\begin{gather*}\Delta_{\cc^{n+2}}=\sum_{i\in I_{n+2}}\p_{z_{-i}}\p_{z_i}.\end{gather*}

Recall that the Lie algebra $\so(\cc^{n+2})$ and the group $\mathrm{O}(\cc^{n+2})$ have natural representations on~$\cc^{n+2}$~(\ref{popo1}) and~(\ref{popo2}) commuting with $\Delta_{\cc^{n+2}}$, see (\ref{syme1a}), (\ref{syme2a}). A special role will be played by the operator
\begin{gather*}
B_{-m-1,m}= z_{m+1}\p_{z_{m}}-z_{-m}\p_{z_{-m-1}}\in\so\big(\cc^{n+2}\big).
\end{gather*}
We def\/ine the {\em Schr\"odinger Lie algebra}
\begin{gather*}
\sch\big(\cc^{n-2}\big):=\big\{B\in \so\big(\cc^{n+2}\big)\colon [B,B_{-m-1,m}]=0\big\}.
\end{gather*}
We also have the {\em full} and {\em special Schr\"odinger group}
\begin{gather*}
\Sch\big(\cc^{n-2}\big):=\big\{{\alpha}\in \mathrm{O}\big(\cc^{n+2}\big)\colon {\alpha}B_{-m-1,m}=B_{-m-1,m}\alpha\big\},\\
\SSch\big(\cc^{n-2}\big):= \Sch\big(\cc^{n-2}\big)\cap \SO\big(\cc^{n+2}\big).
\end{gather*}

\subsection[Structure of $\sch(\cc^{n-2})$]{Structure of $\boldsymbol{\sch(\cc^{n-2})}$}

Let us describe the structure of~$\sch(\cc^{n-2})$.

We will use our usual notation for elements of $\so(\cc^{n+2})$ and $\mathrm{O}(\cc^{n+2})$. In particular,
\begin{gather*}N_{m}=z_{-m}\p_{z_{-m}}-z_{m}\p_{z_{m}},\qquad N_{m+1}=z_{-m-1}\p_{z_{-m-1}}-z_{m+1}\p_{z_{m+1}}.\end{gather*}

Def\/ine
\begin{gather*}N_{m,m+1}:=N_{m}+N_{m+1}.\end{gather*}
Note that $N_{m,m+1}$ belongs to $\sch(\cc^{n-2})$ and commutes with $\so(\cc^{n-2})$, which is naturally embedded in~$\sch(\cc^{n-2})$.

$\sch(\cc^{n-2})$ is spanned by the following operators:
\begin{enumerate}\itemsep=0pt
\item[(1)] $B_{-m-1,m}$, which spans the center of~$\sch(\cc^{n-2})$,
\item[(2)] $B_{m,j}$, $B_{-m-1,j}$, $j=1,\dots,m-1$, which have the following nonzero commutator:
\begin{gather}[B_{m,j},B_{-m-1,-j}]=B_{-m-1,m},\label{cent}\end{gather}
\item[(3)] $B_{-m-1,-m}$, $B_{m+1,m}$, $N_{m,m+1}$, which have the usual commutation relations of $\mathrm{sl}(\cc^2)\simeq \so(\cc^3)$:
\begin{gather*}[B_{m+1,m},B_{-m-1,-m}]=N_{m.m+1},\qquad [B_{\pm(m+1),\pm m},N_{m,m+1}]=\pm B_{\pm(m+1),\pm m},\end{gather*}
\item[(4)] $B_{i,j}$, $|i|<|j|\leq m-1$, $N_i$, $i=1,\dots,m-1,$ with the usual commutation relations of~$\so(\cc^{n-2})$.
\end{enumerate}

The span of (2) can be identif\/ied with $\cc^{n-2}\oplus \cc^{n-2}\simeq \cc^2\otimes \cc^{n-2}$, which has a natural structure of a symplectic space. The span of (1) and (2) is the central extension of the abelian algebra $\cc^2\otimes \cc^{n-2}$ by (\ref{cent}). Such a Lie algebra is usually called the {\em Heisenberg Lie algebra over} $\cc^2\otimes \cc^{n-2}$ and can be denoted by
\begin{gather*}\heis\big(\cc^2\otimes \cc^{n-2}\big)=\cc\rtimes\big(\cc^2\otimes \cc^{n-2}\big).\end{gather*}

$\mathrm{sl}(\cc^2)$ acts in the obvious way on $\cc^2$ and $\so(\cc^{n-2})$ acts on~$\cc^{n-2}$. Thus $\mathrm{sl}(\cc^2)\oplus \so(\cc^{n-2})$ acts on $\cc^2\otimes \cc^{n-2}$. Thus
\begin{gather*}
\sch\big(\cc^{n-2}\big)\simeq\cc\rrtimes\big( \cc^2\otimes\cc^{n-2}\big)\rtimes \big(\mathrm{sl}\big(\cc^2\big)\oplus \so\big(\cc^{n-2}\big)\big).
\end{gather*}

Note, in particular, that neither $\sch(\cc^{n-2})$ nor $\SSch(\cc^{n-2})$ are semisimple.

The subalgebra spanned by the usual Cartan algebra of $\so(\cc^{n-2})$, $N_{m,m+1}$ and $B_{-m-1,m}$ is a maximal commutative subalgebra of $\sch(\cc^{n-2})$. It will be called the {\em ``Cartan algebra'' of $\sch(\cc^{n-2})$.}

Let us introduce $\kappa\in \SO(\cc^{n-2}\oplus\cc^2\oplus\cc^2)$:
\begin{gather*}
\kappa(\dots,z_{-m},z_{m},z_{-m-1},z_{m+1}) :=(\dots,z_{m+1},z_{-m-1},-z_{m},-z_{-m}).
\end{gather*}
Note that $\kappa^4=\id$ and $\kappa\in\SSch(\cc^{n-2})$. On the level of functions
\begin{gather*}
\kappa K(\dots,z_{-m},z_{m},z_{-m-1},z_{m+1}):= K(\dots,-z_{m+1},-z_{-m-1},z_{m},z_{-m}).
\end{gather*}

The subgroup of $\Sch(\cc^{n-2})$ generated by $W(\cc^{n-2})\subset \mathrm{O}(\cc^{n-2})$ and $\kappa$ will be called the group of {\em Weyl symmetries of
$\sch(\cc^{n-2})$.}

\subsection[The Schr\"odinger Lie algebra and group on $\cc^{n}$]{The Schr\"odinger Lie algebra and group on $\boldsymbol{\cc^{n}}$}

Recall that in Section~\ref{subsec-conf} we used the decomposition $\cc^{n+2}=\cc^n\oplus\cc^2$. Elements of $\cc^n$ were generically denoted by~$y$. The space~$\cc^n$ will be also useful in this section. Further on, it will be decomposed as $\cc^n=\cc^{n-2}\oplus\cc^2$. Thus the square of an element of $\cc^n$ is equal to
\begin{gather*}\langle y|y\rangle_{\cc^n} =\langle y|y\rangle_{\cc^{n-2}}+2y_{-m}y_m,\qquad y\in\cc^n,\end{gather*}
and the Laplacian
\begin{gather*}\Delta_{\cc^n}=\Delta_{\cc^{n-2}}+2\p_{y_{-m}}\p_{y_m}.\end{gather*}

Recall that we have the representations
\begin{gather*}
\so(\cc^{n+2})\ni B \mapsto B^{\fl,\eta} \in \cA\rtimes\hol\big(\cc^{n}\big),\\
 \mathrm{O}\big(\cc^{n+2}\big)\ni {\alpha} \mapsto
{\alpha}^{\fl,\eta} \underset{\loc}\in \cA^\times\rtimes\Hol\big(\cc^n\big),
\end{gather*}
and the generalized symmetry
\begin{alignat}{3}
& B^{\fl,\frac{-2-n}{2}} \Delta_{\cc^n}=\Delta_{\cc^n}B^{\fl,\frac{2-n}{2}},\qquad && B\in \so\big(\cc^{n+2}\big),& \label{syme1}\\
& {\alpha}^{\fl,\frac{-2-n}{2}} \Delta_{\cc^n}=\Delta_{\cc^n}{\alpha}^{\fl,\frac{2-n}{2}},\qquad && {\alpha} \in \mathrm{O}\big(\cc^{n+2}\big).& \label{syme2}
\end{alignat}

\subsection[The Schr\"odinger Lie algebra and group on $\cc^{n-2}\oplus\cc$]{The Schr\"odinger Lie algebra and group on $\boldsymbol{\cc^{n-2}\oplus\cc}$}\label{subsec-sch}

We consider now the space $\cc^{n-2}\oplus\cc$ with the generic variables $(y,t)=(\dots,y_{m-1},t)$. Note that~$t$ should be understood as a new name for $y_{-m}$, and we keep the old names for the f\/irst $n-2$ coordinates.

We def\/ine the map $\theta\colon \cA(\cc^{n-2}\oplus\cc)\to \cA(\cc^n)$ by setting for~$h$
\begin{gather*}
(\theta h)(\dots,y_{m-1},y_{-m},y_m):=h(\dots,y_{m-1},y_{-m})\e^{y_m}.
\end{gather*}
We also def\/ine $\zeta\colon \cA(\cc^n)\to \cA(\cc^{n-2}\oplus\cc)$, which to $f$ associates
\begin{gather*}(\zeta f)(\dots,y_{m-1},t):=f(\dots,y_{m-1},t,0).
\end{gather*}
Clearly, $\zeta$ is a left inverse of $\theta$:
\begin{gather*}\zeta\circ\theta=\id.\end{gather*}
Therefore, $\theta\circ\zeta=\id$ is true on the range of $\theta$.

The heat operator in $n-2$ spatial dimensions can be obtained from the Laplacian in $n$ dimension:
\begin{gather}
\cL_{\cc^{n-2}}:=\Delta_{\cc^{n{-}2}}+2\p_{t}=\zeta \Delta_{\cc^n}\theta.\label{conta}\end{gather}

For $B\in \sch(\cc^{n-2})\subset \so(\cc^{n+2})$ we def\/ine
\begin{gather*}B^{\sch,\eta}:=\zeta B^{{\fl},\eta}\theta,\end{gather*}
and for ${\alpha}\in \Sch(\cc^{n-2})\subset \mathrm{O}(\cc^{n+2})$,
\begin{gather*}{\alpha}^{\sch,\eta}:=\zeta {\alpha}^{\fl,\eta}\theta. \end{gather*}

\begin{Lemma} $\sch(\cc^{n-2})$, $\Sch(\cc^{n-2})$ and $\Delta_{\cc^n}$ preserve the range of~$\theta$.
\end{Lemma}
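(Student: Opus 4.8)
The statement has three parts, but all reduce to a single structural observation: the range of $\theta$ consists precisely of those $f\in\cA(\cc^n)$ that depend on the last coordinate $y_m$ only through the exponential factor $\e^{y_m}$, i.e.\ functions satisfying $\p_{y_m}f=f$. The plan is to identify this subspace intrinsically and then check each of the three objects preserves it.

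First I would record the characterization: $f\in\operatorname{ran}\theta$ if and only if $\p_{y_m}f=f$. Indeed, if $f=\theta h$ then $\p_{y_m}f=f$ is immediate from the definition, and conversely if $\p_{y_m}f=f$ then $f(\dots,y_{m-1},y_{-m},y_m)=\e^{y_m}f(\dots,y_{m-1},y_{-m},0)=\theta(\zeta f)$. So the range of $\theta$ is the kernel of the first-order operator $\p_{y_m}-1$ acting on $\cA(\cc^n)$. Next, for $\Delta_{\cc^n}=\Delta_{\cc^{n-2}}+2\p_{y_{-m}}\p_{y_m}$ one checks $[\p_{y_m}-1,\Delta_{\cc^n}]=0$ (the Laplacian has constant coefficients), hence $\Delta_{\cc^n}$ maps $\ker(\p_{y_m}-1)$ into itself; alternatively one sees directly that $\Delta_{\cc^n}\theta h = \theta\,\cL_{\cc^{n-2}}h$ using $2\p_{y_{-m}}\p_{y_m}(\e^{y_m}h)=2\e^{y_m}\p_{y_{-m}}h=\e^{y_m}(2\p_t h)$, which is in the range of $\theta$ — this is exactly identity \eqref{conta} read the other way, and is the cleanest route.

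For $B\in\sch(\cc^{n-2})$: by definition $B^{\fl}$ is the image under the flat-section construction of an element of $\so(\cc^{n+2})$ commuting with $B_{-m-1,m}$. The key point is that $B_{-m-1,m}^{\fl,\eta}$ acts (on the flat section) as $\p_{y_m}$ — this should be read off from the list of root operators in Section~\ref{subsec-conf}, where generators of translations $B_{-m-1,j}^{\fl}=\p_{y_j}$; specializing $j=m$ gives $B_{-m-1,m}^{\fl}=\p_{y_m}$, and it carries no $\eta$-term so it equals $\p_{y_m}-0$, whereas the relevant combination for the range condition is $\p_{y_m}-1$. Here one must be slightly careful: the range of $\theta$ is $\ker(\p_{y_m}-1)$, not $\ker\p_{y_m}$, so I would instead argue that $B^{\fl,\eta}$ commutes with $\p_{y_m}$ for $B\in\sch(\cc^{n-2})$ — this follows because the homomorphism $B\mapsto B^{\fl,\eta}$ intertwines brackets and $B$ commutes with $B_{-m-1,m}$, whose image is $\p_{y_m}$ — and any operator commuting with $\p_{y_m}$ preserves $\ker(\p_{y_m}-1)=\operatorname{ran}\theta$. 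For $\alpha\in\Sch(\cc^{n-2})$ the same reasoning applies at the group level: $\alpha^{\fl,\eta}$ commutes with the flow generated by $\p_{y_m}$ (equivalently, with multiplication operators built from $\e^{y_m}$ in the appropriate sense), because $\alpha$ commutes with $B_{-m-1,m}$; hence $\alpha^{\fl,\eta}$ maps $\operatorname{ran}\theta$ to itself.

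\textbf{Main obstacle.} The delicate step is the bookkeeping around the degree-$\eta$ shift and the local-representation subtleties: the flat-section maps $B^{\fl,\eta}$, $\alpha^{\fl,\eta}$ are only defined up to homogeneity conventions and, for group elements, only locally (Proposition~\ref{coc2}), so "preserving the range of $\theta$" must be interpreted on the appropriate open subset of $\cc^n$, and one must confirm that the commutation relation $[B,B_{-m-1,m}]=0$ in $\so(\cc^{n+2})$ really does pass through the representation $B\mapsto B^{\fl,\eta}$ to give $[B^{\fl,\eta},\p_{y_m}]=0$ including the cocycle/$M$-terms. This is essentially routine given Propositions~\ref{coc1}, \ref{coc2}, \ref{coc7} and the explicit formula $B_{-m-1,m}^{\fl}=\p_{y_m}$, but it is where all the care goes. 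The Laplacian part is immediate from \eqref{conta}.
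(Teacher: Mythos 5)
Your proposal is correct and follows essentially the same route as the paper: the key fact in both is $B_{-m-1,m}^{\fl,\eta}=\partial_{y_m}$, from which commutation gives the statement for $\sch(\cc^{n-2})$ and (at the group level) for $\Sch(\cc^{n-2})$, while the Laplacian case is exactly identity~\eqref{conta}. The only difference is cosmetic: you characterize the range of $\theta$ as $\ker(\partial_{y_m}-1)$ and invoke commutation with $\partial_{y_m}$ (resp.\ with its translation flow), whereas the paper writes the explicit normal forms $B^{\fl,\eta}=C+D\partial_{y_m}$ and $\alpha^{\fl,\eta}f(\dots,y_m)=\beta f(\dots,y_m+d)$ with $y_m$-independent data and checks invariance directly.
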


\begin{proof} Note that
\begin{gather*}B_{-m-1,m}^{\fl ,\eta}=\partial_{y_m}.\end{gather*}

Let $B\in \sch(\cc^{n-2})$. Then $[B^{\fl ,\eta},\partial_{y_m}]=0$. Therefore,
\begin{gather*}B^{\fl ,\eta}=C+D\partial_{y_m},\end{gather*}
where $C\in\hol(\cc^{n-2}\oplus\cc)$ and $D\in\cA(\cc^{n-2}\oplus\cc)$ (they do not involve the variable~$y_m$). Therefore, $B$ preserves the range of~$\theta$.

Likewise, if ${\alpha}\in \Sch(\cc^{n-2})$, then we have
\begin{gather*}{\alpha}^{\fl ,\eta}f(\dots,y_{-m},y_m)=\beta f (\dots,y_{-m},y_m+d(\dots,y_{-m})),\end{gather*}
where $\beta\in\cA^\times\rtimes\Hol(\cc^n\oplus\cc)$, $d\in\cA(\cc^{n-2}\oplus\cc)$ (they do not involve the variable~$y_m$). Therefore, ${\alpha}^{\fl ,\eta}$ preserves the range of $\theta$.

For $\Delta_{\cc^n}$ the statement is contained in the formula~(\ref{conta}).
\end{proof}

\begin{Theorem}\label{dod} \quad
\begin{enumerate}\itemsep=0pt
\item[$(1)$] For any $\eta$,
\begin{gather*}
\sch\big(\cc^{n-2}\big)\ni B \mapsto B^{\sch,\eta}\in {\cA}\rrtimes\hol\big(\cc^{n-2}\oplus\cc\big),\\
\Sch\big(\cc^{n-2}\big)\ni {\alpha} \mapsto {\alpha}^{\sch,\eta}\underset{\loc}{\in} \cA^\times\rrtimes\Hol\big(\cc^{n-2}\oplus\cc\big)
\end{gather*}
is a representation/local representation.
\item[$(2)$] We have a generalized symmetry
\begin{alignat*}{3}
& B^{\sch,\frac{-2-n}{2}}\cL_{\cc^{n-2}} =\cL_{\cc^{n-2}} B^{\sch,\frac{2-n}{2}} ,\qquad&& B\in \sch\big(\cc^{n-2}\big),&\\
& {\alpha}^{\sch,\frac{-2-n}{2}}\cL_{\cc^{n-2}} =\cL_{\cc^{n-2}} {\alpha}^{\sch,\frac{2-n}{2}} ,\qquad && {\alpha}\in \Sch\big(\cc^{n-2}\big).&
\end{alignat*}
\end{enumerate}
\end{Theorem}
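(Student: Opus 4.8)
The plan is to deduce Theorem~\ref{dod} by pure transport of structure from the facts already established for $\Delta_{\cc^n}$ on the flat section, using the intertwiners $\theta$ and $\zeta$ together with the Lemma just proved. Everything hinges on the single observation that $\zeta$ is a genuine left inverse of $\theta$ \emph{and} that $\theta\circ\zeta$ acts as the identity on the range of $\theta$; the Lemma guarantees that $\sch(\cc^{n-2})$, $\Sch(\cc^{n-2})$ and $\Delta_{\cc^n}$ all map that range into itself, so $\zeta(\cdot)\theta$ behaves like a conjugation by an isomorphism when restricted to the relevant subspaces.

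For part $(1)$, I would first check that $B\mapsto B^{\sch,\eta}=\zeta B^{\fl,\eta}\theta$ is a Lie algebra homomorphism. Since $B\in\sch(\cc^{n-2})$ preserves $\mathrm{Ran}\,\theta$, we have $\theta\zeta B^{\fl,\eta}\theta=B^{\fl,\eta}\theta$, whence
\begin{gather*}
B_1^{\sch,\eta}B_2^{\sch,\eta}=\zeta B_1^{\fl,\eta}\theta\zeta B_2^{\fl,\eta}\theta=\zeta B_1^{\fl,\eta}B_2^{\fl,\eta}\theta,
\end{gather*}
so the bracket relation $[B_1^{\sch,\eta},B_2^{\sch,\eta}]=[B_1,B_2]^{\sch,\eta}$ follows from the corresponding relation for $B\mapsto B^{\fl,\eta}$, which is a representation by~(\ref{pos1c}). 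One must also note that $B^{\sch,\eta}$ genuinely lands in $\cA\rtimes\hol(\cc^{n-2}\oplus\cc)$: this is read off from the structural description $B^{\fl,\eta}=C+D\,\partial_{y_m}$ obtained in the proof of the Lemma, because applying $\zeta\,(\cdot)\,\theta$ to $C+D\partial_{y_m}$ produces $C+D$ as a first-order operator in the variables $(y,t)$, the $\partial_{y_m}$ having been turned into multiplication by $1$ via $\e^{y_m}$. For the group statement the argument is the same, with $\alpha^{\fl,\eta}f(\dots,y_{-m},y_m)=\beta f(\dots,y_{-m},y_m+d(\dots,y_{-m}))$ showing that $\alpha^{\sch,\eta}=\zeta\alpha^{\fl,\eta}\theta$ is of the form $m\,\alpha^\diamond$ with $m\in\cA^\times$ and $\alpha^\diamond\in\Hol$; the cocycle/local-representation identities descend from~(\ref{pos2c}) by the same composition bookkeeping, keeping track of the open sets on which the local representation is defined (the domains shrink exactly as in Section~\ref{Local cocycles}).

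For part $(2)$, the generalized symmetry, I would start from~(\ref{syme1})--(\ref{syme2}), i.e.\ $B^{\fl,\frac{-2-n}{2}}\Delta_{\cc^n}=\Delta_{\cc^n}B^{\fl,\frac{2-n}{2}}$ for $B\in\so(\cc^{n+2})$, specialize to $B\in\sch(\cc^{n-2})$, and sandwich with $\zeta$ on the left and $\theta$ on the right. Using $\cL_{\cc^{n-2}}=\zeta\Delta_{\cc^n}\theta$ from~(\ref{conta}) together with $\theta\zeta=\id$ on $\mathrm{Ran}\,\theta$ (applied after $\Delta_{\cc^n}B^{\fl,\frac{2-n}{2}}\theta$, which lands in that range by the Lemma), one gets
\begin{gather*}
\zeta B^{\fl,\frac{-2-n}{2}}\Delta_{\cc^n}\theta=\zeta B^{\fl,\frac{-2-n}{2}}\theta\zeta\Delta_{\cc^n}\theta=B^{\sch,\frac{-2-n}{2}}\cL_{\cc^{n-2}},
\end{gather*}
and symmetrically $\zeta\Delta_{\cc^n}B^{\fl,\frac{2-n}{2}}\theta=\zeta\Delta_{\cc^n}\theta\zeta B^{\fl,\frac{2-n}{2}}\theta=\cL_{\cc^{n-2}}B^{\sch,\frac{2-n}{2}}$, which is the claimed identity. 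The group case is identical with $\alpha$ in place of $B$, invoking~(\ref{syme2}).

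The only real subtlety, and the step I expect to require the most care, is the insertion of $\theta\zeta=\id$ in the middle of these composites: this is legitimate \emph{only} because the Lemma tells us that the operator to the right of the insertion point already maps into $\mathrm{Ran}\,\theta$ (for $\Delta_{\cc^n}$) or preserves it (for $B^{\fl,\eta}$, $\alpha^{\fl,\eta}$), so that one must be scrupulous about the order in which the factors are grouped and about which operators are being applied to which function spaces. A secondary bookkeeping point is that in the group case all identities are local, valid on the open dense subsets cut out by the relevant non-vanishing conditions, so the domains $\Omega_0^\alpha$ of Section~\ref{Local cocycles} must be tracked through the composition; this is routine but needs to be stated. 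Once these points are handled, both parts are immediate consequences of the material of Sections~\ref{Conformal invariance} and~\ref{subsec-conf} together with the Lemma.
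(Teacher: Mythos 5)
Your proposal is correct and follows essentially the same route as the paper: both reduce everything to the flat-section identities \eqref{pos1c}, \eqref{pos2c}, \eqref{syme1}, \eqref{syme2} and \eqref{conta}, and both rest on the single key step of inserting $\theta\circ\zeta$ inside the composites, justified by the preceding Lemma that $\sch\big(\cc^{n-2}\big)$, $\Sch\big(\cc^{n-2}\big)$ and $\Delta_{\cc^n}$ preserve the range of~$\theta$. Your extra remarks (that $B^{\sch,\eta}$ is genuinely first order via $B^{\fl,\eta}=C+D\,\partial_{y_m}$, and the local-domain bookkeeping for the group case) are details the paper leaves implicit, but they do not change the argument.
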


\begin{proof} Let us f\/irst prove (1). Obviously, for $B_1,B_2\in \so(\cc^{n+2})$
\begin{gather*}\zeta \big[B_1^{\fl ,\eta}B_2^{\fl ,\eta}\big]\theta=\zeta[B_1,B_2]^{\fl,\eta}\theta.
\end{gather*}
If $B_1,B_2\in \sch(\cc^{n-2})$, we can insert $\theta\circ\zeta$ in the middle of the commutator. Hence
\begin{gather*}\big[B_1^{\sch,\eta},B_2^{\sch,\eta}\big]=[B_1,B_2]^{\sch,\eta}.\end{gather*}

Now, for any $\alpha_1,\alpha_2\in\mathrm{O}(\cc^{n+2})$
\begin{gather*}\zeta{\alpha}_1^{\fl,\eta}{\alpha}_2^{\fl,\eta}\theta =\zeta({\alpha}_1{\alpha}_2)^{\fl,\eta}\theta.\end{gather*}
For ${\alpha}_1,{\alpha}_2\in \Sch(\cc^{n-2})$ we can insert $\theta\circ\zeta$ in the middle of the composition, obtaining
\begin{gather*}{\alpha}_1^{\sch,\eta}{\alpha}_2^{\sch,\eta}=({\alpha}_1{\alpha}_2)^{\sch,\eta}.\end{gather*}

To prove (2) we multiply (\ref{syme1}) and (\ref{syme2}) by $\zeta$ from the left and $\theta$ from the right:
\begin{alignat*}{3}
& \zeta B^{\fl ,\frac{-2-n}{2}} \Delta_{\cc^n}\theta =\zeta\Delta_{\cc^n}B^{\fl ,\frac{2-n}{2}}\theta,\qquad &&  B\in \so\big(\cc^{n+2}\big),& \\ 
& \zeta {\alpha}^{\fl ,\frac{-2-n}{2}} \Delta_{\cc^n}\theta= \zeta\Delta_{\cc^n}{\alpha}^{\fl ,\frac{2-n}{2}}\theta,\qquad && {\alpha}\in \mathrm{O} \big(\cc^{n+2}\big).& 
\end{alignat*}
For $B\in\sch(\cc^{n-2})$ and $\alpha\in\Sch(\cc^{n-2})$ we can insert $\theta\circ\zeta$, which yields~(2).
 \end{proof}

\subsection{Hermite operator}

Consider again the space $\cc^{n-2}\oplus\cc$. This time its generic coordinates will be denoted $(w,s)$. We assume that the space $\cc^{n-2}$ is equipped with a~scalar product. The following operator can be called the {\em $(n-2)$-dimensional Hermite operator}:
\begin{gather*}\cH_{\cc^{n-2}}:=\Delta_{\cc^{n-2}}-D_{\cc^{n-2}}+s\p_s.\end{gather*}

The heat operator is closely related to the Hermite operator. Indeed, let us change the coordinates from $(y,t)\in\cc^{n-2}\oplus\cc^{\times}$ to $(w,s)\in \cc^{n-2}\oplus\cc^{\times}$ by
\begin{gather*}
w=t^{-\frac{1}{2}}y,\qquad s=t^{\frac{1}{2}},\end{gather*}
with the inverse transformation
\begin{gather*}y=ws,\qquad t=s^{2}.\end{gather*}
Under this transformation the heat operator $\cL_{\cc^{n-2}}$ becomes $\frac{1}{s^2}\cH_{\cc^{n-2}}$.

In Section~\ref{sec-her} we will use this change of coordinates to obtain the (1-dimensional) Hermite operator. The construction is, however, interesting in higher dimensions as well, therefore we mention it here.

Strictly speaking, the above coordinate change does not work globally: in particular, we need to assume $s\neq0$, $t\neq0$, besides $s$ doubly covers $t$. We usually are not absolutely precise about specifying the domains of coordinate changes~-- if needed, the reader can easily f\/ill in such details.

\subsection{Schr\"odinger symmetries in coordinates}

In this subsection we sum up information about Schr\"odinger symmetries on 4 levels described in the previous subsections. Note that the last two levels dif\/fer only by a change of coordinates. Therefore, the operators on these two levels are denoted by the same symbols, with the same superscript~${}^\sch$.

We start with generic names of the variables and the corresponding squares:
\begin{alignat*}{3}
&z\in\cc^{n+2},\qquad && \langle z|z\rangle_{\cc^{n+2}} =\sum_{j\in I_{n+2}}z_{-j}z_j,&\\
&y\in\cc^n,\qquad&&\langle y|y\rangle_{\cc^n}=\sum_{j\in I_{n}}y_{-j}y_j,&\\
&(y,t)\in\cc^{n-2}\oplus\cc,\qquad&&\langle y|y\rangle_{\cc^{n-2}}=\sum_{j\in I_{n-2}}y_{-j}y_j,&\\
&(w,s)\in\cc^{n-2}\oplus\cc,\qquad&& \langle w|w\rangle_{\cc^{n-2}}=\sum_{j\in I_{n-2}}w_{-j}w_j.&
\end{alignat*}

{\bf Cartan algebra of $\sch(\cc^{n-2})$.} Central element:
\begin{gather*}
B_{-m-1,m}=z_{m+1}\p_{z_{m}}-z_{-m}\p_{z_{-m-1}},\\
B_{-m-1,m}^{{\fl}}=\p_{y_{m}},\qquad B_{-m-1,m}^{\sch}=1,\qquad B_{-m-1,m}^{\sch}=1.
\end{gather*}
Cartan algebra of $\so(\cc^{n-2})$, $j=1,\dots,m-1$:
\begin{alignat*}{3}
& N_j=z_{-j}\p_{z_{-j}}-z_j\p_{z_j},\qquad && N_j^\fl=y_{-j}\p_{y_{-j}}-y_j\p_{y_j},& \\
& N_j^{\sch}=y_{-j}\p_{y_{-j}}-y_j\p_{y_j},\qquad && N_j^{\sch}=w_{-j}\p_{w_{-j}}-w_j\p_{w_j}.&
\end{alignat*}
Generator of scaling:
\begin{gather*}
N_{m,m+1}=z_{-m}\p_{z_{-m}}-z_{m}\p_{z_{m}}+ z_{-m-1}\p_{z_{-m-1}}-z_{m+1}\p_{z_{m+1}},\\
N_{m,m+1}^{\fl,\eta}=\sum\limits_{j\in I_{n-2}}y_j\p_{y_j}+2y_{-m}\p_{y_{-m}}-\eta,\\
N_{m,m+1}^{\sch,\eta}=\sum\limits_{j\in I_{n-2}}y_j\p_{y_j}+2t\p_t-\eta,\qquad
N_{m,m+1}^{\sch,\eta}=s\p_s-\eta.
\end{gather*}

{\bf Root operators of $\sch(\cc^{n-2})$.} Roots of $\so(\cc^{n-2})$, $|i|<|j|$, $i,j\in I_{n-2}$:
\begin{alignat*}{3}
& B_{i,j}=z_{-i}\p_{z_j}-z_{-j}\p_{z_i},\qquad && B_{i,j}^{\fl}=y_{-i}\p_{y_j}-y_{-j}\p_{y_i},&\\
& B_{i,j}^{\sch}=y_{-i}\p_{y_j}-y_{-j}\p_{y_i},\qquad B_{i,j}^{\sch}=w_{-i}\p_{w_j}-w_{-j}\p_{w_i}.&
\end{alignat*}
Space translations, $ j\in I_{n-2}$:
\begin{gather*}
B_{-m-1,j}=z_{m+1}\p_{z_j}-z_{-j}\p_{z_{-m-1}},\\
B_{-m-1,j}^{\fl}=\p_{y_j},\qquad B_{-m-1,j}^{\sch}=\p_{y_j},\qquad B_{-m-1,j}^{\sch}=\frac{1}{s}\p_{w_j}.
\end{gather*}
Time translation:
\begin{gather*}
B_{-m-1,-m}=z_{m+1}\p_{z_{-m}}-z_{m}\p_{z_{-m-1}},\qquad B_{-m-1,-m}^{\fl}=\p_{y_{-m}},\qquad B_{-m-1,-m}^{\sch}=\p_t,\\
B_{-m-1,-m}^{\sch}=\frac{1}{2s^2}\left(-\sum\limits_{j\in I_{n-2}}w_j\p_{w_j}+s\p_s\right).
\end{gather*}
Additional roots, $j\in I_{n-2}$:
\begin{gather*}
B_{m,j} =z_{-m}\p_{z_j}-z_{-j}\p_{z_{m}},\qquad
B_{m,j}^{\fl} =y_{-m}\p_{y_j}-y_{-j}\p_{y_m},\qquad B_{m,j}^{\sch} =t\p_{y_j}-y_{-j},\\
B_{m,j}^{\sch} =s(\p_{w_j}-w_{-j}),\qquad B_{m+1,m}=z_{-m-1}\p_{z_{m}}-z_{-m}\p_{z_{m+1}},\\
B_{m+1,m}^{\fl,\eta}=y_{-m}\left(\sum\limits_{j\in I_{n-2}}y_j\p_{y_j}+y_{-m}\p_{y_{-m}}-\eta \right)-\frac{1}{2} \sum_{j\in I_{n-2}}y_{-j}y_j\p_{y_m},\\
B_{m+1,m}^{\sch,\eta}=t\left(\sum\limits_{j\in I_{n-2}}y_j\p_{y_j}+y_{-m}\p_{y_{-m}}-\eta \right)-\frac{1}{2} \sum_{j\in I_{n-2}}y_{-j}y_j,\\
B_{m+1,m}^{\sch,\eta}=\frac{s^2}{2}\left(s\p_s-2\eta+\sum\limits_{j\in I_{n-2}}w_j\p_{w_j}-\sum\limits_{j\in I_{n-2}}w_j{w_{-j}}\right).
\end{gather*}

{\bf Weyl symmetries.} We will write $K$ for a function on $\cc^{n+2}$, $f$ for a function on $\cc^n$, $h$ for a function on $\cc^{n-2}\oplus \cc$ in both coordinates $\big(\dots,y_{m-1},t\big)$ and $(\dots,w_{m-1},s)$.

Ref\/lection:
\begin{gather*}
\tau_0 K(z_0,\dots, z_{-m},z_m,z_{-m-1},z_{m+1})= K(-z_0,\dots,z_{-m},z_m,z_{-m-1},z_{m+1}),\\
\tau_0^\fl f(y_0,\dots,y_{-m},y_m)= f(-y_0,\dots,y_{-m},y_m),\\
\tau_0^\sch h(y_0,\dots,t)=h(-y_0,\dots,t),\qquad \tau_0^\sch h(w_0,\dots,s)= h(-w_0,\dots,s).
\end{gather*}
Flips, $j=1,\dots,m-1$:
\begin{gather*}
\begin{split}
& \tau_j K(\dots,z_{-j},z_j,\dots,z_{-m},z_m,z_{-m-1},z_{m+1})= K(\dots,z_{j},z_{-j},\dots,z_{-m},z_m,z_{-m-1},z_{m+1}),\\
& \tau_j^\fl f(\dots,y_{-j},y_j,\dots,y_{-m},y_m)= f(\dots,y_{j},y_{-j},\dots,y_{-m},y_m),\\
& \tau_j^\sch h(\dots,y_{-j},y_j,\dots,t)=h(\dots,y_j,y_{-j},\dots,t),\\
& \tau_j^\sch h(\dots,w_{-j},w_j,\dots,s)= h(\dots,w_{j},w_{-j},\dots,s).
\end{split}
\end{gather*}
Permutations, $\sigma\in S_{m-1}$:
\begin{gather*}
\sigma K(\dots,z_{-m+1},z_{m-1},z_{-m},z_m,z_{-m-1},z_{m+1})\\
\qquad{} =K(\dots,z_{-\sigma_{m-1}},z_{\sigma_{m-1}},z_{-m},z_m,z_{-m-1},z_{m+1}),\\
\sigma^\fl f(\dots,y_{-m+1},y_{m-1},y_{-m},y_m) = f(\dots,y_{-\sigma_{m-1}},y_{\sigma_{m-1}},y_{-m},y_m),\\
\sigma^\sch h(\dots,y_{-m+1},y_{m-1},t)=h(\dots,y_{-\sigma_{m-1}},y_{\sigma_{m-1}},t),\\
\sigma^\sch h(\dots,w_{-m+1},w_{m-1},s)= h(\dots,w_{-\sigma_{m{-}1}},w_{\sigma_{m{-}1}},s).
\end{gather*}
Special transformation $\kappa$:
\begin{gather*}
\kappa K(\dots,z_{m-1},z_{-m},z_{m},z_{-m-1},z_{m+1}) = K(\dots,z_{m-1},-z_{m+1},-z_{-m-1},z_{m},z_{-m}),\\
\kappa^{\fl,\eta}f(\dots,y_{m-1},y_{-m},y_m) = y_{-m}^\eta f\left(\dots, \frac{y_{m-1}}{y_{-m}},-\frac{1}{y_{-m}}, \frac{1}{2y_{-m}}\sum_{j\in I_{n}}y_{-j}y_j\right),\\
\kappa^{\sch,\eta}h(\dots,y_{m-1},t) = t^\eta\exp\left(\frac{1}{2t}\sum_{j\in I_{n-2}}y_{-j}y_j\right) h\left(\dots,\frac{y_{m-1}}{t},-\frac{1}{t}\right), \\
 \kappa^{\sch,\eta} h(\dots,w_{m-1},s) =s^{2\eta}\exp\left(\frac{1}{2}\sum_{j\in I_{n-2}}w_{-j}w_j\right) h\left(\dots,-\ii w_{m-1},\frac{\ii}{s}\right).
\end{gather*}
Square of $\kappa$:
\begin{gather*}
\kappa^2 K(\dots,z_{m-1},z_{-m},z_{m},z_{-m-1},z_{m+1}) =K(\dots,z_{m-1},-z_{-m},-z_{m},-z_{-m-1},-z_{m+1}),\\
\big(\kappa^{\fl,\eta}\big)^2 f(\dots,y_{m-1},y_{-m},y_{m})=f(\dots,-y_{m-1},y_{-m},y_{m}),\\
\big(\kappa^{\sch,\eta}\big)^2h(\dots,y_{m-1},t)=(-1)^\eta h(\dots,-y_{m-1},t),\\
\big(\kappa^{\sch,\eta}\big)^2 h(\dots,w_{m-1},s)=(-1)^\eta h(\dots,-w_{m-1},s).
\end{gather*}

{\bf Laplacian / Laplacian / heat operator / Hermite operator:}
\begin{gather*}
\Delta_{\cc^{n+2}} =\sum\limits_{j\in I_{n+2}}\p_{z_{-j}}\p_{z_j},\qquad
\Delta_{\cc^{n}}=\sum\limits_{j\in I_{n}}\p_{y_{-j}}\p_{y_j},\qquad
\cL_{\cc^{n-2}}=\sum\limits_{j\in I_{n-2}}\p_{y_{-j}}\p_{y_j}+2\p_t,\\
\frac{1}{s^2}\cH_{\cc^{n-2}}^{} =\frac{1}{s^2}\left( \sum\limits_{j\in I_{n-2}}\p_{w_{-j}}\p_{w_j}-\sum\limits_{j\in I_{n-2}}w_j\p_{w_j}+s\p_s\right).
\end{gather*}

{\bf Computations.} Let us sketch how we computed the Schr\"odinger Lie algebra and group in coordinates. We set
 \begin{gather*}\Phi^{\sch,\eta}:=\Phi^{\fl ,\eta}\circ\theta, \qquad \Psi^{\sch,\eta}:=\zeta\circ\Psi^{\fl,\eta}.\end{gather*}
Then $\Phi^{\sch,\eta}$, maps $h\in\cA(\cc^{n-2}\oplus\cc)$ onto
\begin{gather*}
\big(\Phi^{\sch,\eta}h\big)(\dots,z_{-m+1},z_{m-1}, z_{-m},z_{m},z_{-m-1},z_{m+1})\\
\qquad :=z_{m+1}^\eta h\left(\dots,\frac{z_{-m+1}}{z_{m+1}}, \frac{z_{m-1}}{z_{m+1}},\frac{z_{-m}}{z_{m+1}}\right) \exp\left(\frac{z_{m}}{z_{m+1}}\right).
\end{gather*}
For $ K\in\cA(\cc^{n-2}\oplus\cc^2\oplus\cc^2)$
\begin{gather*}
\big(\Psi^{\sch,\eta} K\big)(\dots,y_{-m+1},y_{m-1},t):= K\big(\dots,y_{-m+1},y_{m-1},t,0,-\tfrac12\langle y|y\rangle,1\big).
\end{gather*}
Note that
\begin{gather*}
\Psi^{\sch,\eta}\Phi^{\sch,\eta} =\id,\\
\Psi^{\sch,\eta}\Delta_{\cc^{n+2}} \Phi^{\sch,\eta}=\cL_{\cc^{n-2}},\\
\Psi^{\sch,\eta}B\Phi^{\sch,\eta}=B^{\sch,\eta},\qquad B\in\sch\big(\cc^{n-2}\big),\\
\Psi^{\sch,\eta}{\alpha}\Phi^{\sch,\eta} ={\alpha}^{\sch,\eta},\qquad {\alpha}\in \Sch\big(\cc^{n-2}\big).
\end{gather*}

\section[$\sch(\cc^2)$ and the conf\/luent equation]{$\boldsymbol{\sch(\cc^2)}$ and the conf\/luent equation}
\label{sec-con}

In this section we derive the conf\/luent operator and its $\sch(\cc^2)$ symmetries. We will consider the following levels:
\begin{enumerate}\itemsep=0pt
\item[(1)] extended space $\cc^6$ and the Laplacian,
\item[(2)] reduction to $\cc^4$ and the Laplacian,
\item[(3)] reduction to $\cc^2\oplus\cc$ and the heat operator,
\item[(4)] special coordinates,
\item[(5)] sandwiching with a weight,
\item[(6)] depending on the choice of coordinates, separation of variables leads to the balanced or standard conf\/luent operator.
\end{enumerate}

A separate subsection will be devoted to factorizations of the conf\/luent operator.

\subsection[$\cc^6$]{$\boldsymbol{\cc^6}$}

We again consider $\cc^6$ with the coordinates (\ref{sq0}) and the product given by (\ref{sq1}). We describe various object related to the Lie algebra $\sch(\cc^2)$. Remember that $\sch(\cc^2)$ is a subalgebra of~$\so(\cc^6)$ and we keep the notation from~$\so(\cc^6)$.

{\bf Lie algebra $\sch(\cc^2)$.} Cartan algebra is spanned by
\begin{gather*}
N_1 =z_{-1}\p_{z_{-1}}-z_1\p_{z_1},\\
N_{2,3} =z_{-2}\p_{z_{-2}}-z_{2}\p_{z_{2}}+z_{-3}\p_{z_{-3}}-z_{3}\p_{z_{3}},\\
B_{-3,2} =z_3\p_{z_2}-z_{-2}\p_{z_{-3}}.
\end{gather*}
Root operators:
\begin{alignat*}{5}
 & B_{2,-1}=z_{-2}\p_{z_{-1}}-z_1\p_{z_2},\qquad &&
 B_{2,1}=z_{-2}\p_{z_1}-z_{-1}\p_{z_{2}},\qquad&&
 B_{-3,-1}=z_{3}\p_{z_{-1}}-z_1\p_{z_{-3}},&\\
 & B_{-3,1}=z_{3}\p_{z_{1}}-z_{-1}\p_{z_{-3}},\qquad&&
 B_{-3,-2}=z_{3}\p_{z_{-2}}-z_2\p_{z_{-3}},\qquad&&
 B_{3,2}=z_{-3}\p_{z_{2}}-z_{-2}\p_{z_{3}}.&
\end{alignat*}

{\bf Weyl symmetries.} Special symmetry of order $4$:
\begin{gather*}\kappa K(z_{-1},z_1,z_{-2},z_2,z_{-3},z_3)= K(z_{-1},z_1,-z_{3},-z_{-3},z_{2},z_{-2}).\end{gather*}
Flip:
\begin{gather*}\tau_1 K(z_{-1},z_1,z_{-2},z_2,z_{-3},z_3)=K(z_{1},z_{-1},z_{-2},z_2,z_{-3},z_3).\end{gather*}

We also have the Laplacian (\ref{sq2}) satisfying (\ref{sq31})--(\ref{sq34}).

\subsection[$\cc^4$]{$\boldsymbol{\cc^4}$}

We descend on the level of $\cc^4$, with the coordinates (\ref{sq4}) and the scalar product given by~(\ref{sq5}).

{\bf Lie algebra $\sch(\cc^2)$.} Cartan algebra:
\begin{gather*}
N_{2,3}^{\fl ,\eta}=y_{-1}\p_{y_-1}+y_{1}\p_{y_{1}}+2y_{-2}\p_{y_{-2}}- \eta,\qquad
N_1^\fl =y_{-1}\p_{y_-1}-y_{1}\p_{y_{1}},\qquad
B_{-3,2}^\fl=\p_{y_2}.
\end{gather*}
Root operators:
\begin{gather*}
 B_{2,-1}^{\fl}=y_{-2}\p_{y_-1}-y_1\p_{y_2},\qquad
 B_{2,1}^{\fl}=y_{-2}\p_{y_1}-y_1\p_{y_{2}},\qquad
 B_{-3,-1}^{\fl}=\p_{y_{-1}},\qquad
 B_{-3,1}^{\fl}=\p_{y_{1}},\\
 B_{-3,-2}^{\fl}=\p_{y_{-2}},\qquad
 B_{3,2}^{\fl,\eta}=-y_{-1}y_1\p_{y_2}+y_{-2}(y_{-1}\p_{y_{-1}}+y_{1}\p_{y_{1}}+y_{-2}\p_{y_{-2}}-\eta).
\end{gather*}

{\bf Weyl symmetries.} Special symmetry of order $4$:
\begin{gather*}
\kappa^{\fl,\eta}f(y_{-1},y_{1},y_{-2},y_2) =y_{-2}^\eta f\left(\frac{y_1}{y_{-2}},\frac{y_{-1}}{y_{-2}},-\frac1{y_{-2}},
\frac{2y_{-1}y_1+2y_{-2}y_2}{2y_{-2}}\right).\end{gather*}
Flip:
\begin{gather*} {\tau_1^\fl}f(y_{-1},y_{1},y_{-2},y_2)=f(y_1,y_{-1},y_{-2},y_2).\end{gather*}

\subsection[$\cc^2\oplus\cc$]{$\boldsymbol{\cc^2\oplus\cc}$}

We apply the ansatz involving the exponential $\e^{y_2}$. We rename $y_{-2}$ to $t$. The operator $B_{-3,2}^\sch$ becomes equal to $1$, therefore it can be ignored further on.

{\bf Lie algebra $\sch(\cc^2)$.} Cartan algebra:
\begin{gather*}
N_{2,3}^{\sch,\eta}=y_{-1}\p_{y_{-1}}+y_{1}\p_{y_{1}}+2t\p_{t}-\eta,\qquad
N_1^{\sch} =y_{-1}\p_{y_{-1}}-y_{1}\p_{y_{1}}.
\end{gather*}
Root operators:
\begin{gather*}
 B_{2,-1}^{\sch}=t\p_{y_{-1}}-y_1,\qquad B_{2,1}^{\sch}=t\p_{y_1}-y_{-1},\qquad
 B_{-3,-1}^{\sch}=\p_{y_{-1}},\qquad B_{-3,1}^{\sch}=\p_{y_{1}},\\
 B_{-3,-2}^{\sch}=\p_{t},\qquad B_{3,2}^{\sch,\eta}=-y_{-1}y_1+ t(y_{-1}\p_{y_{-1}}+y_{1}\p_{y_{1}}+t\p_{t}-\eta).
\end{gather*}

{\bf Weyl symmetries.} Special symmetry of order $4$:
\begin{gather*}
\kappa^{\sch,\eta}h(y_{-1},y_1,t)=t^\eta\exp\left(\frac{y_{-1}y_1}{t}\right)
h\left(\frac{y_{-1}}{t},\frac{y_1}{t},-\frac1{t}\right).\end{gather*}
Flip:
\begin{gather*}\tau_1^\sch h(y_{-1},y_{1},t)=h(y_1,y_{-1},t).\end{gather*}

{\bf Heat operator:}
\begin{gather*}\cL_{\cc^2}=2\p_{y_{-1}}\p_{y_1}+2\p_{t}.\end{gather*}

{\bf Generalized symmetries:}
\begin{gather}
N_1^{\sch}\cL_{\cc^2} =\cL_{\cc^2} N_1^{\sch},\label{porr1}\\
N_{2,3}^{\sch,-3}\cL_{\cc^2} =\cL_{\cc^2} N_{23}^{\sch,-1},\label{porr2}\\
B_{i,j}^{\sch,-3}\cL_{\cc^2} =\cL_{\cc^2} B_{i,j}^{\sch,-1},\qquad (i, j)=(2,\pm 1),\ (-3,\pm 1),\ \pm(3,2);\label{porr3}\\
\kappa^{\sch,-3}\cL_{\cc^2}=\cL_{\cc^2}\kappa^{\sch,-1},\label{porr4}\\
\tau_1^{\sch}\cL_{\cc^2}=\cL_{\cc^2}\tau_1^{\sch}.\label{porr5}
\end{gather}

\subsection[Coordinates $u$, $w$, $s$]{Coordinates $\boldsymbol{u}$, $\boldsymbol{w}$, $\boldsymbol{s}$}

Let us def\/ine new complex variables as
\begin{gather*}
w =\frac{y_{-1}y_1}{t},\qquad u =\sqrt{\frac{y_{-1}}{y_{1}}} ,\qquad s =\sqrt{t} .
\end{gather*}
Here are the reverse transformations:
\begin{gather*}
y_{-1} =u s \sqrt{w},\qquad y_{1} =\frac{1}{u}s\sqrt{w},\qquad t =s^2.
\end{gather*}

{\bf Lie algebra $\sch(\cc^2)$.} Cartan algebra:
\begin{gather*}
 N_{2,3}^{\sch,{\eta}} =s\dds-\eta,\qquad N_1^\sch =u\ddu.
\end{gather*}
 Root operators:
\begin{alignat*}{3}
&B_{-3,-1}^\sch =\frac{1}{us}\frac{1}{\sqrt{w}}\left(w\ddw+\frac{N_1^\sch}{2}\right),\qquad&&
B_{-3,1}^\sch =\frac{u}{s}\frac{1}{\sqrt{w}}\left(w\ddw-\frac{N_1^\sch}{2}\right),&\\
&B_{2,-1}^\sch =\frac{s}{u}\frac{1}{\sqrt{w}}\left(w\ddw+\frac{N_1^\sch}{2}-w\right),\qquad&&
B_{2,1}^\sch =su\frac{1}{\sqrt{w}}\left(w\ddw-\frac{N_1^\sch}{2}-w\right),&\\
&B_{-3,-2}^{\sch\eta} =\frac{1}{s^2}\left(-w\ddw+\frac{N_{23}^{\sch,{\eta}}}{2}+\frac{\eta}{2}\right),\qquad&&
B_{3,2}^{\sch,{\eta}} =s^2\left(w\ddw+\frac{N_{23}^{\sch,{\eta}}}{2}-\frac{\eta}{2}-w\right).&
\end{alignat*}

{\bf Weyl symmetries.} Special symmetry of order $4$:
\begin{gather*}
\kappa^{\sch,\eta}h(w,u,s) =s^{2\eta}\e^w h\left(-w,u,\frac 1s\right).\end{gather*}
Flip:
\begin{gather*}\tau_1^\sch h (w,u,s )=h\left(w,\frac1{u},s\right).\end{gather*}

{\bf Heat operator:}
\begin{gather*}
\cL_{\cc^2} =\frac{2}{s^2}\left(\ddw w\ddw-w\ddw-\frac{(u\ddu{})^2}{4 w}+\frac{1}{2}s\dds\right).
\end{gather*}

\subsection{Sandwiching with an exponential}

For any operator $C$ we def\/ine
\begin{gather*}\hat C:=\e^{-\frac{w}{2}} C\e^{\frac{w}{2}}.\end{gather*}

The \qt{hat} isomorphism will not change the Cartan operators:
\begin{gather*}
\hat N_{2,3}^{\sch,{\eta}} =s\dds-\eta,\qquad \hat N_1^\sch =u\ddu{}.
\end{gather*}
Root operators:{\samepage
\begin{alignat*}{3}
&\hat{B}_{-3,-1}^\sch
=\frac{1}{us}\frac{1}{\sqrt{w}}\left(w\ddw+\frac{\hat N_1^\sch}{2}+\frac{w}{2}\right),\qquad&&
\hat{B}_{-3,1}^\sch =\frac{u}{s}\frac{1}{\sqrt{w}}\left(w\ddw-\frac{\hat N_1^\sch}{2}+\frac{w}{2}\right),&\\
&\hat{B}_{2,-1}^\sch =\frac{s}{u}\frac{1}{\sqrt{w}}\left(w\ddw+\frac{\hat N_1^\sch}{2}-\frac{w}{2}\right),\qquad&&
\hat{B}_{2,1}^\sch =su\frac{1}{\sqrt{w}}\left(w\ddw-\frac{\hat N_1^\sch}{2}-\frac{w}{2}\right),&\\
&\hat{B}_{-3,-2}^{\sch,\eta} =\frac{1}{s^2}\left(-w\ddw+\frac{\hat N_{2,3}^{\sch,{\eta}}}{2}+\frac{\eta}{2}-\frac{w}{2}\right),\qquad&&
\hat{B}_{3,2}^{\sch,{\eta}} =s^2\left(w\ddw+\frac{\hat N_{2,3}^{\sch,{\eta}}}{2}-\frac{\eta}{2}-\frac{w}{2}\right).&
\end{alignat*}}

{\bf Weyl symmetries.} Special symmetry of order $4$:
\begin{gather*}
\hat\kappa^{\sch,\eta}h(w,u,s) =s^{2\eta} h\left(-w,u,\frac1s\right).\end{gather*}
Flip:
\begin{gather*}\hat\tau_1^\sch h (w,u,s )=h\left(w,\frac1{u},s\right).\end{gather*}

{\bf Heat operator.}
\begin{gather*}
\hat\cL_{\cc^2} =\ee^{-\tfrac{w}{2}} \cL_{\cc^2} \ee^{\tfrac{w}{2}}=\frac{2}{s^2}\left(\ddw w\ddw-\frac{w}{4}-\frac{\big(\hat
 N_1^\sch\big)^2}{4 w}+\frac{1}{2}\hat N_{2,3}^{\sch,-1}\right).
\end{gather*}

\subsection{Balanced conf\/luent operator}

We make an ansatz
\begin{gather} h(w,u,s)=u^\alpha s^{-\theta-1} F(w).\label{ans3}\end{gather}
Clearly,
\begin{gather*}
\hat N_1^\sch h =\alpha h,\qquad \hat N_{2,3}^{\sch,-1} h =-\theta h.
\end{gather*}
Therefore, on functions of this form, $\frac{s^2}{2}\cL_{\cc^2}$ coincides with the balanced conf\/luent opera\-tor~(\ref{bal3}). The generalized symmetries for the roots~(\ref{porr3}), for the special Weyl symmetry~(\ref{porr4}) and for the f\/lip~(\ref{porr5}) coincide with the transmutation relations, the discrete symmetry and the sign changes of $\alpha$, $\theta$ of the balanced conf\/luent operator, respectively; see Section~\ref{subs-3}.

\subsection{Standard conf\/luent operator}

Let us change slightly coordinates by replacing $u$ with
\begin{gather*}
\tilde{u}:=\frac{y_{-1}}{\sqrt{t}}=u\sqrt{w}.
\end{gather*}
The derivative $\partial_w$ is then replaced by
\begin{gather*}\partial_w + \frac{1}{ 2w}N_1.\end{gather*}

Let us make an ansatz
\begin{gather} h(w,\tilde u,s)=\tilde u^\alpha s^{-\theta-1} \tilde F(w).\label{ansatz}\end{gather}
Clearly,
\begin{gather*}
N_1^{\sch} h =\alpha h,\qquad N_{2,3}^{\sch,-1} h =-\theta h.
\end{gather*}
Then, on functions of the form (\ref{ansatz}), $\frac{s^2}{2}\cL_{\cc^2}$ coincides with the standard conf\/luent opera\-tor~(\ref{stan3}).

\subsection{Factorizations}

Let us note the commutation relation
\begin{gather*}
[B_{-3,2},B_{3,2}] = N_2+N_3=N_{2,3}.\end{gather*}
It shows that the triple $B_{-3,2}$, $B_{3,2}$ and $N_{2,3}$ def\/ines a~subalgebra isomorphic to $\so(\cc^3)$, which we will denote $\so_{23}(\cc^3)$. The Casimir operator for~$\so_{23}(\cc^3)$ is
\begin{gather*}
\cC_{23} =4 B_{3,2} B_{-3,-2}-N_{2,3}^2+2N_{2,3} =4 B_{-3,-2} B_{3,2}-N_{2,3}^2-2N_{2,3}.
\end{gather*}
By the same arguments as for (\ref{facto1b}) we obtain
\begin{gather}
-2y_{-1}y_1\cL_{\cc^2}=-1+\cC_{23}^{\sch,-1}+\big(N_1^{\sch,-1}\big)^2.\label{fac-1}
\end{gather}

Moreover, we have
\begin{gather}
 [B_{2,-1},B_{-3,1}] = [B_{2,1} ,B_{-3,-1}] = B_{-3,2}.\label{heis:1}\end{gather}
The commutation relations \eqref{heis:1} def\/ine two Heisenberg subalgebras
\begin{gather*}
\heis_+\big(\cc^2\big) \qquad\text{spanned by}\quad B_{2,-1}, B_{-3,1},\ B_{-3,2},\\
\heis_-\big(\cc^2\big) \qquad\text{spanned by}\quad B_{2,1},\ B_{-3,-1},\ B_{-3,2}.
\end{gather*}
Let us remark that $\heis_+(\cc^2)$ is the $\tau_1$-image of $\heis_-(\cc^2)$.

Let us def\/ine
\begin{gather*}
\cC_{+} =2 B_{2,1} B_{-3,-1}+N_{2,3}+N_1-B_{-3,2} =2 B_{-3,-1} B_{2,1}+N_{2,3}+N_1+B_{-3,2},\\
\cC_{-} =2 B_{2,-1} B_{-3,1}+N_{2,3}-N_1-B_{-3,2} =2 B_{-3,1} B_{2,-1}+N_{2,3}-N_1+B_{-3,2}.
\end{gather*}

$\cC_+$ and $\cC_-$ can be viewed as the Casimir operators for $\heis_+(\cc^2)$ and $\heis_-(\cc^2)$ respectively. Indeed, $\cC_+$, resp.~$\cC_-$
commute with all operators in $\heis_+(\cc^2)$, resp.~$\heis_-(\cc^2)$.

Let us now consider the operators on the level of $\cc^2\oplus\cc$. Direct calculation yields
\begin{gather*}
\cC_+^{\sch,\eta} =2t(\p_{y_{-1}}\p_{y_1}+\p_{t})-\eta-1,\qquad
\cC_-^{\sch,\eta} =2t(\p_{y_{-1}}\p_{y_1}+\p_{t})-\eta-1.
\end{gather*}
Therefore,
\begin{gather}
t\cL_{\cc^2}=\cC_{+}^{\sch,{-1}}=\cC_{-}^{\sch,{-1}}.
\label{fac-2}
\end{gather}

In the variables $w$, $u$, $s$ and after sandwiching with the exponential weight, (\ref{fac-1}) and (\ref{fac-2}) become
\begin{gather*}
-2ws^2\hat\cL_{\cc^2} =-1+\hat\cC_{23}^{\sch,-1}+\big(\hat N_1^{\sch,-1}\big)^2, \qquad 
s^2\hat\cL_{\cc^2} =\hat\cC_{+}^{\sch,{-1}}=\hat\cC_{-}^{\sch,{-1}}.
\end{gather*}
We apply the ansatz (\ref{ans3}) and obtain all the factorizations of the balanced conf\/luent operator of Section~\ref{subs-3}.

\section[$\sch(\cc^1)$ and the Hermite equation]{$\boldsymbol{\sch(\cc^1)}$ and the Hermite equation}
\label{sec-her}

In this section we derive the Hermite operator and its $\sch(\cc^1)$ symmetries. We will consider the following levels:
\begin{enumerate}\itemsep=0pt
\item[(1)] extended space $\cc^5$ and the Laplacian,
\item[(2)] reduction to $\cc^3$ and the Laplacian,
\item[(3)] reduction to $\cc\oplus\cc$ and the heat operator,
\item[(4)] special coordinates,
\item[(5)] sandwiching with a weight,
\item[(6)] separation of variables in (5) leads to the balanced Hermite operator,
\item[(7)] separation of variables in (4) leads to the standard Hermite operator.
\end{enumerate}

\subsection[$\cc^5$]{$\boldsymbol{\cc^5}$}

We again consider $\cc^5$ with the coordinates~(\ref{geg1}) and the product given by the square~(\ref{geg5}). Remember that $\sch(\cc^1)$ is a subalgebra of $\so(\cc^5)$ and we keep the notation from~$\so(\cc^5)$.

{\bf Lie algebra $\sch(\cc^1)$.} The Cartan algebra is spanned by
\begin{gather*}
N_{2,3} =z_{-2}\p_{z_{-2}}-z_{2}\p_{z_{2}}+z_{-3}\p_{z_{-3}}-z_{3}\p_{z_{3}},\qquad
B_{-3,2} =z_3\p_{z_2}-z_{-2}\p_{z_{-3}}.
\end{gather*}
Root operators:
\begin{gather*}
 B_{2,0} =z_{-2}\p_{z_0}-z_0\p_{z_2},\qquad
 B_{-3,0} =z_{3}\p_{z_0}-z_0\p_{z_{-3}},\qquad
 B_{-3,-2} =z_{3}\p_{z_{-2}}-z_2\p_{z_{-3}},\\
 B_{3,2} =z_{-3}\p_{z_2}-z_2\p_{z_{-3}}.
\end{gather*}

{\bf Weyl symmetry:}
 \begin{gather*} \kappa K(z_0,z_{-2},z_2,z_{-3},z_3)=K(z_0,-z_{3},-z_{-3},z_{2},z_{-2}). \end{gather*}
It generates a group isomorphic to $\zz_4$.

\subsection[$\cc^3$]{$\boldsymbol{\cc^3}$}

We descend on the level of $\cc^3$, as described in Section~\ref{subsec-c3}. In particular, we use the coordina\-tes~(\ref{geg2}) with the scalar product given by (\ref{geg3}).

{\bf Lie algebra $\sch(\cc^1)$.} Cartan algebra:
\begin{gather*}
N_{2,3}^{\fl,\eta} =y_{0}\p_{y_0}+2y_{-2}\p_{y_{-2}}-\eta, \qquad B_{-3,2}^\fl=\p_{y_2}.
\end{gather*}
Root operators:
\begin{gather*}
 B_{2,0}^{\fl} =y_{-2}\p_{y_0}-y_0\p_{y_2},\qquad
 B_{-3,0}^{\fl} =\p_{y_0},\qquad
 B_{-3,-2}^{\fl} =\p_{y_{-2}},\\
 B_{3,2}^{\fl,\eta} =y_{-2}y_0\p_{y_0}+y_{-2}^2\p_{y_{-2}}-\frac12y_0^2\p_{y_2}-\eta y_2.
\end{gather*}

{\bf Weyl symmetry:}
\begin{gather*}
\kappa^{\fl,\eta}f(y_0,y_{-2},y_2)=y_2^\eta
f\left(\frac{y_0}{y_2},\frac{y_0^2+2y_{-2}y_2}{2y_2},-\frac1{y_2}\right).
\end{gather*}

\subsection[$\cc\oplus\cc$]{$\boldsymbol{\cc\oplus\cc}$}

We descend onto the level of $\cc\oplus\cc$, as described in Section~\ref{subsec-sch}. $B_{-3,2}$ becomes equal to~$1$, therefore it will be ignored further on. We rename $y_{-2}$ to $t$ and $y_0$ to $y$.

 {\bf Lie algebra $\sch(\cc^1)$.} Cartan algebra:
\begin{gather*}
 N_{2,3}^{\sch,\eta} =y\p_y+2t\p_t-\eta.
 \end{gather*}
Root operators:
\begin{gather*}
 B_{2,0}^{\sch} =t\p_y-y,\qquad B_{-3,0}^{\sch} =\p_y,\qquad B_{-3,-2}^{\sch} =\p_t,\qquad
 B_{3,2}^{\sch,\eta} =t(y\p_y+t\p_t-\eta)-y^2.
\end{gather*}

{\bf Weyl symmetry:}
\begin{gather*}
 \kappa^{\sch,\eta}h(y,t)= t^{\eta}\exp\left(\frac{y^2}{2t}\right) h\left(\frac{y}{t},-\frac{1}{t}\right).\end{gather*}

 {\bf Heat operator:}
\begin{gather*}\cL_{\cc}=\partial_y^2+2\partial_t.\end{gather*}

 {\bf Generalized symmetries:}
\begin{gather}
N_{2,3}^{\sch,-\frac52}\cL_{\cc} =\cL_{\cc} N_{2,3}^{\sch,-\frac12},\label{torr1}\\
B_{i,j}^{\sch,-\frac52}\cL_{\cc} =\cL_{\cc} B_{i,j}^{\sch,-\frac12},\qquad (i, j)=(2, 0),\; ({-3},0),\; \pm(3,2),\label{torr2}\\
 \kappa^{\sch,-\frac52}\cL_{\cc} =\cL_{\cc}\kappa^{\sch,-\frac12}.\label{torr3}
\end{gather}

 \subsection[Coordinates $w$, $s$]{Coordinates $\boldsymbol{w}$, $\boldsymbol{s}$}

Let us def\/ine new complex variables as
\begin{gather*}
w =\frac{y}{\sqrt{2 t}} ,\qquad s =\sqrt{t} .
\end{gather*}
 Reverse transformations are
 \begin{gather*}
y =\sqrt{2} s w ,\qquad t =s^2 .
\end{gather*}

{\bf Lie algebra $\sch(\cc^1)$.} Cartan operators:
\begin{gather*}
N_{2,3}^{\sch,{\eta}} =s\dds-\eta.
\end{gather*}
Root operators:
\begin{gather*}
B_{-3,0}^\sch =\frac{1}{\sqrt{2} s} \ddw,\qquad
B_{2,0}^\sch =\frac{s}{\sqrt{2}} (\ddw-2w ),\\
B_{-3,-2}^\sch =\frac{1}{2 s^2} \big({-}w\ddw+N_{2,3}^{\sph,{\eta}}+\eta\big),\qquad
B_{3,2}^{\sch,{\eta}} =\frac{s^2}{2} \big(w\ddw+N_{2,3}^{\sph,{\eta}}-\eta-2w^2\big).
\end{gather*}
Above $B_{-3,-2}^\sch$ does not depend on $\eta$ even if at f\/irst glance it might seem so.

{\bf Weyl symmetry:}
\begin{gather*}\kappa^{\sch,{\eta}} h(w,s)=s^{2\eta}\e^{w^2}h\big(\ii w,-\tfrac{\ii}{s}\big).\end{gather*}

{\bf Heat operator:}
\begin{gather*}
\cL_{\cc}=\frac{1}{2s^2}\big(\ddw^2-2w\ddw+2s\dds\big).\end{gather*}

\subsection{Sandwiching with a Gaussian}

For any operator $C$ we will write
 \begin{gather}\hat C:=\ee^{-\tfrac{w^2}{2}} C\ee^{\tfrac{w^2}{2}}.\label{sandwich}
\end{gather}

{\bf Lie algebra $\sch(\cc^1)$.} Cartan algebra:
\begin{gather*}
\hat N_{2,3}^{\sch,{\eta}} =s\dds-\eta.
\end{gather*}
Root operators:
\begin{gather*}
\begin{split}
& \hat{B}_{-3,0}^\sch =\frac{1}{\sqrt{2} s} (\ddw+w),\qquad
\hat{B}_{2,0}^\sch =\frac{s}{\sqrt{2}} (\ddw-w ),\\
& \hat{B}_{-3,-2}^\sch =\frac{1}{2 s^2}\big({-}w\ddw+N_{23}^{\sph,{\eta}}+\eta-w^2\big),\qquad
\hat{B}_{3,2}^{\sch,{\eta}} =\frac{s^2}{2}\big(w\ddw+N_{23}^{\sph,{\eta}}-\eta-w^2\big).
\end{split}
\end{gather*}

{\bf Weyl symmetry:}
\begin{gather*}\kappa^{\sch,{\eta}} h(w,s)=s^{2\eta}h\big(\ii w,-\tfrac{\ii}{s}\big).\end{gather*}

{\bf Heat operator:}
\begin{gather*}
\hat\cL_{\cc} =\frac{1}{2s^2}\big(\ddw^2-w^2-2N_{2,3}^{\sch,-\frac12}\big).
\end{gather*}

\subsection{Balanced Hermite operator}

We make an ansatz
\begin{gather} h(w,s)= s^{\lambda-\frac12} F(w).\label{ansatz4}\end{gather}
Clearly,
\begin{gather*} \hat N_{2,3}^{\sch,-\frac12} h=\lambda h. \end{gather*}
Therefore, on functions of this form, $2s^2\hat\cL_{\cc^1}$ coincides with the balanced Hermite opera\-tor~(\ref{bal4}). The generalized symmetries for the roots (\ref{torr2}) and for the Weyl symmetry (\ref{torr3}) coincide with the transmutation relations and the discrete symmetry of the balanced Hermite operator, respectively; see Section~\ref{subs-4}.

\subsection{Standard Hermite operator}

Alternatively, we can use the ansatz
\begin{gather} h(w,s)=s^{\lambda-\frac12} F(w).\label{ansatz1}\end{gather}
without the sandwiching (\ref{sandwich}). Clearly,
\begin{gather*}
N_{2,3}^{\sch,-\frac12} h=\lambda h.
\end{gather*}
Then, on functions of the form (\ref{ansatz1}), $2s^2\cL_{\cc^2}$ coincides with the standard Hermite opera\-tor~(\ref{stan4}).

\subsection{Factorizations}

In $\sch(\cc^1)$ we have a distinguished subalgebra isomorphic to $\so(\cc^3)$
\begin{gather*}
\so_{23}\big(\cc^3\big)\qquad \text{spanned by}\quad B_{-3,-2},\ B_{3,2},\ N_{2,3},
\end{gather*}
and a distinguished Heisenberg algebra
\begin{gather*}
\heis_0\big(\cc^2\big)\qquad \text{spanned by}\quad B_{2,0},\ B_{-3,0},\ B_{-3,2}.
\end{gather*}

We set
\begin{gather*}
\begin{split}
& \cC_{23} =4 B_{3,2} B_{-3,-2}-N_{2,3}^2+2N_{2,3} =4 B_{-3,-2} B_{3,2}-N_{2,3}^2-2N_{2,3},\\
& \cC_0 =2 B_{2,0} B_{-3,0}+2N_{2,3}-B_{-3,2} =2 B_{-3,0} B_{2,0}+2N_{2,3}+B_{-3,2}.
\end{split}
\end{gather*}
$\cC_{23}$ is the Casimir operator of $\so_{23}(\cc^3)$. $\cC_0$ can be treated as the Casimir operator of $\heis_0(\cc^2)$. We have the identities
\begin{gather*}
-y_0^2\cL_{\cc} =\cC_{23}^{\sch,{-\tfrac{1}{2}}}-\frac{3}{4},\qquad
2t\cL_{\cc} =\cC_{0}^{\sch,{-\tfrac{1}{2}}}.
\end{gather*}
In the coordinates $w$, $s$ we can rewrite this as
\begin{gather*}
-w^22s^2\hat\cL_{\cc} =\hat\cC_{23}^{\sch,{-\tfrac{1}{2}}}-\frac{3}{4},
\qquad
2s^2\hat\cL_{\cc} =\hat\cC_{0}^{\sch,{-\tfrac{1}{2}}}.
\end{gather*}
We apply the ansatz (\ref{ansatz4}) and obtain all the factorizations of the balanced Hermite operator of Section~\ref{subs-4}.

\section[$\cc^2\rtimes\so(\cc^2)$ and the ${}_0F_1$ equation]{$\boldsymbol{\cc^2\rtimes\so(\cc^2)}$ and the $\boldsymbol{{}_0F_1}$ equation}\label{sec-bess}

In this section we derive the ${}_0F_1$ operator and its $\cc^2\rtimes\so(\cc^2)$ symmetries from the symmetries of the Helmholtz equation in 2 dimensions. One can argue that this is the simplest case among the f\/ive cases considered in this paper, because only {\em true} (that is, not generalized) symmetries are used here. This derivation is also extensively discussed in the literature. (Strictly speaking, in the literature usually the Bessel and modif\/ied Bessel equations are considered. They are, however, equivalent to the ${}_0F_1$ equation, as described, e.g., in~\cite{De}.) We included this section for the sake of completeness.

Perhaps, it would be suf\/f\/icient to discuss only two levels of derivation~-- the 2-dimensional Helmholtz equation and the ${}_0F_1$ equation. However, to make this section easier to compare with the previous ones, we will start from a higher level.

Thus, we will consider the following levels:
\begin{enumerate}\itemsep=0pt
\item[(1)] space $\cc^5$ and the Laplacian $\Delta_{\cc^5}$,
\item[(2)] space $\cc^3$ and the Laplacian $\Delta_{\cc^3}$,
\item[(3)] space $\cc^2$ and the Helmholtz operator $\Delta_{\cc^2}-1$,
\item[(4)] choosing appropriate coordinates we obtain the ${}_0F_1$ operator.
\end{enumerate}

\subsection[Space $\cc^5$]{Space $\boldsymbol{\cc^5}$}

As in Section~\ref{sub-geg1}, we consider $\cc^5$ with the coordinates
\begin{gather*} z_0,z_{-2},z_2,z_{-3},z_3
\end{gather*}
and the product given by
\begin{gather*} \langle z|z\rangle=z_0^2+2z_{-2}z_2+2z_{-3}z_3.
\end{gather*}

{\bf Lie algebra $\cc^2\rtimes\so(\cc^2)$ on $\cc^5$.} Cartan operator:
\begin{gather*}N_2=z_{-2}\partial_{z_{-2}}-z_{2}\partial_{z_{2}}.\end{gather*}
Root operators:
\begin{gather*}
B_{-3,-2} =z_{3}\p_{-2}-z_{2}\p_{-3},\qquad B_{-3,2} =z_{3}\p_{2}-z_{-2}\p_{-3}.\end{gather*}

{\bf Weyl symmetry.} Flip:
\begin{gather*}
\tau_2 K(z_0,z_{-2},z_2,z_{-3},z_3)=K(z_0,z_{2},z_{-2},z_{-3},z_3).
\end{gather*}

{\bf Laplacian:}
\begin{gather*}\Delta_{\cc^5}=\partial_{z_0}^2+2\partial_{z_{-2}}\partial_{z_{2}}+2\partial_{z_{-3}}\partial_{z_{3}}.
\end{gather*}

\subsection[Space $\cc^3$]{Space $\boldsymbol{\cc^3}$}\label{subsec-c3a}

As in Section~\ref{subsec-c3}, we consider $\cc^3$ with coordinates $(y_0,y_{-2},y_2)$ and the scalar product given by
\begin{gather*} \langle y|y\rangle=y_0^2+2y_{-2}y_2.
\end{gather*}

{\bf Lie algebra $\cc^2\rtimes\so(\cc^2)$.} Cartan operator:
\begin{gather*} N_2^{\fl}=y_{-2}\partial_{y_{-2}}-y_{2}\partial_{y_{2}}. \end{gather*}

{\bf Root operators:}
\begin{gather*}
B_{-3,-2}^{\fl} =\p_{y_{-2}},\qquad B_{-3,2}^{\fl} =\p_{y_2}.
\end{gather*}

{\bf Weyl symmetry.} Flip:
\begin{gather*}
\tau_2^{\fl} f(y_0,y_{-2},y_2)=f(y_0,y_{2},y_{-2}).
\end{gather*}

{\bf Reduced Laplacian:}
\begin{gather*}
\Delta_{\cc^5}^\fl=\Delta_{\cc^3}=\partial_{y_0}^2+2\p_{y_{-2}}\p_{y_2}.\end{gather*}

\subsection[Space $\cc^2$ and the Helmholtz equation]{Space $\boldsymbol{\cc^2}$ and the Helmholtz equation}

We make an asatz
\begin{gather*}f(y_0,y_-,y_+)=\e^{-y_0}h(y_-,y_+).\end{gather*}
In particular, the coordinates $y_{-2}$, $y_2$ are renamed to $y_-$, $y_+$. We also simplify the names of various operators in an obvious way.

{\bf Lie algebra $\cc^2\rtimes\so(\cc^2)$.} Cartan operator:
\begin{gather*} N=y_{-}\partial_{y_{-}}-y_{+}\partial_{y_+}. \end{gather*}
Root operators:
\begin{gather*}
B_- =\p_{y_-},\qquad
B_+ =\p_{y_+}.
\end{gather*}

{\bf Weyl symmetry.} Flip
\begin{gather*}
\tau f(y_-,y_+)=f(y_+,y_-).
\end{gather*}

{\bf Helmholtz operator:}
\begin{gather*}\cK_{\cc^2}:= -1+2\p_{y_-}\p_{y_+}.\end{gather*}

{\bf Symmetries:}
\begin{gather*}
 N\cK_{\cc^2} =\cK_{\cc^2}N, \qquad B_\pm\cK_{\cc^2} =\cK_{\cc^2}B_\pm, \qquad 
 \tau \cK_{\cc^2} =\cK_{\cc^2}\tau.
 \end{gather*}

\subsection[Balanced ${}_0F_1$ operator]{Balanced $\boldsymbol{{}_0F_1}$ operator}

We introduce the coordinates
\begin{gather}
w =\frac{y_-y_+}{2},\qquad u =\sqrt{\frac{y_-}{y_+}}.\label{coor3}
\end{gather}

{\bf Lie algebra $\cc^2\rtimes\so(\cc^2)$.} Cartan operator:
\begin{gather*}N=u\p_u.\end{gather*}
Root operators:
\begin{gather*}
B_+ =u\frac{1}{\sqrt{2w}}\left(w\ddw-\frac{N_1}{2}\right),\qquad B_- =\frac{1}{u} \frac{1}{\sqrt{2 w}}\left(w\ddw+\frac{N_1}{2}\right).
\end{gather*}

 {\bf Weyl symmetry.} Flip:
\begin{gather*}
\tau h(w,u)=h\left(w,\frac1u\right).
\end{gather*}

{\bf Helmholtz operator:}
\begin{gather*}
\cK_{\cc^2}=\ddw w \ddw-\frac{N^2}{4 w}-1.
\end{gather*}

Making an ansatz
\begin{gather*}h(w,u)=u^\alpha F(w),
\end{gather*}
we obtain the balanced ${}_0F_1$ operator. Symmetries for the root operators and the f\/lip coincide with the transmutation relation and the change of the sign of~$\alpha$ in the balanced ${}_0F_1$ operator, respectively; see Section~\ref{subs-5}.

\subsection[Standard ${}_0F_1$ operator]{Standard $\boldsymbol{{}_0F_1}$ operator}

Modify the coordinates (\ref{coor3}) by replacing $u$ with
\begin{gather*}\tilde u:=y_-=u\sqrt{2w}.\end{gather*}
We then have
\begin{gather*}N= \tilde u\p_{\tilde u},\qquad
\cK_{\cc^2}=w\p_w^2+(1+N)\p_w-1.\end{gather*}
Making an ansatz
\begin{gather*} h(w,u)=\tilde u^\alpha \tilde F(w),\end{gather*}
we obtain the standard ${}_0F_1$ operator.

\subsection{Factorizations}

The factorizations
\begin{gather*}
\cK_{\cc^2}=2B_-B_+-1 =2B_+B_--1
\end{gather*}
are completely obvious. They yield the factorizations of the ${}_0F_1$ operator.

\subsection*{Acknowledgements}

We thank Tom Koornwinder and anonymous referees for useful remarks. J.D.~gratefully acknowledges f\/inancial support of the National Science Center, Poland, under the grant UMO-2014/15/B/ST1/00126.

\addcontentsline{toc}{section}{References}
\LastPageEnding

\end{document}